\documentclass[journal,twoside,web]{ieeecolor}
\usepackage{generic}
\usepackage{textcomp}
\usepackage{graphicx}
\usepackage{amsmath}
\usepackage{optidef}
 
\usepackage{amsthm}
\usepackage{amssymb}
\usepackage{amsfonts}       
\usepackage{multicol}
\usepackage{cuted}
\usepackage{multirow}
\usepackage{caption, subcaption}
\newtheorem{theorem}{Theorem}
\newtheorem{lemma}[theorem]{Lemma}
\newtheorem{proposition}[theorem]{Proposition}

\newtheorem{problem}{Problem}
\newtheorem{assumption}{Assumption}

\newtheorem{corollary}{Corollary}[theorem]
\usepackage{xcolor}
\usepackage{cite}
\usepackage[colorlinks=true,allcolors=steelblue]{hyperref}
\usepackage{comment}
\usepackage{algorithm}
\usepackage{algorithmic}
\usepackage{scalerel}
\usepackage{bm}
\usepackage{tikz}
\usepackage{tikz-3dplot}
\usepackage{pgfplots}
\usepackage{pst-plot}
\usetikzlibrary{positioning}
\usetikzlibrary{arrows.meta}
\pgfplotsset{compat=1.11}
\usetikzlibrary{shapes,arrows,angles,patterns,calc, intersections, quotes}
\definecolor{steelblue}{RGB}{70,130,180}

\def\bbn{\mathbb N}
\def\bbz{\mathbb Z}
\def\bbr{\mathbb R}








\newcommand{\norm}[1]{\left\lVert {#1} \right\rVert}


\def\BibTeX{{\rm B\kern-.05em{\sc i\kern-.025em b}\kern-.08em
    T\kern-.1667em\lower.7ex\hbox{E}\kern-.125emX}}
\markboth{\journalname, VOL. XX, NO. XX, XXXX 2017}
{V. Renganathan: Probabilistic Robustness in the Gap Metric}
\begin{document}
\title{Probabilistic Robustness in the Gap Metric}
\author{Venkatraman Renganathan, \IEEEmembership{Member, IEEE}
\thanks{V. Renganathan is with the Faculty of Engineering \& Applied Sciences, Cranfield University, UK. Email: v.renganathan@cranfield.ac.uk.}
}

\maketitle

\begin{abstract}

Uncertainties influencing the dynamical systems pose a significant challenge in estimating the achievable performance of a controller aiming to control such uncertain systems. When the uncertainties are of stochastic nature, obtaining hard guarantees for the robustness of a controller aiming to hedge against the uncertainty is not possible. This issue set the platform for the development of probabilistic robust control approaches. In this work, we utilise the gap metric between the known nominal model and the unknown perturbed model of the uncertain system as a tool to gauge the robustness of a controller and formulate the gap as a random variable in the setting with stochastic uncertainties. The main results of this paper include giving a probabilistic bound on the gap exceeding a known threshold, followed by bounds on the expected gap value and probabilistic robust stability and performance guarantees in terms of the gap metric. We also provide a probabilistic controller performance certification under gap uncertainty and probabilistic guarantee on the achievable $\mathcal{H}_{\infty}$ robustness. Numerical simulations are provided to demonstrate the proposed approach.
\end{abstract}


\begin{IEEEkeywords}
Probabilistic Robust Control, Gap Metric, Distance Metric
\end{IEEEkeywords}

\section{Introduction}\label{sec_intro}
\IEEEPARstart{R}{obust} control stands to be one of the most mature control methodologies to be ever developed mainly due to the strong guarantees that comes with it (interested readers are referred to \cite{ZhouDoyleBook, GreenLimebeerBook} and the references therein). Vinnicombe in \cite{vinnicombe2001uncertainty} describes robust control approaches as the ones where we try to come up with a control input for a system using what we know about the system so that the control input renders the system insensitive to what we do not know about the system. To illustrate that thought, consider the Figure \ref{fig:robustcontroldiagram}, where $\bar{\Sigma}$ denotes the known nominal model of the system and $\tilde{\Sigma}$ as the (true and possibly unknown) perturbed model of the system. The perturbed model $\tilde{\Sigma}$ is obtained by combining $\bar{\Sigma}$ and the uncertainty $\Delta$ that encapsulates what we do not know about the system as per Vinnicombe's description. Then, informally speaking, one can describe
\begin{align}
\label{eqn_informal_set_of_of_all_plants}
\tilde{\Sigma}
=
\left\{
\mathrm{combination}(\bar{\Sigma}, \Delta) \mid \text{ where }  
\Delta \in \mathbf{\Delta}
\right\},    
\end{align}
where, $\mathbf{\Delta}$ denotes the set of possible uncertainties and it is allowed to be structured, unstructured, parametric, static, dynamic, time invariant and even time-varying in nature. 

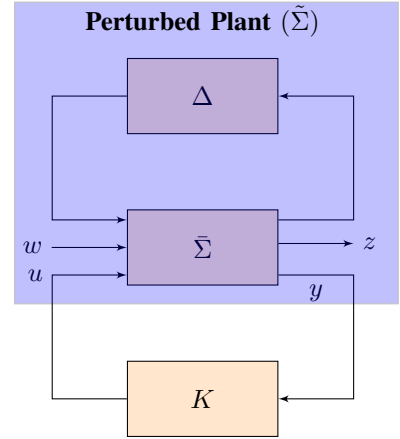
\begin{figure}
\centering
\tikzstyle{block} = [draw, fill=orange!20, rectangle, 
    minimum height=3em, minimum width=6em]
    \begin{tikzpicture}[auto, node distance=1cm, >=latex']
    
        \node [draw, rectangle, fill=orange!20, minimum width=2cm, minimum height=1cm] (G) {$\bar{\Sigma}$};
        \node [draw, rectangle, minimum width=2cm, fill=orange!20, minimum height=1cm, below=of G] (K) {$K$};
        \node [draw, rectangle, minimum width=2cm, fill=orange!20, minimum height=1cm, above=of G] (D) {$\Delta$};
    
        \node [left=of G.180] (w_in) {$w$};
        \node [right=of G.3] (y_out) {$z$};
    
        \draw [->] (w_in) -- (G.180);
        \draw [->] (G.3) -- (y_out);
        
        \draw [->] (G.340) -- node[below] {$y$} ++(1,0) |- (K);
        \draw [->] (K) -| ++(-2,0) |- node[left] {$u$} (G.200);
    
        \draw [->] (G.20) -- node[below] {} ++(1,0) |- (D);
        \draw [->] (D) -| ++(-2,0) |- node[left] {} (G.160);
    
        \draw[fill = blue, nearly transparent] ($(-2.5,-0.75) + (G) $) rectangle ($(2.6, 1.25) + (D) $);
        \node[align=right] at  ($(0.0, 1.0) + (D) $) {\textbf{Perturbed Plant} $(\tilde{\Sigma})$};
    
    \end{tikzpicture}   
\caption{A perturbed plant $\tilde{\Sigma}$ given by combination of a nominal plant $\bar{\Sigma}$ \& the uncertainty $\Delta$ is shown here. In robust control, the controller $K$ aims to minimize the transfer function $\mathbf{T}_{zw}$ between $w$ (external input) and $z$ (regulated output) for all combinations of $\bar{\Sigma}$ and $\Delta$ resulting in $\tilde{\Sigma}$.}
\label{fig:robustcontroldiagram}
\end{figure}

The description of perturbed model of the system $\tilde{\Sigma}$ using \eqref{eqn_informal_set_of_of_all_plants} naturally led several people to conceive the concept of measuring the distance between systems (specifically between Linear Time Invariant (LTI) systems which is of interest to us in this paper). Vidyasagar in \cite{vidyasagar1984graph} proposed the graph metric, followed by gap metric which was proposed in \cite{el1985gap}. Since the graph metric was difficult to compute, Georgiou in \cite{georgiou1988computation} came up with an elegant formula to compute the gap metric and subsequently the optimal robustness in the gap metric was established in \cite{tryhon_smith_tac_1990}. Meanwhile, Glover in \cite{glover1984all} had extended the famous small-gain theorem in \cite{zamesSmallGain} to handle perturbation in $\mathcal{L}_{\infty}$ rather than only in $\mathcal{H}_{\infty}$. Vinnicombe leveraged this development and proposed a new metric called the $\nu$-gap in \cite{vinnicombe1993frequency} by building upon the gap metric. Authors in \cite{lanzon2009distance} proposed a generalised distance measure for LTI systems by taking into account the information about several uncertainty structures. While all these metrics can be used for proposing probabilistic robustness, we shall be employing only the gap metric in this paper. Though gap metric has also been extensively studied for linear time varying (LTV) systems setting too in \cite{cantoni2000gap, djouadi2015robustness}, we shall restrict our study in this manuscript to LTI systems and leave the extension to LTV systems for future work. \\

Probabilistic Robust Control (PRC) approaches have been explored before and the interested readers are referred to \cite{stengel1991technical, khargonekar1996randomized, calafiore2007probabilistic, tempo2013randomized} and the references therein. In connection with the scenario-based approaches for robust control design, see \cite{calafiore2006scenario} and the references therein for more details. Having said that, the approach proposed in this paper involves metric between dynamical systems and casting them as random quantities, while the existing PRC approaches define robustness guarantees in terms of the volume of destabilizing perturbations or equivalently probability of violation of robust stability and performance conditions. Though our approach presumably aims to solve the same problem, the underlying methodology is certainly different from the existing PRC approaches. In this paper, we shall be using the gap metric for gauging the robustness and use the tools from high dimensional statistics for giving probabilistic guarantees. It was already noted by authors in \cite{safanov_gap} that the gap metric may be not suitable for evaluating the closeness of systems having uncertain poles and zeros on or near the imaginary axis resulting in the difficulty of the stability of the perturbed plant with a stabilizing controller designed for the nominal plant not being guaranteed through the existing stability theorems involving the gap metric. Our developments using probabilistic gap based problem formulation in this manuscript will aim to investigate along with their findings in the sense of probabilistic guarantees.

\subsection{Contributions}
This work is an extension of \cite{venkat_cdc_2024} where probabilistic robustness in terms of the $\nu$-gap metric in the frequency domain was initiated. In this work, a similar extension is sought albeit in the time domain using the gap metric. We believe our new perspective on PRC theory using gap between dynamical systems will further strengthen the existing theory on PRC and will open new doors for further exploration. The main contributions of this article are as follows: 
\begin{enumerate}
    \item When a random parameter affects a linear system, we study the satisfaction of the B\'ezout Identity governing the normalised co-prime factors of the uncertain system transfer function under that parameter uncertainty (See Lemma \ref{lemma_Bezout_Gaussian}) and subsequently we investigate the probabilistic guarantee associated with the randomness in the coprime factor uncertainty in Theorem \ref{theorem_coprime_probabilistic_guarantee}.
    \item When the perturbed model $\tilde{\Sigma}$ is not known exactly due to the random parameters, we formulate the associated gap metric between the known nominal model $\bar{\Sigma}$ and $\tilde{\Sigma}$ as a random variable and study the 
    probability of the random gap exceeding a known threshold in Theorem \ref{theorem_gap_lipschitz} and its corollaries and give bounds on its expected value in Lemma \ref{lemma_expected_gap} and its associated corollary.
    \item We formulate the randomness in the perturbed model $\tilde{\Sigma}$ by resulting it from the stochastic parametric uncertainty for the LTI system and we discuss probabilistic robust stability result in Theorem \ref{theorem_probabilistic_stability} and we provide probabilistic closed loop deviation (in the $\mathcal{H}_{\infty}$ norm) guarantees due to the action of a nominal controller aiming to control the perturbed model (See Theorem \ref{theorem_Q_difference}). 
    \item We give probabilistic $\mathcal{H}_{\infty}$ performance bound for the perturbed system under uncertainty in Theorem \ref{theorem_probability_hinfty_performance_satisfaction} and give probabilistic guarantee of meeting a desired $\mathcal{H}_{\infty}$ performance in Corollary \ref{corollary_Hinf_performance_probability_lower_bound}. We also give a rather conservative bound on the expected value of the $\mathcal{H}_{\infty}$ norm of the random perturbed system in Theorem \ref{theorem_expected_Hinf_norm_Tzw}.
    \item We connect the random gap metric problem formulation with the scenario based robustness approach to give probabilistic robust stability result in terms of the gap metric based performance measure in Theorem \ref{theorem_scenario_gap_robustness}.
\end{enumerate}

Numerical simulations are given at many places along the manuscript to demonstrate the idea proposed in this paper. The Matlab codes to reproduce the results provided in this manuscript are available at \url{https://github.com/venkatramanrenganathan/probust-gap}.


\subsection{Notations} 
The cardinality of the set $A$ is denoted by $\left | A \right \vert$. Given two sets $A, B$ such that $A \subset B$, the notation $A^{c}$ denotes the complement of set $A$ in $B$ meaning that $A^{c} := \{x \in B \mid x \notin A\} \subset B$. The set of real numbers, integers and the natural numbers are denoted by $\bbr, \bbz, \bbn$ respectively and the subset of real numbers greater than a given constant say $a \in \bbr$ is denoted by $\bbr_{> a}$. 
The subset of natural numbers between two constants $a, b \in \bbn$ with $a < b$ is denoted by $[a:b]$. 
For a matrix $A \in \bbr^{n \times n}$, we denote its transpose and its trace by $A^{\top}$ and $\mathbf{Tr}(A)$ respectively. An identity matrix of dimension $n$ is denoted by $I_{n}$.
We denote by $\mathbb{S}^{n}$ the set of symmetric matrices in $\bbr^{n \times n}$. 
For $A \in \mathbb{S}^{n}$, we denote by $A \succ 0 (A \succeq 0)$ to mean that $A$ is positive definite (positive semi-definite). 
Given $x \in \bbr^{n}$, the notation $\norm{x}$ denotes the $\mathcal{L}_{2}$ norm and is given by $\sqrt{x^{\top} x}$. For brevity of notation, multi-variate functions like $g(x(t), y(t))$ shall be abbreviated as $g(t; x, y)$. For a subspace $S \subset \mathbb{R}^{n}$, its orthogonal complement is denoted by $S^{\perp}$. The angle between two vectors $x, y \in \mathbb{R}^{n}$ and the angle between two subspaces $M, N \subset \mathbb{R}^{n}$ are denoted by $\angle(x,y)$ and $\angle(M,N)$ respectively. \\

The probability space is defined using a triplet $(\Omega, \mathcal{F}, \mathbb{P})$, where $\Omega, \mathcal{F}$, and $\mathbb{P}$ denote the sample space, event space, and the probability function, respectively, with $\mathbb{P}: \Omega \rightarrow [0,1]$. A real random vector $x \in \mathbb{R}^{n}$ following a probability density function $\mathbf{f_{x}}$ is denoted by $x \sim \mathbf{f_{x}}$ and its expectation is denoted by $\mathbb{E}[x]$. A zero-mean random vector $x \in \mathbb{R}^{n}$ following a Gaussian distribution with covariance $I_{n} \succ 0$ is denoted by $x \sim \mathcal{N}(0, I_{n})$. Similarly, a random variable $x$ following a Chi-squared distribution with parameter $p > 0$ is denoted by $x \sim \chi^{2}_{p}$. A random variable $x \in \mathbb{R}$ is said to be sub-Gaussian with parameter $\sigma > 0$ if $\forall \lambda \in \mathbb{R}$, it satisfies $\mathbb{E}[e^{\lambda (X - \mathbb{E}[X])}] \leq e^{\frac{\lambda^2 \sigma^2}{2}}$
or equivalently if $\forall \epsilon > 0$, it satisfies $\mathbb{P}(x - \mathbb{E}[x] \geq \epsilon) \leq \exp\left( -\frac{\epsilon^2}{2 \sigma^2} \right)$. The covariance of a random vector $x \in \mathbb{R}^{n}$ is denoted by $\mathbf{Cov}(x)$. \\

Let $\mathbf{R}(s)$ denote the set of rational functions in $s \in \mathbb{C}$ with real coefficients. We use $\mathcal{P}(s) \subset \mathbf{R}(s)$ to denote the set of proper rational functions whose poles are in the open left half-plane. Conceptually speaking, $\mathcal{P}(s)$ denotes the set of all finite-dimensional stable systems. Let us denote the set of matrices with elements in $\mathbf{R}(s)$ as $\mathrm{mat}(\mathbf{R}(s))$ and similarly, let us denote the set of matrices with elements in $\mathcal{P}(s)$ as $\mathrm{mat}(\mathcal{P}(s))$. Let $\mathcal{L}_{2}$ denote the space of all signals, or vectors of signals with bounded energy. In the frequency domain, the space $\mathcal{L}_{2}$ can be decomposed into $\mathcal{H}_{2}$ and $\mathcal{H}^{\perp}_{2}$, where $\mathcal{H}_{2}$ denotes the space of the Fourier transforms of signals defined for positive time and zero for negative time and $\mathcal{H}^{\perp}_{2}$ denotes the space of the Fourier transforms of signals defined for negative time and zero for positive time. The Hardy space consisting of transfer functions of stable LTI continuous time systems is denoted by $\mathcal{H}_{\infty}$ and is equipped with the $\mathcal{H}_{\infty}$ norm $\forall P(s) \in \mathrm{mat}(\mathcal{P}(s))$ given by
\begin{align} \label{eqn_norm_G_of_s}
    \norm{P(s)}_{\mathcal{H}_{\infty}} := \sup_{\omega \in [0, \infty)} \bar{\sigma}(P(j\omega)),
\end{align}
with $\bar{\sigma}(G(j\omega))$ denoting the maximum singular value. 
It happens that, $\mathcal{H}_{\infty}$ norm equals the induced norm. That is,
\begin{align}\label{eqn_h_infty_norm_def}
    \norm{P}_{\mathcal{H}_{\infty}} 
    := \sup_{u \in \mathcal{H}_{2}, u \neq 0} \frac{\norm{Pu}_{\mathcal{H}_{2}}}{\norm{u}_{\mathcal{H}_{2}}}.
\end{align}
The notation $\mathbf{R}\mathcal{H}_{\infty}$ denotes the set of all stable rational transfer functions.

\subsection{Preliminaries About Gap Metric}
Given $P(s) \in \mathrm{mat}(\mathbf{R}(s))$, we define its associated multiplication operator on $\mathcal{H}_{2}$ (which we identify with $\mathcal{L}_{2}[0, \infty)$ in the time domain) as
\begin{align}
\mathbf{M}_{P} 
:
\mathcal{H}_{2} \rightarrow \mathcal{H}_{2}, u \rightarrow Pu
\end{align}
and the domain of the operator as
\begin{align}
\mathcal{D}(\mathbf{M}_{P})
:=
\{
u \in \mathcal{H}_{2} \mid Pu \in \mathcal{H}_{2}
\}.
\end{align}
The graph of the operator $\mathbf{M}_{P}$ is the set of all possible bounded input/output pairs and is denoted by  
\begin{align}
\label{eqn_graph_of_operator}
\mathrm{graph}(\mathbf{M}_{P}) 
:=
\left\{ 
\begin{bmatrix}
Pu \\
u
\end{bmatrix}
:
u \in \mathcal{D}(\mathbf{M}_{P})
\right\}.
\end{align}
Every matrix $P(s) \in \mathrm{mat}(\mathbf{R}(s))$ has both a Right Co-prime Factorisation (RCF) as well as a Left Co-prime Factorisation (LCF) over the ring $\mathcal{P}(s)$. That is, $\forall P(s) \in \mathrm{mat}(\mathbf{R}(s))$, there exist $N, D, \Tilde{N}, \Tilde{D}, X, Y, \Tilde{X}, \Tilde{Y} \in \mathrm{mat}(\mathcal{P}(s))$ such that 
\begin{align}
\label{eqn_coprime_representation}
P
= 
N D^{-1} 
= 
\Tilde{D}^{-1}\Tilde{N},    
\end{align}
and the following \emph{Bezout's identity} holds for all $s \in \mathbb{C}_{\geq 0}$, 
\begin{align*}
X(s)N(s) + Y(s)D(s) = \Tilde{N}(s) \Tilde{X}(s) + \Tilde{D}(s) \Tilde{Y}(s) = I.
\end{align*} 
Further, the RCF is said to be \emph{normalized} if in addition it satisfies $N^{\star} N + D^{\star} D = I$. Analogous LCF results are available and are omitted here for the reason of being not used in this paper. Given $P(s) \in \mathrm{mat}(\mathbf{R}(s))$, its $\mathcal{H}_2$-graph is defined as
\begin{subequations}
\label{eqn_graph_p}
\begin{align}
    \mathcal{G}_{P} 
    &:= 
    \left\{ (u,y) \mid y = Pu \right\} \subseteq \mathcal{H}_{2} \times \mathcal{H}_{2} \\
    &=
    \underbrace{
    \begin{bmatrix}
    D \\ N    
    \end{bmatrix}}_{=: G}
    \mathcal{H}_{2} = \mathrm{Range}(G),
\end{align}
\end{subequations}
where the operator $G$ (henceforth referred to as the graph symbol) is unitary meaning that $G^{\star} G = I$. Note that $\mathcal{G}_{P}$ is a closed subspace of $\mathcal{H}_{2} \times \mathcal{H}_{2}$. The orthogonal projection onto $\mathcal{G}_{P}$ is given by
\begin{subequations}
\label{eqn_orthonal_projection}
\begin{align}
\Pi_{\mathcal{G}_{P}} &= \Pi_{G \mathcal{H}_{2}} := GG^{\star} \\
\iff \Pi_{\mathcal{G}^{\perp}_{P}} &= \Pi_{(G \mathcal{H}_{2})^{\perp}} := I - GG^{\star}.
\end{align}
\end{subequations}
Note that the projection operator $\Pi_{\mathcal{G}_{P}}$ is bounded. Let $G_{1}$ and $G_{2}$ denote the graph symbols of normalized RCFs of plants $P_{1}$ and $P_{2}$ respectively. The gap between $P_1$ and $P_2$ can be defined using three ways as follows:
\begin{enumerate}
  \item \emph{Projection-Based Definition:}
  This measures how far vectors in $\mathcal{G}_2$ stick out of $\mathcal{G}_1$, that is, how far they are not contained in $\mathcal{G}_1$:
  \begin{align}
  \label{eqn_gap_metric_definition_1}
  \delta_{g}(P_1, P_2) := \left\| \Pi_{\mathcal{G}_1^\perp} \Pi_{\mathcal{G}_2} \right\|.
  \end{align}
  
  \item \emph{Graph Operator Definition:}
  For an unit vector $\norm{u} = 1$, we see that $G_{2} u \in \mathcal{G}_{2}$ and $\Pi_{\mathcal{G}^{\perp}_{1}} G_{2} u \in \mathcal{G}^{\perp}_{1}$. Hence, the induced norm becomes:
  \begin{align}
  \label{eqn_gap_metric_definition_2}
  \delta_{g}(P_1, P_2) := \sup_{\|u\| = 1} \left\| \Pi_{\mathcal{G}_1^\perp} G_2 u \right\| = \left\| \Pi_{\mathcal{G}_1^\perp} G_2 \right\|.
  \end{align}
  
  \item \emph{Definition Using Norm of Projection Difference:}
  This measures the distance between the two projections in operator norm:
  \begin{align}
  \label{eqn_gap_metric_definition_3}
  \delta_{g}(P_1, P_2) = \left\| \Pi_{\mathcal{G}_1} - \Pi_{\mathcal{G}_2} \right\|.
  \end{align}
\end{enumerate}
The gap $\delta_{g}(P_1, P_2)$ can be calculated using Georgiou's formula (described in \cite{georgiou1988computation}) as follows:
\begin{subequations}
\label{eqn_gap_tryphon_formula}
\begin{align}
&\delta_{g}(P_1, P_2) 
= 
\max
\left\{
\vec{\delta}(P_1, P_2), \vec{\delta}(P_2, P_1)
\right\}, \quad \text{with} \label{eqn_gap_metric_formula} \\
&\vec{\delta}(P_1, P_2)
=
\norm{\Pi_{\mathcal{G}^{\perp}_{P_{2}}} \Pi_{\mathcal{G}_{P_{1}}}} =
\inf_{Q \in \mathcal{H}_{\infty}}
\norm{G_{1} - G_{2} Q}_{\mathcal{H}_{\infty}} \label{eqn_directed_gap_definition}.
\end{align}
\end{subequations}
Using perturbation theory for linear operators from \cite{kato2013perturbation}, for any closed subspaces $\mathcal{M}, \mathcal{N}$ of a Hilbert space, it holds that
\begin{align}
\left\| \Pi_{\mathcal{M}} - \Pi_{\mathcal{N}} \right\| = \max\left\{ \left\| \Pi_{\mathcal{M}^\perp} \Pi_{\mathcal{N}} \right\|, \left\| \Pi_{\mathcal{N}^\perp} \Pi_{\mathcal{M}} \right\| \right\}.
\end{align}
Since gap metric is symmetric meaning that $(\delta_{g}(P_1, P_2) = \delta_{g}(P_2, P_1))$, it suffices to consider just one direction. In the special case where $\mathcal{G}_1$ and $\mathcal{G}_2$ are graphs of stable systems with normalized coprime factorizations, by leveraging the gap being symmetric as described in \cite{green2012linear}, we get
\begin{align}
\left\| \Pi_{\mathcal{G}_1^\perp} \Pi_{\mathcal{G}_2} \right\| = \left\| \Pi_{\mathcal{G}_2^\perp} \Pi_{\mathcal{G}_1} \right\|.
\end{align}
The norm of this projection across all unit inputs tells us how misaligned the graphs are, which is what the gap metric captures. Hence, all three definitions of the gap metric $\delta_{g}(P_1, P_2)$ given by \eqref{eqn_gap_metric_definition_1}, \eqref{eqn_gap_metric_definition_2}, and \eqref{eqn_gap_metric_definition_3} are equal. We provide the following observations from \cite{vinnicombe2001uncertainty} as a proposition without proof. We note here that we will be using the results of the proposition later in our theoretical developments. 

\begin{proposition}(From \cite{vinnicombe2001uncertainty})   \label{proposition_vinnicombe} 
Given $\alpha \in (0, 1)$, let the controller $C_1$ stabilise the plant $P_1$ and further assume that the pair $(P_1, C_1)$ yields a performance measure of $b_{P_1, C_1} > \alpha$, where
\begin{subequations}
\label{eqn_b_pc_definition}
\begin{align}
b_{P_1,C_1}
&:=
\norm{Q(P_1, C_1)}^{-1}_{\mathcal{H}_{\infty}} \quad \text{and}, \\
Q(P_1, C_1)
&:=
\begin{bmatrix}
P_1 \\ I    
\end{bmatrix}
(I - C_1 P_1)^{-1}
\begin{bmatrix}
-C_1 & I    
\end{bmatrix}
\end{align}
\end{subequations}
Then, $C_1$ also stabilises the set of all plants given by
\begin{align}
\mathcal{B}_{\alpha}(P_{1})
:=
\left\{
P \in \mathcal{H}_{\infty} \mid \delta_{g}(P, P_1) \leq \alpha
\right\}
\end{align}
and $\forall P \in \mathcal{B}_{\alpha}(P_{1})$, the performance of $C_1$ degrades as
\begin{align}
\label{eqn_performance_degradation}
b_{P, C_1} \geq b_{P_1, C_1} - \delta_{g}(P, P_1).
\end{align}
\end{proposition}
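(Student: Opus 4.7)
The plan is to prove both claims through the parallel-projection characterisation of the closed-loop operator $Q(P,C_1)$. The key operator-theoretic fact I would exploit is the classical Georgiou--Smith identity $b_{P,C_1}^{-1} = \|\Pi_{\mathcal{G}_P \| \mathcal{G}_{C_1}'}\|$, where $\mathcal{G}_{C_1}'$ denotes the ``inverse graph'' of $C_1$ (the range of the graph symbol associated with a dual coprime factorisation of $C_1$, embedded in $\mathcal{H}_2 \times \mathcal{H}_2$ so as to live in the same ambient space as $\mathcal{G}_P$), and $\Pi_{\mathcal{G}_P \| \mathcal{G}_{C_1}'}$ is the oblique parallel projection onto $\mathcal{G}_P$ along $\mathcal{G}_{C_1}'$. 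Under this dictionary, internal stability of $(P,C_1)$ is equivalent to the topological direct-sum decomposition $\mathcal{G}_P \oplus \mathcal{G}_{C_1}' = \mathcal{H}_2 \times \mathcal{H}_2$, i.e.\ to the existence of this parallel projection as a bounded operator.

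First I would derive the projection identity by expanding $Q(P,C_1)$ via $P = ND^{-1}$ and a matched coprime factorisation of $C_1$, then checking directly that the resulting block operator in \eqref{eqn_b_pc_definition} acts as the identity on $\mathcal{G}_P = \mathrm{Range}(G)$ while annihilating $\mathcal{G}_{C_1}'$; together with the direct-sum decomposition this uniquely characterises the oblique projection. Second, for the stability claim, I would apply the projection-based form \eqref{eqn_gap_metric_definition_3}, which gives $\|\Pi_{\mathcal{G}_P} - \Pi_{\mathcal{G}_{P_1}}\| = \delta_g(P,P_1) \leq \alpha < b_{P_1,C_1}$, together with a Neumann-series subspace-perturbation argument: when the norm of a closed-subspace perturbation is strictly less than the reciprocal of the parallel-projection norm onto the unperturbed subspace (here $b_{P_1,C_1}$), the direct-sum decomposition persists for the perturbed subspace against the same complement, so $(P,C_1)$ is stable. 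Third, for the performance bound, I would use the operator identity $\Pi_{\mathcal{G}_P\|\mathcal{G}_{C_1}'} - \Pi_{\mathcal{G}_{P_1}\|\mathcal{G}_{C_1}'} = -\,\Pi_{\mathcal{G}_{C_1}'\|\mathcal{G}_P}\,\Pi_{\mathcal{G}_{P_1}\|\mathcal{G}_{C_1}'}$, which follows on observing that both parallel projections annihilate $\mathcal{G}_{C_1}'$, factor the right-hand side through the orthogonal map $\Pi_{\mathcal{G}_P^\perp}\Pi_{\mathcal{G}_{P_1}}$ whose norm equals $\delta_g(P,P_1)$ by \eqref{eqn_gap_metric_definition_1}, and invoke the standard equality $\|\Pi_{\mathcal{G}_{C_1}'\|\mathcal{G}_P}\| = \|\Pi_{\mathcal{G}_P\|\mathcal{G}_{C_1}'}\|$ for complementary oblique projections. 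This yields $\bigl|b_{P,C_1}^{-1} - b_{P_1,C_1}^{-1}\bigr| \leq b_{P,C_1}^{-1}\,\delta_g(P,P_1)\,b_{P_1,C_1}^{-1}$, and elementary rearrangement (valid since $\delta_g(P,P_1) \leq \alpha < b_{P_1,C_1}$) produces exactly $b_{P,C_1} \geq b_{P_1,C_1} - \delta_g(P,P_1)$.

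The hardest step will be the first, namely rigorously establishing $b_{P,C_1}^{-1} = \|\Pi_{\mathcal{G}_P\|\mathcal{G}_{C_1}'}\|$. It requires a judicious choice of matched normalised coprime factorisations for the plant and controller so that the block algebra in \eqref{eqn_b_pc_definition} collapses to a genuine oblique projection between the plant graph and the inverse controller graph. Once this operator-theoretic dictionary is in place, both the stability and the performance claims reduce to standard Hilbert-space subspace perturbation estimates for which the projection-based forms \eqref{eqn_gap_metric_definition_1} and \eqref{eqn_gap_metric_definition_3} of the gap are ideally suited.
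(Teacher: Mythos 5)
The paper states this proposition explicitly \emph{without proof}, importing it from Vinnicombe's monograph, so there is no in-paper argument to measure yours against; judged on its own terms, your route is the standard Georgiou--Smith operator-theoretic proof and its key steps check out. Concretely: (i) $Q(P,C_1)$ is idempotent because the middle factor collapses via $\left[\begin{smallmatrix}-C_1 & I\end{smallmatrix}\right]\left[\begin{smallmatrix}P\\ I\end{smallmatrix}\right]=I-C_1P$, its range is $\mathcal{G}_{P}$ and its kernel is the inverse graph of $C_1$, so $b_{P,C_1}^{-1}=\norm{\Pi_{\mathcal{G}_{P}\parallel\mathcal{G}_{C_1}'}}$ as you claim; (ii) your difference identity is valid because the two oblique projections share the kernel $\mathcal{G}_{C_1}'$, which forces $\Pi_{1}\Pi_{2}=\Pi_{1}$, and inserting $\Pi_{\mathcal{G}_{P}^{\perp}}$ on the left and $\Pi_{\mathcal{G}_{P_1}}$ on the right of the product, together with the identity $\norm{I-\Pi}=\norm{\Pi}$ for nontrivial idempotents, yields $\lvert b_{P,C_1}^{-1}-b_{P_1,C_1}^{-1}\rvert\leq b_{P,C_1}^{-1}\,\delta_{g}(P,P_1)\,b_{P_1,C_1}^{-1}$, which rearranges to $\lvert b_{P_1,C_1}-b_{P,C_1}\rvert\leq\delta_{g}(P,P_1)$ and hence \eqref{eqn_performance_degradation}. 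Two remarks on emphasis. First, the step you single out as hardest is in fact the easiest: no matched normalised coprime factorisations are needed, only direct verification of idempotence and of the range/kernel from the formula in \eqref{eqn_b_pc_definition}. Second, the genuinely delicate point is the stability-persistence step: the Neumann-series argument using \eqref{eqn_gap_metric_definition_3} shows that the topological direct sum $\mathcal{G}_{P}\oplus\mathcal{G}_{C_1}'=\mathcal{H}_{2}\times\mathcal{H}_{2}$ survives a perturbation of size $\delta_{g}(P,P_1)\leq\alpha<b_{P_1,C_1}$, but internal stability of the feedback loop additionally requires the resulting parallel projection to be \emph{causal}, i.e.\ to belong to $\mathcal{H}_{\infty}$ rather than merely being a bounded operator on $\mathcal{L}_{2}$; this is where the graph-symbol and coprime-factor machinery genuinely enters (one must show the relevant operator is invertible in $\mathcal{H}_{\infty}$ and not just in $\mathcal{L}_{\infty}$), and your sketch should make that step explicit rather than fold it into a generic subspace-perturbation lemma.
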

\noindent Authors in \cite{cantoni2002linear} give the following closed loop deviation result  
\begin{align}
\label{eqn_performance_bounds_Q}
\norm{Q(P_1, C_1) - Q(P, C_1)}_{\mathcal{H}_{\infty}}
\leq
\frac{\delta_{g}(P, P_1)}{b_{P, C_1} \cdot b_{P_1, C_1}}.
\end{align}
Similarly, from \cite{tryhon_smith_tac_1990}, we observe that when $\delta_{g}(P, P_1) < 1$, 
\begin{align}
\label{eqn_performance_bounds_T}
&\norm{\mathbf{T}_{zw}(P, C_1)}_{\mathcal{H}_{\infty}}
\leq
\frac{\norm{\mathbf{T}_{zw}(P_1, C_1)}_{\mathcal{H}_{\infty}} + \delta_{g}(P, P_1)}{1 - \delta_{g}(P, P_1)}.
\end{align}
The essence of the above proposition is that when the plant model $P_1$ is known and when a corresponding stabilising controller $C_1$ is also known, it informs us apriori how $C_1$ will fair when applied to other plant models in the vicinity of the plant model $P_1$ given by plants $P \in \mathcal{B}_{\alpha}(P_{1})$. Particularly, the proposition informs us how the performance measure degrades followed by achievable $\mathcal{H}_{\infty}$ performance degradation in terms of the gap. In the following section, we will formulate the setting where the gap $\delta_{g}(P, P_1)$ becomes random and that will open a new perspective on doing PRC. Some motivating problems on why the gap $\delta_{g}(P, P_1)$ may become random and why to develop this theory are available in \cite{venkat_cdc_2024}. 
\section{Problem Formulation}
\label{sec_problem_formulation}
\subsection{Uncertain Dynamical System}
Consider the nominal model of the continuous time LTI dynamical system of the following form:
\begin{align}
\label{eqn_nominal_dynamics_system}
\bar{\Sigma}
:
\Bigl\{
\dot{x} 
=
A x + B u, \quad y = Cx,
\end{align}
where we refer to the system states as $x \in \mathbb{R}^{n}$ and the control inputs to the system as $u \in \mathbb{R}^{m}$, and system outputs as $y \in \mathbb{R}^{l}$ and the matrices $(A, B, C)$ are of appropriate dimensions. Real-world dynamical systems usually have some form of uncertainties associated with them either due to the lack of modelling tools or due to the inaccuracies of the modelling framework. Hence, in practise, all systems have inherent uncertainties affecting their evolution. We model the uncertainty affecting the evolution of such uncertain systems using $\theta \in \mathbb{R}^{p}$ with $p \leq (n+m+l)$ and $\theta$ directly affects the evolution of the perturbed system described as follows:
\begin{align}
\label{eqn_perturbed_dynamics_system}
\tilde{\Sigma}(\theta)
:
\Bigl\{
\dot{x} 
= 
A(\theta) x + B(\theta) u, \quad y = C(\theta) x.    
\end{align}
We will assume that $\theta \sim \mathbf{f_{\theta}}$ where, $\mathbf{f_{\theta}}$ denotes the distribution of the parameter $\theta$. For instance, we can assume that $\mathbf{f_{\theta}}$ is unknown but is believed to be belonging to a moment based ambiguity set $\mathcal{P}^{\theta}$ consistent with mean $\mu_{\theta} \in \mathbb{R}^{p}$ and covariance $\sigma^{2}_{\theta} I_{p} \succ 0$. However, for the ease of exposition, we will assume that $\theta \sim \mathbf{f}_{\theta} = \mathcal{N}(\mu_{\theta}, \sigma^{2}_{\theta} I_{p})$ as this will aid the formulation of the associated gap to become sub-Gaussian which is favourable for obtaining bounds on tail probability (as a Lipschitz function of a Gaussian random variable is sub-Gaussian \cite{vershynin2018high}). Further, we note here that $\tilde{\Sigma}(\theta_{0}) = \bar{\Sigma}$ meaning that 
the uncertainties of the perturbed system vanish at $\theta = \theta_{0}$ and the resulting system equals the nominal system (in sense of Figure \ref{fig:robustcontroldiagram} with $\Delta = 0$). This does not imply that $\mu_{\theta} = \theta_{0}$. The only nominal and valid requirement that is needed is that $\theta_{0} \in \mathbf{f}_{\theta}$ (perfectly fine even if the containment happens asymptotically) so that when uncertainties of the perturbed system vanish, it results in the nominal system. It would be interesting to investigate the randomness in uncertainties by taking into account their structural information as done in \cite{lanzon2009distance}, but we reserve that research direction for future work. While we don't take into account the structural information about the uncertainties in this research, we establish guarantees by incorporating further knowledge as to where in the model ambiguity set the system model is more likely to be in the space of (LTI) dynamical systems. We believe that by leveraging this additional knowledge on the space of dynamical systems, we can create new perspective for the field of PRC theory.

\subsection{Gap Between Nominal \& Perturbed   Models}
Having defined the evolution of the nominal model $\bar{\Sigma}$ using \eqref{eqn_nominal_dynamics_system} and perturbed model $\tilde{\Sigma}(\theta)$ in \eqref{eqn_perturbed_dynamics_system} for the system $\Sigma$, we denote the closed loop complementary sensitivity transfer function (from $w$ to $z$ in Figure \ref{fig:robustcontroldiagram}) of the system $\Sigma$ as $\mathbf{T}$. Then, the gap between the nominal model of the system $\bar{\Sigma}$ and the perturbed model of the system $\tilde{\Sigma}(\theta)$ denoted by $\mathrm{Gap}(\theta)$ can be defined using \eqref{eqn_gap_metric_definition_3} as
\begin{align}
\label{eqn_gap_between_models}    
\mathrm{Gap}(\theta) 
:=
\delta_{g}(\bar{\Sigma}, \tilde{\Sigma}(\theta)).
\end{align}
Note that for a fixed $\theta$ value, $\mathrm{Gap}(\theta)$ can be computed using \eqref{eqn_gap_tryphon_formula}. Note that for every realization of $\theta$ say $\bar{\theta}$ from $\mathbf{f}_{\theta}$, we get a deterministic graph subspace $\mathcal{G}_{\tilde{\Sigma}(\bar{\theta})}$ and hence a deterministic projection $\Pi_{\mathcal{G}_{\tilde{\Sigma}(\bar{\theta})}}$. But in general, the randomness in $\tilde{\Sigma}(\theta)$ due to $\theta$ manifests itself as the randomness in the graph subspace $\mathcal{G}_{\tilde{\Sigma}(\theta)}$ and this results in the corresponding projection operator $\Pi_{\mathcal{G}_{\tilde{\Sigma}(\theta)}}$ becoming random as well. Specifically, the projection $\Pi_{\mathcal{G}_{\tilde{\Sigma}(\theta)}}$ is a random operator-valued function of $\theta$. While we certainly want to investigate the above randomness of the projection operator and hence the randomness of the associated $\mathrm{Gap}(\theta)$ using \eqref{eqn_gap_metric_definition_1}, \eqref{eqn_gap_metric_definition_2}, and \eqref{eqn_gap_metric_definition_3} in detail, we reserve that exciting research direction as a future work. In this manuscript, we will aim to establish the fact that $\mathrm{Gap}(\theta)$ is a sub-Gaussian random variable and leverage the tools from high dimensional statistics to give probabilistic guarantees. Let $\Theta \subseteq \mathbb{R}^{p}$ denote the set of all possible values of $\theta$ such that, $\theta \sim \mathcal{N}(\mu_{\theta}, \sigma^{2}_{\theta} I_{p})$. We now state the main problems of interests that are being addressed in this manuscript.

\begin{problem} \label{problem_random_gap}
Since the $\mathrm{Gap}(\theta)$ is random, 
for a given nominal system model $\bar{\Sigma}$ and the perturbed system model $\tilde{\Sigma}(\theta)$ with $\theta \sim \mathcal{N}(\mu_{\theta}, \sigma^{2}_{\theta} I_{p})$, estimate the following quantities of interests:
\begin{enumerate}
    \item probability that the $\mathrm{Gap}(\theta)$ exceeds the given threshold $\epsilon \in (0,1)$ denoted by $\mathbb{P}(\mathrm{Gap}(\theta) \geq \epsilon)$.
    \item bound on the $\mathbb{E}[\mathrm{Gap}(\theta)]$.
\end{enumerate}
\end{problem}

\begin{problem} \label{problem_Hinf_performance}
Given nominal system model $\bar{\Sigma}$ and the perturbed system model $\tilde{\Sigma}(\theta)$ with $\theta \sim \mathcal{N}(\mu_{\theta}, \sigma^{2}_{\theta} I_{p})$ resulting in $\mathrm{Gap}(\theta)$ becoming random, address the following questions:
\begin{enumerate}
    \item how to guarantee for a given desired performance level $\gamma > 0$ and a violation probability $\beta \in (0,1)$ that 
    \begin{align}
    \mathbb{P}
    \left(
    \norm{\mathbf{T}(\tilde{\Sigma}(\theta))}_{\mathcal{H}_{\infty}} > \gamma
    \right)
    \leq 
    \beta.
    \end{align}
    \item obtain a bound for $\mathbb{E}\left[ \norm{\mathbf{T}(\tilde{\Sigma}(\theta))}_{\mathcal{H}_{\infty}} \right]$. 
\end{enumerate}

\end{problem}


\section{Solution Methodology} \label{sec_main}
Let us denote the normalized RCFs of the nominal model $\bar{\Sigma}$ and the perturbed model $\tilde{\Sigma}(\theta)$ of the system as $(\bar{N}, \bar{D})$ and  $(\tilde{N}(\theta), \bar{D}(\theta))$ respectively and further their respective graph symbols by $\bar{G}$ and $\tilde{G}(\theta)$. We begin the discussion in this section by analysing how should one understand the randomness associated with the normalised RCFs $\tilde{G}(\theta)$. Most of the work presented in the section would work for the case of $\mathrm{Gap}(\theta) < 1$. The case of $\mathrm{Gap}(\theta) = 1$ corresponds to unstable perturbed plant $\bar{\Sigma}(\theta)$ and we do not consider such cases ($\nu$-Gap is equipped to handle such cases and it is clearly out of the scope of this paper). Similarly, all the developments in this paper shall assume (unless otherwise specified) that the nominal plant $\bar{\Sigma}$ and the nominal controller $\bar{C}$ belong to the $\mathcal{H}_{\infty}$ space. Though depending upon the strength of the perturbation, the perturbed plant $\tilde{\Sigma}(\theta)$ may become open-loop stable or unstable and hence it may or may not belong to the $\mathcal{H}_{\infty}$ space, we only consider the ones that are in the $\mathcal{H}_{\infty}$ space so that $\mathrm{Gap}(\theta) < 1$. Though, we can write analogous theorem and lemma statements with guarantees in terms of the Left co-prime factors (LCFs), we will be sticking to the RCFs based statements without loss of generality in this manuscript. 

\subsection{About the Randomness of Graph Operator $\tilde{G}(\theta)$}
We begin our discussion about the randomness associated with the graph operator of the perturbed plant $\tilde{G}(\theta)$ where the uncertainty stems from the associated uncertainty of the parameter $\theta$. We have the following assumption in place concerned with the dependence of $\tilde{G}(\theta)$ on $\theta$. 

\begin{assumption} \label{assume_theta_continuity}
The mapping $\theta \mapsto \tilde{G}(\theta) := \begin{bmatrix} \tilde{N}(\theta) \\ \bar{D}(\theta) \end{bmatrix}$ is Fr\'echet differentiable meaning that $\tilde{N}(\theta), \tilde{D}(\theta)$ are continuously differentiable in $\theta$.
\end{assumption}

The following lemma analyses the case when $\mathbf{f}_{\theta}$ is a Gaussian distribution\footnote{In principle, any $\mathbf{f}_{\theta}$ with mean $\mu_{\theta}$ and covariance $\sigma^{2}_{\theta} I_{p}$ would suffice. However, it would need more investigation further down the line for giving probabilistic guarantees. So, Gaussian distribution is preferred for the ease of exposition.} and shows that the B\'ezout identity still holds under Gaussian parameter uncertainty. 

\begin{lemma}
\label{lemma_Bezout_Gaussian}
Given $\theta \sim \mathcal{N}(\mu_{\theta}, \sigma^{2}_{\theta} I_{p})$ with known mean $\mu_{\theta} \in \mathbb{R}^{p}$ and covariance $\sigma^{2}_{\theta} I_{p} \succ 0$, let $\tilde{\Sigma}(\theta) = N(\theta) D(\theta)^{-1} \in \mathbf{R}\mathcal{H}_{\infty}$ such that $\forall \theta \in \mathbb{R}^{p}$:
\begin{itemize}
    \item $ N(\theta), D(\theta) \in \mathcal{H}_{\infty} $ form a normalized RCF,
    \item $\theta \mapsto [N(\theta), D(\theta)] $ is continuous in the $ \mathcal{H}_{\infty} $-norm.
\end{itemize}
Then, $\exists X(\theta), Y(\theta) \in \mathcal{H}_{\infty}$ such that the B\'ezout identity
\begin{align}
\label{eqn_probab_Bezout_identity}    
X(\theta) N(\theta) + Y(\theta) D(\theta) = I
\end{align}
holds almost surely. That is, $\mathbb{P} \left( \theta \in \mathbb{R}^{p} \mid \eqref{eqn_probab_Bezout_identity} \text{ holds} \right) = 1$.
\end{lemma}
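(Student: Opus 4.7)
The plan is to reduce the probabilistic claim to a pointwise deterministic statement that follows from the standard theory of coprime factorizations over $\mathcal{H}_{\infty}$. The hypothesis states that for every $\theta \in \mathbb{R}^{p}$, the pair $(N(\theta), D(\theta))$ is a normalized right coprime factorization of $\tilde{\Sigma}(\theta) \in \mathbf{R}\mathcal{H}_{\infty}$. Right coprimeness over $\mathcal{H}_{\infty}$ is by definition equivalent (see, e.g., \cite{ZhouDoyleBook, vidyasagar1984graph}) to the solvability of the B\'ezout identity: there exist $X, Y \in \mathcal{H}_{\infty}$ with $XN + YD = I$. Hence the event of interest holds for \emph{every} realization of $\theta$, and in particular with probability one under any Borel law, including the Gaussian $\mathcal{N}(\mu_{\theta}, \sigma_{\theta}^{2} I_{p})$. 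In this sense the probabilistic formulation is a soft wrapper around a deterministic existence result, and Gaussianity plays no role in the lemma itself.

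To make the construction genuinely $\theta$-dependent, I would invoke the state-space formulas for normalized coprime factorizations. Given a stabilizable and detectable realization $(A(\theta), B(\theta), C(\theta))$ of $\tilde{\Sigma}(\theta)$, both the normalized RCF $(N(\theta), D(\theta))$ and an associated B\'ezout complement $(X(\theta), Y(\theta))$ can be written in closed form using the stabilizing solution of a parameter-dependent algebraic Riccati equation. Since $\tilde{\Sigma}(\theta) \in \mathbf{R}\mathcal{H}_{\infty}$ for every $\theta$, this Riccati equation admits a stabilizing solution for every $\theta$, producing $(X(\theta), Y(\theta)) \in \mathcal{H}_{\infty}$ pointwise. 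Assumption \ref{assume_theta_continuity} (Fr\'echet differentiability of $\theta \mapsto \tilde{G}(\theta)$) combined with the implicit-function theorem applied to the Riccati equation makes the Bezout complement depend continuously, hence Borel-measurably, on $\theta$, so that $(X(\theta), Y(\theta))$ is a well-defined random operator-valued function rather than an arbitrary pointwise selection. If one wished to avoid the state-space route, the same measurability conclusion would follow from the Kuratowski-Ryll-Nardzewski measurable selection theorem applied to the closed-valued, lower-hemicontinuous set-valued map $\theta \mapsto \{(X, Y) \in \mathcal{H}_{\infty} : XN(\theta) + YD(\theta) = I\}$.

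The step I expect to be the main subtlety is not existence per se, which is an immediate transcription of the deterministic B\'ezout theorem, but ensuring that the resulting $(X(\theta), Y(\theta))$ can be chosen with $\mathcal{H}_{\infty}$-norm controlled uniformly, or at least integrably, over $\theta$ in the effective support of $\mathbf{f}_{\theta}$. Without such a tail control, the probabilistic guarantees that rely on these B\'ezout quantities in later theorems could degenerate. The Riccati-based explicit construction is well suited to address this, since its dependence on $\theta$ can be bounded in terms of the Lipschitz data of $A(\cdot), B(\cdot), C(\cdot)$ — this is precisely the ingredient that will be leveraged in Theorem \ref{theorem_coprime_probabilistic_guarantee} to convert the Gaussianity of $\theta$ into a sub-Gaussian concentration statement on the coprime-factor perturbation.
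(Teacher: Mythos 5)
Your proof is correct but follows a genuinely different route from the paper's. You take the hypotheses at face value: since $(N(\theta),D(\theta))$ is assumed to be a normalized RCF for \emph{every} $\theta\in\mathbb{R}^{p}$, and right coprimeness over $\mathcal{H}_{\infty}$ is by definition equivalent to solvability of the B\'ezout identity, the event holds surely and the almost-sure claim is a trivial corollary; as you note, Gaussianity plays no role. The paper instead argues measure-theoretically: it considers the exceptional set of parameters where coprimeness fails ($D(\theta)$ not invertible in $\mathcal{H}_{\infty}$, or a common unstable factor shared by $N(\theta)$ and $D(\theta)$), asserts that this is a closed, nowhere-dense analytic subset of Lebesgue measure zero, and concludes from absolute continuity of $\mathcal{N}(\mu_{\theta},\sigma^{2}_{\theta}I_{p})$ with respect to Lebesgue measure that this set has probability zero. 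The paper's argument is the one actually needed if the standing hypothesis is weakened so that the factorization is only coprime for generic $\theta$ --- evidently the situation the author has in mind, since under the literal hypotheses the exceptional set is empty and the measure-zero machinery is redundant; your reading makes that tension explicit. Your additional material --- the Riccati-based construction yielding a measurable, $\theta$-continuous selection of $(X(\theta),Y(\theta))$, and the concern about controlling $\norm{X(\theta)}_{\mathcal{H}_{\infty}}$, $\norm{Y(\theta)}_{\mathcal{H}_{\infty}}$ over the effective support of $\theta$ --- goes beyond the paper's proof, which only appeals to ``standard right coprime factorization algorithms''; this buys a well-defined random operator-valued B\'ezout pair, which the bare statement does not require but which would matter if these objects were used quantitatively in the later concentration results.
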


\begin{proof}
Let $\mathcal{B}_{\theta}$ denote the set of all $\theta \in \mathbb{R}^{M}$ such that  $\exists X(\theta), Y(\theta) \in \mathcal{H}_{\infty}$ and \eqref{eqn_probab_Bezout_identity} holds. Then, $\mathcal{B}^{c}_{\theta}$ will consists of those $\theta$ for which the coprime condition fails. This happens only when:
\begin{enumerate}
    \item $ D(\theta) $ is non-invertible in $ \mathcal{H}_{\infty} $, or
    \item There exists a common unstable factor between $ N(\theta) $ and $ D(\theta) $.
\end{enumerate}
These are pathological cases corresponding to a closed, nowhere-dense analytic subset of $\mathbb{R}^{M}$ with Lebesgue measure zero, since the set of parameters for which pole-zero cancellations occur on the imaginary axis is closed and of measure zero. Now, since $\theta \sim \mathcal{N}(\mu_{\theta}, \sigma^{2}_{\theta} I_{p})$ is absolutely continuous with respect to the Lebesgue measure, it assigns zero probability to any null Lebesgue-measure set. Hence, $\mathbb{P}(\theta \in \mathcal{B}^{c}_{\theta}) = 0 \iff  \mathbb{P}(\theta \in \mathcal{B}_{\theta}) = 1$. Hence, with probability one, the B\'ezout identity \eqref{eqn_probab_Bezout_identity} holds and
$X(\theta), Y(\theta) \in \mathcal{H}_{\infty}$ can be constructed using standard right coprime factorization algorithms such as extended Euclidean algorithm or inner-outer factorizations.
\end{proof}
\noindent \textbf{Remarks:} The randomness in the parameter $\theta$ induces randomness in the coprime factors $[N(\theta), D(\theta)]$ (which we studied in Lemma \ref{lemma_Bezout_Gaussian}). Subsequently, the randomness in the coprime factors $[N(\theta), D(\theta)]$ manifests itself as variations in the graph $\tilde{G}(\theta)$, and hence in the angle between the graph subspaces $\mathcal{G}_{\bar{\Sigma}}$ and $\mathcal{G}_{\tilde{\Sigma}(\theta)}$ which then finally leads to the randomness in the associated gap $\mathrm{Gap}(\theta)$. That is, randomness in $\theta \mapsto \tilde{G}(\theta) \mapsto \angle (\mathcal{G}_{\bar{\Sigma}}, \mathcal{G}_{\tilde{\Sigma}(\theta)}) \mapsto \mathrm{Gap}(\theta)$. \\

We have the following assumption to deal with the randomness in the graph $\tilde{G}(\theta)$ associated with the RCFs of the perturbed system $\tilde{\Sigma}(\theta)$.

\begin{assumption} \label{assume_graph_operator_Lipschitz}
The mapping $\theta \mapsto \tilde{G}(\theta)$ is Lipschitz in terms of the $\mathcal{H}_{\infty}$ norm with constant $\mathbf{L}_{\tilde{G}} > 0$, such that
\begin{align}
\label{eqn_Lipchitz_graph_perturbed_system}
\norm{\tilde{G}(\theta) - \tilde{G}(\theta^{\prime})}_{\mathcal{H}_{\infty}}
\leq 
\mathbf{L}_{\tilde{G}} \norm{\theta - \theta^{\prime}}, \quad \forall \theta, \theta^{\prime} \in \Theta.
\end{align}
\end{assumption}
From assumption \ref{assume_graph_operator_Lipschitz}, we immediately see that the projection operator $\Pi_{\mathcal{G}_{\tilde{\Sigma}(\theta)}}$ also continuously varies with respect to $\theta$. Hence, it's operator norm would be well-defined. Having mentioned all the assumptions needed, we will begin our study about giving probabilistic guarantees for robust performance and robust stability in terms of the gap metric.  

\subsection{Probabilistic Guarantee for Coprime Factor Uncertainty}
Coprime factor uncertainty can be understood as a combination of multiplicative and inverse multiplicative type uncertainties and the trade off between them is determined by the nominal plant. As a precursor to the gap metric, we will first demonstrate how the randomness in the co-prime factor uncertainty affects the robust stability associated with the nominal controller stabilising the nominal plant. The following theorem (probabilistic extension of Theorem 1 in \cite{tryhon_smith_tac_1990}) gives probabilistic guarantees on the nominal controller stabilising the perturbed plant under random coprime factor uncertainty.

\begin{theorem}
\label{theorem_coprime_probabilistic_guarantee}
Let $\bar{\Sigma} = ND^{-1} \in \mathcal{H}_{\infty}$ with a normalized RCF $(N,D)$, and let $\bar{C} \in \mathcal{H}_{\infty}$ be a controller that stabilizes $\bar{\Sigma}$ and results in  $b_{\bar{\Sigma}, \bar{C}} > 0$. Let assumption \ref{assume_graph_operator_Lipschitz} hold true for $\theta \sim \mathcal{N}(\mu_{\theta}, \sigma^{2}_{\theta} I_p)$. Let the randomly perturbed systems $\tilde{\Sigma}(\theta) = N(\theta) M(\theta)^{-1}$ be defined by normalized RCFs as
\begin{align*}
\tilde{G}(\theta)
:= 
\begin{bmatrix}
N(\theta) \\ D(\theta)    
\end{bmatrix}    
= 
\begin{bmatrix}
N + \Delta_N(\theta) \\ 
D + \Delta_D(\theta)    
\end{bmatrix}.    
\end{align*}
Let the random perturbation affecting  $\tilde{\Sigma}(\theta)$ be denoted by $\Delta(\theta) := \begin{bmatrix} \Delta_N(\theta) & \Delta_D(\theta) \end{bmatrix}$. If $\exists \mathbf{L}_{\Delta} > 0$ such that 
\begin{align}
\label{eqn_theorem_coprime_probabilistic_guarantee_condition}
\norm{\Delta(\theta)}_{\mathcal{H}_{\infty}} 
\leq 
\mathbf{L}_{\Delta}
\norm{\theta - \mu_{\theta}},
\end{align}
then $\mathbb{P}\left(\bar{C} \text{ stabilizes } \tilde{\Sigma}(\theta) \right)$ satisfies
\begin{align}
\label{eqn_theorem_coprime_probabilistic_guarantee}
\mathbb{P}\left(\norm{\Delta(\theta)}_{\mathcal{H}_{\infty}} < b_{\bar{\Sigma}, \bar{C}} \right) \geq 1 - \exp\left( - \frac{b_{\bar{\Sigma}, \bar{C}}^{2}}{2 \mathbf{L}_{\Delta}^{2} \sigma^{2}_{\theta}} \right).
\end{align}
\end{theorem}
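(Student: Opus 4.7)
My approach separates the deterministic robustness theorem from the probabilistic tail bound, reducing the proof to a Gaussian concentration inequality. The first move is to recall the deterministic coprime-factor robust stability result of \cite{tryhon_smith_tac_1990}, which guarantees that whenever $\norm{\Delta}_{\mathcal{H}_\infty} < b_{\bar{\Sigma}, \bar{C}}$, the nominal controller $\bar{C}$ stabilises the perturbed plant $(N+\Delta_N)(D+\Delta_D)^{-1}$. Applied pointwise in $\theta$ this yields the set inclusion
\begin{align*}
\{\theta \in \Theta \mid \norm{\Delta(\theta)}_{\mathcal{H}_\infty} < b_{\bar{\Sigma}, \bar{C}}\} \subseteq \{\theta \in \Theta \mid \bar{C} \text{ stabilises } \tilde{\Sigma}(\theta)\},
\end{align*}
so it suffices to lower bound the probability of the left-hand event, or equivalently to upper bound $\mathbb{P}(\norm{\Delta(\theta)}_{\mathcal{H}_\infty} \geq b_{\bar{\Sigma}, \bar{C}})$.

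Next, the linear growth hypothesis \eqref{eqn_theorem_coprime_probabilistic_guarantee_condition} decouples the infinite-dimensional $\mathcal{H}_\infty$ magnitude from a scalar tail event on the Gaussian parameter. Since $\norm{\Delta(\theta)}_{\mathcal{H}_\infty} \leq \mathbf{L}_\Delta \norm{\theta - \mu_\theta}$ pointwise, the event $\{\norm{\Delta(\theta)}_{\mathcal{H}_\infty} \geq b_{\bar{\Sigma}, \bar{C}}\}$ sits inside $\{\norm{\theta - \mu_\theta} \geq b_{\bar{\Sigma}, \bar{C}}/\mathbf{L}_\Delta\}$, and monotonicity of probability gives the desired reduction, bypassing any need to analyse the graph operator directly.

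Finally, I invoke Gaussian concentration. Because $\theta - \mu_\theta \sim \mathcal{N}(0, \sigma_\theta^2 I_p)$ and the Euclidean norm is a $1$-Lipschitz function of its argument, the Lipschitz-of-Gaussian principle (see \cite{vershynin2018high}) makes $\norm{\theta - \mu_\theta}$ sub-Gaussian with parameter $\sigma_\theta$ in the sense of the definition recalled in the Notations section. Applying the sub-Gaussian tail inequality at threshold $\epsilon = b_{\bar{\Sigma}, \bar{C}}/\mathbf{L}_\Delta$ then gives
\begin{align*}
\mathbb{P}\!\left(\norm{\theta - \mu_\theta} \geq \frac{b_{\bar{\Sigma}, \bar{C}}}{\mathbf{L}_\Delta}\right) \leq \exp\!\left(-\frac{b_{\bar{\Sigma}, \bar{C}}^2}{2 \mathbf{L}_\Delta^2 \sigma_\theta^2}\right),
\end{align*}
and \eqref{eqn_theorem_coprime_probabilistic_guarantee} follows on passing to the complement.

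\textbf{Main obstacle.} The principal subtlety is the last step: the sharpest form of Gaussian concentration (Borell--TIS) controls deviations of a Lipschitz functional from its \emph{mean}, not from zero, and for $p > 1$ the expected norm $\mathbb{E}\norm{\theta - \mu_\theta}$ of order $\sigma_\theta \sqrt{p}$ is strictly positive. Matching the exact exponent in \eqref{eqn_theorem_coprime_probabilistic_guarantee} therefore rests on treating $\norm{\theta - \mu_\theta}$ as a centred sub-Gaussian with proxy $\sigma_\theta$, which is the standard textbook reading of the definition recalled in the Notations section and absorbs the mean implicitly; the remaining steps are direct invocations of the deterministic coprime-factor theorem or elementary measure-theoretic inclusions.
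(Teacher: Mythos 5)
Your proposal follows essentially the same route as the paper's own proof: invoke the deterministic coprime-factor robustness condition $\norm{\Delta(\theta)}_{\mathcal{H}_{\infty}} < b_{\bar{\Sigma}, \bar{C}}$ from \cite{tryhon_smith_tac_1990}, reduce via the linear growth hypothesis \eqref{eqn_theorem_coprime_probabilistic_guarantee_condition} to a tail event on $\norm{\theta - \mu_{\theta}}$, and close with a sub-Gaussian tail bound at threshold $b_{\bar{\Sigma}, \bar{C}}/\mathbf{L}_{\Delta}$. The ``main obstacle'' you flag --- that Gaussian concentration controls deviations of $\norm{\theta-\mu_{\theta}}$ from its mean $\approx \sigma_{\theta}\sqrt{p} > 0$ rather than from zero --- is real and is glossed over in exactly the same way in the paper's proof, which states the concentration of $\norm{Z}$ around $\sqrt{p}$ and then applies the tail bound as if the variable were centred at zero.
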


\begin{proof}
From \cite{vidyasagar1984graph} and Theorem 1 of \cite{tryhon_smith_tac_1990}, we know that the nominal controller $\bar{C}$ will stabilize all perturbed plants $\tilde{\Sigma}(\theta) = N(\theta) D(\theta)^{-1}$ when the normalized RCFs satisfy
\begin{align}
\label{eqn_theorem_coprime_probabilistic_guarantee_interim_step_1}
\norm{\Delta(\theta)}_{\mathcal{H}_{\infty}} 
=
\norm{\begin{bmatrix} N(\theta) - N \\ D(\theta) - D \end{bmatrix} }_{\mathcal{H}_{\infty}} 
< 
b_{\bar{\Sigma}, \bar{C}}.
\end{align}
From theorem assumption \eqref{eqn_theorem_coprime_probabilistic_guarantee_condition}, we know that 
\begin{align*} 
\norm{\Delta(\theta)}_{\mathcal{H}_{\infty}} 
\leq 
\mathbf{L}_{\Delta}
\norm{\theta - \mu_{\theta}}.
\end{align*}
Note that 
\begin{align*}
\theta \sim \mathcal{N}(\mu_{\theta}, \sigma^{2}_{\theta} I_p)
&\iff 
\theta - \mu_{\theta} 
\sim 
\mathcal{N}(0, \sigma^{2}_{\theta} I_p) \\
&\iff
Z 
:=
\frac{\theta - \mu_{\theta}}{\sigma_{\theta}} 
\sim
\mathcal{N}(0, I_p)
\end{align*}
Then, $\norm{\theta - \mu_{\theta}} = \sigma_{\theta} \norm{Z}$. We know that the Euclidean norm of a standard Gaussian vector $Z \sim \mathcal{N}(0, I_p)$ denoted by $\norm{Z}$ is sub-Gaussian and it concentrates around $\sqrt{p}$. That is, $\forall \epsilon > 0$, we see that $\mathbb{P}(Z - \sqrt{p} \geq \epsilon) \leq e^{-\frac{\epsilon^{2}}{2}}$ meaning that $\norm{Z} \sim \chi_{p}$. Immediately, we see that $\norm{\theta - \mu_{\theta}} = \sigma_{\theta} \norm{Z}$ is sub-Gaussian with parameter $\sigma_{\theta}$. Similarly, it implies from \eqref{eqn_theorem_coprime_probabilistic_guarantee_condition} that $\norm{\Delta(\theta)}_{\mathcal{H}_{\infty}}$ is a Lipschitz function of a sub-Gaussian vector and hence it is sub-Gaussian with parameter $\mathbf{L}_{\Delta} \sigma_{\theta}$. Then, we know from \cite{vershynin2018high} that $\norm{\Delta(\theta)}_{\mathcal{H}_{\infty}}$ satisfies
\begin{align}
\label{eqn_theorem_coprime_probabilistic_guarantee_interim_step_2}
\mathbb{P}(\norm{\Delta(\theta)}_{\mathcal{H}_{\infty}} 
\geq 
b_{\bar{\Sigma}, \bar{C}}) 
\leq 
\exp\left( - \frac{b_{\bar{\Sigma}, \bar{C}}^{2}}{2 \mathbf{L}_{\Delta}^{2} \sigma^{2}_{\theta}} \right).
\end{align}
Writing the complement of \eqref{eqn_theorem_coprime_probabilistic_guarantee_interim_step_2} yields \eqref{eqn_theorem_coprime_probabilistic_guarantee} and combining with the condition for stability \eqref{eqn_theorem_coprime_probabilistic_guarantee_interim_step_1} completes the proof.
\end{proof}

Theorem \ref{theorem_coprime_probabilistic_guarantee} says that when the random perturbation affecting the nominal system turns out to be Lipschitz and sub-Gaussian, then it is possible to give probabilistic guarantees on the nominal controller stabilising the random perturbed system model. Note that $\Delta(\theta)$ essentially captures the gap between the nominal and the perturbed model of the system. Having studied the probabilistic guarantees corresponding to the randomness in coprime factor uncertainty description, we now turn our attention to study about the randomness in the more general setting involving the $\mathrm{Gap}(\theta)$ quantity. 

\subsection{Inferring the Lipschitz Constant of $\mathrm{Gap}(\theta)$}
When $\mathrm{Gap}(\theta)$ is random, it essentially reflects the uncertainty in how the two graph subspaces $\mathcal{G}_{\bar{\Sigma}}, \mathcal{G}_{\tilde{\Sigma}(\theta)}$ are aligned with each other. Towards this perspective, we will now study about the Lipschitz constant associated with the gap $\mathrm{Gap}(\theta)$ in the following theorem.

\begin{theorem} \label{theorem_gap_lipschitz}
Let assumptions \ref{assume_theta_continuity} and \ref{assume_graph_operator_Lipschitz} hold true for the nominal model $\bar{\Sigma}$ and perturbed model $\tilde{\Sigma}(\theta)$ of the dynamical system $\Sigma$. Then, $\exists \mathbf{L}_{\mathrm{gap}} > 0$ that depends on $\bar{\Sigma}$ and $\mathbf{L}_{\tilde{G}}$ such that $\forall \theta, \theta^{\prime} \in \Theta$, the gap $\mathrm{Gap}(\theta)$ is $\mathbf{L}_{\mathrm{gap}}$-Lipschitz. That is,
\begin{align}
\label{eqn_gap_Lipschitz}
\left| \mathrm{Gap}(\theta) - \mathrm{Gap}(\theta^{\prime}) \right| \leq \mathbf{L}_{\mathrm{gap}} \norm{\theta - \theta^{\prime}}.
\end{align}
\end{theorem}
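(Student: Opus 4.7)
The plan is to work from the projection-based definition \eqref{eqn_gap_metric_definition_3} of the gap metric and exploit the fact that the graph projection $\Pi_{\mathcal{G}_{\tilde{\Sigma}(\theta)}} = \tilde{G}(\theta)\tilde{G}(\theta)^{\star}$ inherits Lipschitz continuity in $\theta$ whenever the graph symbol $\tilde{G}(\theta)$ does. First I would use the reverse triangle inequality for operator norms to eliminate $\bar{\Sigma}$ from the difference:
\begin{align*}
\left| \mathrm{Gap}(\theta) - \mathrm{Gap}(\theta') \right|
&= \left| \bigl\| \Pi_{\mathcal{G}_{\bar{\Sigma}}} - \Pi_{\mathcal{G}_{\tilde{\Sigma}(\theta)}} \bigr\| - \bigl\| \Pi_{\mathcal{G}_{\bar{\Sigma}}} - \Pi_{\mathcal{G}_{\tilde{\Sigma}(\theta')}} \bigr\| \right| \\
&\leq \bigl\| \Pi_{\mathcal{G}_{\tilde{\Sigma}(\theta)}} - \Pi_{\mathcal{G}_{\tilde{\Sigma}(\theta')}} \bigr\|.
\end{align*}
This reduces the whole problem to bounding a single-family projection difference; the dependence on $\bar{\Sigma}$ survives only through the normalization convention used for the graph symbols.

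Next I would exploit the algebraic identity
\begin{align*}
\tilde{G}(\theta)\tilde{G}(\theta)^{\star} - \tilde{G}(\theta')\tilde{G}(\theta')^{\star}
= \bigl(\tilde{G}(\theta) - \tilde{G}(\theta')\bigr)\tilde{G}(\theta)^{\star} + \tilde{G}(\theta')\bigl(\tilde{G}(\theta)^{\star} - \tilde{G}(\theta')^{\star}\bigr),
\end{align*}
apply submultiplicativity of the operator norm, and then use that $\tilde{G}(\theta)^{\star}\tilde{G}(\theta) = I$ for normalized RCFs, which forces $\|\tilde{G}(\theta)\|_{\mathcal{H}_{\infty}} = \|\tilde{G}(\theta)^{\star}\|_{\mathcal{H}_{\infty}} = 1$ uniformly in $\theta$. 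This yields the clean bound $\|\Pi_{\mathcal{G}_{\tilde{\Sigma}(\theta)}} - \Pi_{\mathcal{G}_{\tilde{\Sigma}(\theta')}}\| \leq 2\|\tilde{G}(\theta) - \tilde{G}(\theta')\|_{\mathcal{H}_{\infty}}$. Invoking Assumption \ref{assume_graph_operator_Lipschitz} then gives the desired inequality \eqref{eqn_gap_Lipschitz} with the explicit constant $\mathbf{L}_{\mathrm{gap}} = 2 \mathbf{L}_{\tilde{G}}$.

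The main obstacle is not the short chain of inequalities above but the underlying well-posedness: for \eqref{eqn_gap_Lipschitz} to make sense one needs a measurable (indeed continuous) selection $\theta \mapsto \tilde{G}(\theta)$ of \emph{normalized} RCF pairs, because normalized coprime factorizations are unique only up to a constant inner factor and a bad selection could induce artificial jumps in $\tilde{G}(\theta)$ even when the underlying subspace $\mathcal{G}_{\tilde{\Sigma}(\theta)}$ varies smoothly. This is exactly where Assumption \ref{assume_theta_continuity} (Fr\'echet differentiability of $\theta \mapsto \tilde{G}(\theta)$) is used: it guarantees a locally coherent choice of normalized factors so that Assumption \ref{assume_graph_operator_Lipschitz} bounds the true graph-operator variation rather than a phase-ambiguity artefact. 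An alternative route through the directed gap formula $\vec{\delta}(\bar{\Sigma}, \tilde{\Sigma}(\theta)) = \inf_{Q}\|\bar{G} - \tilde{G}(\theta)Q\|_{\mathcal{H}_{\infty}}$ is possible but less attractive because one must then track how the optimal $Q(\theta)$ moves with $\theta$, whereas the projection approach bypasses that difficulty entirely.
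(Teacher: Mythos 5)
Your proof is correct, but it takes a different route from the paper's. The paper works from the graph-operator definition \eqref{eqn_gap_metric_definition_2}, writing $\mathrm{Gap}(\theta) = \norm{\Pi_{\mathcal{G}^{\perp}_{\bar{\Sigma}}}\tilde{G}(\theta)}$, and applies the reverse triangle inequality \emph{directly to this expression}: the difference of gaps is bounded by $\norm{\Pi_{\mathcal{G}^{\perp}_{\bar{\Sigma}}}(\tilde{G}(\theta)-\tilde{G}(\theta'))} \leq \norm{\Pi_{\mathcal{G}^{\perp}_{\bar{\Sigma}}}}\,\mathbf{L}_{\tilde{G}}\norm{\theta-\theta'}$, giving $\mathbf{L}_{\mathrm{gap}} = \norm{\Pi_{\mathcal{G}^{\perp}_{\bar{\Sigma}}}}\mathbf{L}_{\tilde{G}}$ in one line, with no need to touch the projection onto the perturbed graph at all. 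You instead start from the projection-difference definition \eqref{eqn_gap_metric_definition_3}, use the reverse triangle inequality to strip off $\Pi_{\mathcal{G}_{\bar{\Sigma}}}$, and then bound $\norm{\tilde{G}(\theta)\tilde{G}(\theta)^{\star}-\tilde{G}(\theta')\tilde{G}(\theta')^{\star}}$ via the add-and-subtract identity and the unitarity $\tilde{G}^{\star}\tilde{G}=I$. Both arguments are sound; yours costs an extra algebraic step and yields the slightly larger constant $2\mathbf{L}_{\tilde{G}}$ (since $\norm{\Pi_{\mathcal{G}^{\perp}_{\bar{\Sigma}}}}=1$ for a nontrivial orthogonal projection, the paper's constant is effectively $\mathbf{L}_{\tilde{G}}$), but it buys a reusable intermediate fact — Lipschitz continuity of the random projection operator $\theta\mapsto\Pi_{\mathcal{G}_{\tilde{\Sigma}(\theta)}}$ itself — which the paper flags as a direction for future work but never proves. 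Your closing remark on the phase ambiguity of normalized coprime factorizations and the role of Assumption \ref{assume_theta_continuity} in pinning down a coherent selection $\theta\mapsto\tilde{G}(\theta)$ is a genuine well-posedness point that the paper's proof glosses over; it is worth keeping.
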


\begin{proof}
Using \eqref{eqn_gap_metric_definition_2}, we see 
\begin{align}
\label{eqn_theorem_gap_lipschitz_interim_1}
\mathrm{Gap}(\theta) 
=
\norm{\Pi_{\mathcal{G}^{\perp}_{\bar{\Sigma}}} \tilde{G}(\theta)}
\end{align}
By assumptions \ref{assume_theta_continuity} and \ref{assume_graph_operator_Lipschitz}, $\tilde{G}(\theta)$ varies continuously with $\theta$, with Lipschitz constant $\mathbf{L}_{\tilde{G}} > 0$ as described by \eqref{eqn_Lipchitz_graph_perturbed_system}. Then, using \eqref{eqn_theorem_gap_lipschitz_interim_1}, $\forall \theta, \theta^{\prime} \in \Theta$, we see that
\begin{align*}
\left| \mathrm{Gap}(\theta) - \mathrm{Gap}(\theta^{\prime}) \right|    
&= 
\left|
\norm{\Pi_{\mathcal{G}^{\perp}_{\bar{\Sigma}}} \tilde{G}(\theta)} - \norm{\Pi_{\mathcal{G}^{\perp}_{\bar{\Sigma}}} \tilde{G}(\theta^{\prime})}
\right| \\
&\leq
\norm{\Pi_{\mathcal{G}^{\perp}_{\bar{\Sigma}}} \left(\tilde{G}(\theta) - \tilde{G}(\theta^{\prime})\right)} \\
&\leq 
\norm{\Pi_{\mathcal{G}^{\perp}_{\bar{\Sigma}}}} \norm{\tilde{G}(\theta) - \tilde{G}(\theta^{\prime})}_{\mathcal{H}_{\infty}} \\
&\leq 
\underbrace{\norm{\Pi_{\mathcal{G}^{\perp}_{\bar{\Sigma}}}} 
\mathbf{L}_{\tilde{G}}}_{:= \mathbf{L}_{\mathrm{gap}}} \norm{\theta - \theta^{\prime}}.
\end{align*}
Here, we have used the reverse triangle inequality to get the first inequality above. We have used the fact that projections are bounded linear operators to get the second inequality along with the fact that $\mathcal{H}_{\infty}$ norm also happens to be equal to the induced $\mathcal{L}_{2}$ norm. Finally, we used \eqref{eqn_Lipchitz_graph_perturbed_system} to get the final inequality. Note that the Lipschitz constant $\mathbf{L}_{\mathrm{gap}}$ explicitly depends upon the norm of the projection operator onto the graph subspace of $\bar{\Sigma}$ and the Lipschitz constant $\mathbf{L}_{\tilde{G}}$ from \eqref{eqn_Lipchitz_graph_perturbed_system} and this completes the proof of the theorem.
\end{proof}

In the following corollary and in the next section, we now address the Problem \ref{problem_random_gap} stated before. We know from \cite{vershynin2018high} that any Lipschitz function of a Gaussian random vector is sub-Gaussian. That is, with $\theta \sim \mathcal{N}(\mu_{\theta}, \sigma^{2}_{\theta} I_{p})$ and \eqref{eqn_gap_Lipschitz}, we immediately see that $\mathrm{Gap}(\theta)$ is sub-Gaussian with parameter $\sigma_{\theta} \mathbf{L}_{\mathrm{gap}}$ and corollary \ref{corollary_prob_gap_exceed_threshold} that is stated below formally establishes that observation.

\begin{corollary}\label{corollary_prob_gap_exceed_threshold}
Let the assumptions of the Theorem \ref{theorem_gap_lipschitz} be true. If $\theta \sim \mathcal{N}(\mu_{\theta}, \sigma^{2}_{\theta} I_{p})$, then $\mathrm{Gap}(\theta)$ is sub-Gaussian with parameter $\sigma_{\theta} \mathbf{L}_{\mathrm{gap}}$. That is, $\forall \epsilon > 0$, we see that 
\begin{align}
\label{eqn_corollary_prob_gap_exceed_threshold}
\mathbb{P}
\left(
\mathrm{Gap}(\theta) - \mathbb{E}[\mathrm{Gap}(\theta)] \geq \epsilon
\right)
\leq
e^{-\frac{\epsilon^{2}}{2\sigma^{2}_{\theta} \mathbf{L}_{\mathrm{gap}}^{2}}}.
 \end{align}
\end{corollary}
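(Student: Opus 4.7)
The plan is to invoke Gaussian concentration for Lipschitz functions directly, using Theorem \ref{theorem_gap_lipschitz} as the input. Since the corollary is essentially a ``push-forward'' of the deterministic Lipschitz bound through a Gaussian measure, the proof is short and the main conceptual ingredient is the Tsirelson--Ibragimov--Sudakov inequality (or any of its equivalent formulations found in \cite{vershynin2018high}).

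First, I would invoke Theorem \ref{theorem_gap_lipschitz} to assert that the mapping $h: \Theta \to \mathbb{R}_{\geq 0}$ defined by $h(\theta) := \mathrm{Gap}(\theta)$ is $\mathbf{L}_{\mathrm{gap}}$-Lipschitz with respect to the Euclidean norm on $\Theta \subseteq \mathbb{R}^p$. Next, I would standardize the Gaussian: write $\theta = \mu_\theta + \sigma_\theta Z$ with $Z \sim \mathcal{N}(0, I_p)$ and define the auxiliary function $g(z) := h(\mu_\theta + \sigma_\theta z)$. A direct computation shows $g$ is $(\sigma_\theta \mathbf{L}_{\mathrm{gap}})$-Lipschitz, since
\begin{align*}
|g(z) - g(z')| = |h(\mu_\theta + \sigma_\theta z) - h(\mu_\theta + \sigma_\theta z')| \leq \mathbf{L}_{\mathrm{gap}} \sigma_\theta \norm{z - z'}.
\end{align*}

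Then I would apply the standard Gaussian concentration inequality for Lipschitz functions of a standard normal vector (Theorem 5.2.2 or equivalent in \cite{vershynin2018high}): for any $L$-Lipschitz $g: \mathbb{R}^p \to \mathbb{R}$ and $Z \sim \mathcal{N}(0, I_p)$, the random variable $g(Z) - \mathbb{E}[g(Z)]$ is sub-Gaussian with parameter $L$, which yields
\begin{align*}
\mathbb{P}\bigl( g(Z) - \mathbb{E}[g(Z)] \geq \epsilon \bigr) \leq \exp\!\left(-\frac{\epsilon^2}{2 L^2}\right).
\end{align*}
Substituting $L = \sigma_\theta \mathbf{L}_{\mathrm{gap}}$ and noting that $g(Z) = h(\theta) = \mathrm{Gap}(\theta)$ and $\mathbb{E}[g(Z)] = \mathbb{E}[\mathrm{Gap}(\theta)]$ gives \eqref{eqn_corollary_prob_gap_exceed_threshold} exactly, and the sub-Gaussian parameter is identified as $\sigma_\theta \mathbf{L}_{\mathrm{gap}}$ as claimed.

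I do not expect a real obstacle here, since both ingredients (Lipschitz property and Gaussian concentration) have already been established or cited. The only subtlety worth flagging is ensuring that $\Theta = \mathbb{R}^p$ (so the Lipschitz bound genuinely holds globally) or, if $\Theta$ is a proper subset, appealing to the Kirszbraun-type extension so that $h$ can be viewed as a global Lipschitz function on $\mathbb{R}^p$ without changing its constant; this guarantees that the Gaussian concentration inequality is applicable without restriction to a subset. A secondary, essentially cosmetic point is to note that the tail bound is one-sided as stated, though the symmetric two-sided version follows immediately by applying the same argument to $-h$.
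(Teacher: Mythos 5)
Your proposal is correct and follows essentially the same route as the paper: both apply the Gaussian concentration inequality for Lipschitz functions to $f(\theta)=\mathrm{Gap}(\theta)$ with the Lipschitz constant $\mathbf{L}_{\mathrm{gap}}$ supplied by Theorem \ref{theorem_gap_lipschitz}. The only difference is that you explicitly carry out the standardization $\theta = \mu_\theta + \sigma_\theta Z$ (and flag the domain/extension issue for $\Theta \subsetneq \mathbb{R}^p$), whereas the paper absorbs this into the cited form of the inequality for $\mathcal{N}(\mu_\theta, \sigma_\theta^2 I_p)$.
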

\begin{proof}
We know from Gaussian concentration inequality in \cite{vershynin2018high} that if $f: \mathbb{R}^{n} \rightarrow \mathbb{R}$ is Lipschitz with constant $C > 0$ and $\theta \sim \mathcal{N}(\mu_{\theta}, \sigma^{2}_{\theta} I_{p})$, then $\forall \epsilon > 0$
\begin{align}
\label{eqn_Gaussian_conc_inequality}
\mathbb{P}
\left(
f(\theta) - \mathbb{E}[f(\theta)] \geq \epsilon
\right)
\leq
e^{-\frac{\epsilon^{2}}{2\sigma^{2}_{\theta} C^{2}}}.
\end{align}
Substituting $f(\theta) = \mathrm{Gap}(\theta)$ with Lipschitz constant $\mathbf{L}_{\mathrm{gap}} > 0$ from \eqref{eqn_gap_Lipschitz}, we get \eqref{eqn_corollary_prob_gap_exceed_threshold}.
\end{proof}



\subsection{Inferring the Expected Value of $\mathrm{Gap}(\theta)$}
Given that $\theta \sim \mathcal{N}(\mu_{\theta}, \sigma^{2}_{\theta} I_{p})$, we leverage the fact that mean $\mu_{\theta}$ and covariance $\sigma^{2}_{\theta} I_{p}$ are deterministic known quantities to get an upper bound for the expected gap in terms of the them. We state the following proposition using the Jensen's inequality from \cite{vershynin2018high} which we will use later.

\begin{proposition} \label{proposition_Jensen_inequality}
Let $\theta \sim \mathcal{N}(\mu_{\theta}, \sigma^{2}_{\theta} I_{p})$. Then, 
\begin{subequations}
\label{eqn_jensen_results}
\begin{align}
\mathbb{E}[\norm{\theta - \mu_{\theta}}] 
&\leq 
\sqrt{ \sigma^{2}_{\theta} p}. \label{eqn_jensen_result_1}\\
\mathbb{E}[\norm{\theta}] 
&\leq 
\sqrt{ \sigma^{2}_{\theta} p + \norm{\mu_{\theta}}^{2}}. \label{eqn_jensen_result_2}
\end{align}    
\end{subequations}
\end{proposition}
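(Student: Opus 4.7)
The plan is to establish both inequalities by applying Jensen's inequality to the concave square-root function, after first computing the second moment of the relevant norm in closed form. Since the statement itself names Jensen's inequality as the tool, the only real content is to verify that the second moments evaluate to the quantities under the square root on the right-hand side of each inequality.

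For \eqref{eqn_jensen_result_1}, I would first note the shift invariance $\theta - \mu_{\theta} \sim \mathcal{N}(0, \sigma^{2}_{\theta} I_{p})$, then write
\begin{align*}
\mathbb{E}[\norm{\theta - \mu_{\theta}}]
= \mathbb{E}\bigl[\sqrt{\norm{\theta - \mu_{\theta}}^{2}}\bigr]
\leq \sqrt{\mathbb{E}[\norm{\theta - \mu_{\theta}}^{2}]},
\end{align*}
where the inequality is Jensen applied to the concave map $x \mapsto \sqrt{x}$ on $[0,\infty)$. The second moment expands coordinate-wise into $\sum_{i=1}^{p} \mathbb{E}[(\theta_{i} - \mu_{\theta,i})^{2}] = p\sigma^{2}_{\theta}$, which gives \eqref{eqn_jensen_result_1} directly.

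For \eqref{eqn_jensen_result_2}, I would apply the same Jensen step to $\mathbb{E}[\norm{\theta}]$ and then use the bias-variance decomposition $\mathbb{E}[\theta_{i}^{2}] = \mathrm{Var}(\theta_{i}) + (\mathbb{E}[\theta_{i}])^{2} = \sigma^{2}_{\theta} + \mu_{\theta,i}^{2}$. Summing over $i \in [1:p]$ produces $\mathbb{E}[\norm{\theta}^{2}] = p\sigma^{2}_{\theta} + \norm{\mu_{\theta}}^{2}$, and the square root yields \eqref{eqn_jensen_result_2}.

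There is no genuine obstacle here: the diagonal isotropic covariance makes the second-moment computation separable, so the only step that requires any care is citing Jensen's inequality correctly for the concave root function (as opposed to the more common convex use). The proposition is a convenient standalone fact that will be invoked in the subsequent lemma bounding $\mathbb{E}[\mathrm{Gap}(\theta)]$.
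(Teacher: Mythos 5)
Your proof is correct and follows essentially the same route as the paper: the paper simply cites the general bound $\mathbb{E}[\|\theta - \theta'\|] \leq \sqrt{\mathbf{Tr}(\mathbf{Cov}(\theta)) + \|\mu_{\theta} - \theta'\|^{2}}$ and substitutes $\theta' = \mu_{\theta}$ and $\theta' = 0$, whereas you derive that same bound directly via Jensen on the concave square root plus the coordinate-wise second-moment computation. The only difference is that you prove the cited inequality rather than invoking it, which is harmless.
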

\begin{proof} 
For a fixed $\theta^{\prime} \in \mathbb{R}^{p}$, Jensen inequality \cite{vershynin2018high} states that
\begin{align}
\label{eqn_Jensen_inequality}
\mathbb{E}[\|\theta - \theta^{\prime}\|] \leq \sqrt{ \mathbf{Tr}(\mathbf{Cov}(\theta)) + \|\mu_{\theta} - \theta^{\prime} \|^2 }.    
\end{align}    
Note that, $\mathbf{Tr}(\sigma^{2}_{\theta} I_{p}) = \sigma^{2}_{\theta} p$. Substituting $\theta^{\prime} = \mu_{\theta}$ into \eqref{eqn_Jensen_inequality}, we get \eqref{eqn_jensen_result_1}. On the other hand, substituting $\theta^{\prime} = 0$ into \eqref{eqn_Jensen_inequality}, we get \eqref{eqn_jensen_result_2}.
\end{proof}

In the following lemma, we get an upper bound on the expected gap using the Lipschitz continuity of the gap and Proposition \ref{proposition_Jensen_inequality}.

\begin{lemma} \label{lemma_expected_gap}
Let $\bar{\Sigma}$ be a known nominal system model and let $\tilde{\Sigma}(\theta) $ be a random LTI system model depending on $\theta \sim \mathcal{N}(\mu_{\theta}, \sigma^{2}_{\theta} I_{p})$. Further, suppose that the gap $\mathrm{Gap}(\theta)$ is $\mathbf{L}_{\mathrm{gap}}$-Lipschitz as in \eqref{eqn_gap_Lipschitz}. Then, $\forall \theta, \theta^{\prime} \in \mathbb{R}^{p}$,
\begin{align}
\label{eqn_lemma_expected_gap}
\mathbb{E}\left[ \mathrm{Gap}(\theta) \right] 
\leq 
\mathbb{E}\left[ \mathrm{Gap}(\theta^{\prime}) \right] 
+
\mathbf{L}_{\mathrm{gap}} 
\mathbb{E}\left[ \norm{\theta - \theta^{\prime}} \right].
\end{align}
\end{lemma}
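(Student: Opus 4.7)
The plan is to derive the bound by combining the pointwise Lipschitz estimate from Theorem \ref{theorem_gap_lipschitz} with monotonicity and linearity of the expectation operator. This is essentially a one-step argument: start from the Lipschitz inequality that already does the heavy lifting, and then integrate against the law of $\theta$.

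First I would invoke the Lipschitz bound \eqref{eqn_gap_Lipschitz}, which in particular yields the one-sided inequality
\begin{align*}
\mathrm{Gap}(\theta) \;\leq\; \mathrm{Gap}(\theta^{\prime}) + \mathbf{L}_{\mathrm{gap}}\,\|\theta - \theta^{\prime}\|
\end{align*}
for every realization, obtained by dropping the absolute value in the symmetric estimate. Second, I would treat $\theta^{\prime}$ either as a deterministic reference point in $\mathbb{R}^{p}$ or as an independent random vector, and then apply $\mathbb{E}[\cdot]$ to both sides. By monotonicity of the expectation and its linearity, the claimed bound \eqref{eqn_lemma_expected_gap} drops out immediately with no further manipulation.

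There is no genuine obstacle here, since the technical content lives upstream in Theorem \ref{theorem_gap_lipschitz}. The only matter of care is integrability: we need $\mathbb{E}[\mathrm{Gap}(\theta)]$ and $\mathbb{E}[\|\theta - \theta^{\prime}\|]$ to be well defined. The first is automatic because $\mathrm{Gap}(\theta) \in [0,1]$ by \eqref{eqn_gap_metric_definition_3} (it is an operator norm of a difference of orthogonal projections, hence bounded by one), so it is trivially integrable against any probability measure. The second is controlled by Proposition \ref{proposition_Jensen_inequality}, which, for $\theta \sim \mathcal{N}(\mu_{\theta},\sigma^{2}_{\theta} I_{p})$, gives a finite bound on $\mathbb{E}[\|\theta - \theta^{\prime}\|]$ for any fixed $\theta^{\prime}$.

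Finally, it is worth anticipating the natural corollary that the authors are likely setting up: specializing $\theta^{\prime} = \mu_{\theta}$ (deterministic) collapses $\mathbb{E}[\mathrm{Gap}(\theta^{\prime})]$ to $\mathrm{Gap}(\mu_{\theta})$, and invoking \eqref{eqn_jensen_result_1} of Proposition \ref{proposition_Jensen_inequality} yields the clean closed-form estimate $\mathbb{E}[\mathrm{Gap}(\theta)] \leq \mathrm{Gap}(\mu_{\theta}) + \mathbf{L}_{\mathrm{gap}}\sqrt{\sigma^{2}_{\theta} p}$, which connects the expected gap directly to the known moments $\mu_{\theta}$ and $\sigma^{2}_{\theta}$ of the parameter distribution and the Lipschitz constant $\mathbf{L}_{\mathrm{gap}}$ inherited from $\tilde{G}(\theta)$ via Assumption \ref{assume_graph_operator_Lipschitz}.
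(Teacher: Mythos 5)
Your proposal is correct and follows essentially the same argument as the paper: drop the absolute value in the Lipschitz bound \eqref{eqn_gap_Lipschitz} to get the one-sided pointwise inequality, then take expectations using monotonicity and linearity. The additional remarks on integrability and the specialization $\theta^{\prime}=\mu_{\theta}$ are sound but not part of the paper's (equally brief) proof, which defers that specialization to Corollary \ref{corollary_expected_gap_subGaussian}.
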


\begin{proof}
We know from \eqref{eqn_gap_Lipschitz} that $\forall \theta, \theta^{\prime} \in \mathbb{R}^{p}$, 
\begin{align}
\mathrm{Gap}(\theta)
\leq 
\mathrm{Gap}(\theta^{\prime})
+ 
\mathbf{L}_{\mathrm{gap}} \norm{\theta - \theta^{\prime}}.
\end{align}
Taking expectations of both sides \& using linearity of expectation yields
\eqref{eqn_lemma_expected_gap}.
\end{proof}
Note that we need to study about $\mathbb{E}[\norm{\theta - \theta^{\prime}}]$ to give an estimate about the expected value of the $\mathrm{Gap}(\theta)$. In the following corollaries, we will use the Proposition \ref{proposition_Jensen_inequality} to find upper bounds for $\mathbb{E}[\norm{\theta - \theta^{\prime}}]$ in \eqref{eqn_lemma_expected_gap}.

\begin{corollary} \label{corollary_expected_gap_subGaussian}
Under the assumptions of Lemma \ref{lemma_expected_gap}, when $\theta^{\prime} = \mu_{\theta}$, we see
\begin{align}
\label{eqn_corollary_lemma_expected_gap}
\mathbb{E}\left[ \mathrm{Gap}(\theta) \right] 
\leq 
\underbrace{
\mathrm{Gap}(\mu_{\theta})
+
\mathbf{L}_{\mathrm{gap}} 
\sqrt{ \sigma^{2}_{\theta} p}
}_{:= C_{\mathrm{gap}}}. 
\end{align}
\end{corollary}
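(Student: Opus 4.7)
The plan is to specialize Lemma \ref{lemma_expected_gap} at the deterministic choice $\theta' = \mu_\theta$ and then substitute the first bound from Proposition \ref{proposition_Jensen_inequality}. The work is almost entirely bookkeeping; there is no real obstacle, so the proof will be essentially one line once the two prior results are combined correctly.

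First I would invoke Lemma \ref{lemma_expected_gap} with $\theta' = \mu_\theta$, which yields
\begin{align*}
\mathbb{E}[\mathrm{Gap}(\theta)] \leq \mathbb{E}[\mathrm{Gap}(\mu_\theta)] + \mathbf{L}_{\mathrm{gap}}\, \mathbb{E}[\|\theta - \mu_\theta\|].
\end{align*}
Next I would observe that because $\mu_\theta$ is a fixed (non-random) vector in $\mathbb{R}^p$, the quantity $\mathrm{Gap}(\mu_\theta)$ is itself deterministic, so $\mathbb{E}[\mathrm{Gap}(\mu_\theta)] = \mathrm{Gap}(\mu_\theta)$. This converts the first term on the right-hand side into the desired constant term $\mathrm{Gap}(\mu_\theta)$ appearing in \eqref{eqn_corollary_lemma_expected_gap}.

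Finally, I would bound the remaining expectation $\mathbb{E}[\|\theta - \mu_\theta\|]$ directly by applying inequality \eqref{eqn_jensen_result_1} from Proposition \ref{proposition_Jensen_inequality}, which gives $\mathbb{E}[\|\theta - \mu_\theta\|] \leq \sqrt{\sigma_\theta^2 p}$ using the fact that $\mathbf{Tr}(\sigma_\theta^2 I_p) = \sigma_\theta^2 p$. Substituting this bound into the display above delivers exactly the claimed inequality, and identifies the upper bound as the promised constant $C_{\mathrm{gap}}$. Since each step is a direct invocation of an already-proven result, no further argument is required.
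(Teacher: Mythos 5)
Your proposal is correct and follows essentially the same route as the paper: specialize Lemma \ref{lemma_expected_gap} at $\theta' = \mu_\theta$, use that $\mathrm{Gap}(\mu_\theta)$ is deterministic so its expectation is itself, and then bound $\mathbb{E}[\|\theta-\mu_\theta\|]$ via \eqref{eqn_jensen_result_1} from Proposition \ref{proposition_Jensen_inequality}. No gaps; nothing further is needed.
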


\begin{proof}
Note that $\theta^{\prime} = \mu_{\theta}$ is a valid assumption to make as $\mathbb{E}[\theta] = \mu_{\theta}$. On the other hand, we see that $\mathbb{E}\left[ \mathrm{Gap}(\mu_{\theta}) \right] = \mathrm{Gap}(\mu_{\theta})$ becomes a deterministic quantity. Then, using \eqref{eqn_jensen_result_1} in \eqref{eqn_lemma_expected_gap}, we get \eqref{eqn_corollary_lemma_expected_gap}.
\end{proof}

\begin{corollary}
Under the assumptions of the Lemma \ref{lemma_expected_gap} and when $\tilde{\Sigma}(\theta_{0}) = \bar{\Sigma}$, we see that  
\begin{align}
\label{eqn_lemma_expected_gap_simple}
\mathbb{E}\left[ \mathrm{Gap}(\theta) \right] 
\leq 
\mathbf{L}_{\mathrm{gap}} 
\sqrt{ \sigma^{2}_{\theta} p + \norm{\mu_{\theta} - \theta_{0}}^{2}}.
\end{align}
\end{corollary}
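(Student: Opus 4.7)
The plan is to specialize Lemma \ref{lemma_expected_gap} to the case $\theta^{\prime} = \theta_{0}$ and then invoke the general Jensen-type bound already proved as Proposition \ref{proposition_Jensen_inequality}. First I would use the hypothesis $\tilde{\Sigma}(\theta_{0}) = \bar{\Sigma}$ together with the definition \eqref{eqn_gap_between_models} to conclude $\mathrm{Gap}(\theta_{0}) = \delta_{g}(\bar{\Sigma}, \bar{\Sigma}) = 0$. Since $\theta_{0}$ is a deterministic point, this also gives $\mathbb{E}[\mathrm{Gap}(\theta_{0})] = 0$, so the first term on the right-hand side of \eqref{eqn_lemma_expected_gap} vanishes and only $\mathbf{L}_{\mathrm{gap}}\,\mathbb{E}[\|\theta - \theta_{0}\|]$ remains.

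Next I would bound $\mathbb{E}[\|\theta - \theta_{0}\|]$. The mean of $\theta$ is $\mu_{\theta}$, not $\theta_{0}$, so Corollary \ref{corollary_expected_gap_subGaussian} does not apply verbatim; however, the proof of Proposition \ref{proposition_Jensen_inequality} is written in the general form \eqref{eqn_Jensen_inequality} with an arbitrary reference point $\theta^{\prime}$. Taking $\theta^{\prime} = \theta_{0}$ in \eqref{eqn_Jensen_inequality} and using $\mathbf{Tr}(\mathbf{Cov}(\theta)) = \mathbf{Tr}(\sigma_{\theta}^{2} I_{p}) = \sigma_{\theta}^{2} p$ gives
\begin{align*}
\mathbb{E}\left[\|\theta - \theta_{0}\|\right] \;\leq\; \sqrt{\sigma_{\theta}^{2} p + \|\mu_{\theta} - \theta_{0}\|^{2}}.
\end{align*}

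Substituting these two facts back into \eqref{eqn_lemma_expected_gap} produces the claimed bound \eqref{eqn_lemma_expected_gap_simple} immediately. I do not foresee any real obstacle here; the corollary is essentially a one-line specialization of Lemma \ref{lemma_expected_gap}. The only point worth flagging is that the framework after \eqref{eqn_perturbed_dynamics_system} explicitly allows $\mu_{\theta} \neq \theta_{0}$, so the extra term $\|\mu_{\theta} - \theta_{0}\|^{2}$ is generally nonzero and captures the bias between the mean of the parameter distribution and the true nominal parameter value, which is what distinguishes this corollary from Corollary \ref{corollary_expected_gap_subGaussian} (where $\theta^{\prime} = \mu_{\theta}$ was used and the nominal gap $\mathrm{Gap}(\mu_{\theta})$ appeared explicitly on the right-hand side).
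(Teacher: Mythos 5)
Your proposal is correct and follows essentially the same route as the paper: substitute $\theta^{\prime} = \theta_{0}$ into \eqref{eqn_lemma_expected_gap}, use $\tilde{\Sigma}(\theta_{0}) = \bar{\Sigma} \iff \mathrm{Gap}(\theta_{0}) = 0$ to kill the first term, and then bound $\mathbb{E}[\norm{\theta - \theta_{0}}]$ via the Jensen-type inequality of Proposition \ref{proposition_Jensen_inequality}. If anything, your version is slightly more careful than the paper's, which cites the special case \eqref{eqn_jensen_result_2} (reference point $0$) even though the stated bound with $\norm{\mu_{\theta} - \theta_{0}}^{2}$ really requires the general form \eqref{eqn_Jensen_inequality} evaluated at $\theta^{\prime} = \theta_{0}$, exactly as you do.
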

\begin{proof}
Note that $\tilde{\Sigma}(\theta_{0}) = \bar{\Sigma} \iff \mathrm{Gap}(\theta_{0}) = 0$. Substituting $\theta^{\prime} = \theta_{0}$ in \eqref{eqn_lemma_expected_gap} yields 
\begin{align}
\label{eqn_lemma_expected_gap_simple_interim_1}
\mathbb{E}\left[ \mathrm{Gap}(\theta) \right] 
\leq 
\mathbf{L}_{\mathrm{gap}} 
\mathbb{E}\left[ \norm{\theta - \theta_{0}} \right].
\end{align}
Using \eqref{eqn_jensen_result_2} in \eqref{eqn_lemma_expected_gap_simple_interim_1}, we get
\eqref{eqn_lemma_expected_gap_simple}.
\end{proof}

\subsection{Probabilistic Robust Stability Guarantees}
Having explored the bounds on the expected value of the $\mathrm{Gap}(\theta)$, we are interested in studying about a nominal controller's robust stability property while stabilising a nominal plant. Particularly, we will investigate
the probability of a nominal controller stabilising a random plant using the above obtained gap metric bounds. This will provide probabilistic controller certification under non-zero mean gap uncertainty.
\begin{theorem}
\label{theorem_probabilistic_stability}
Let $\bar{\Sigma} \in \mathcal{H}_{\infty}$ be the nominal plant, and let the nominal controller $\bar{C}$ stabilize $\bar{\Sigma}$ and results in $b_{\bar{\Sigma}, \bar{C}} > 0$. Let $\mathrm{Gap}(\theta)$ be the gap associated with $\bar{\Sigma}$ and a random perturbed plant $\tilde{\Sigma} \in \mathcal{H}_{\infty}$ such that $\mathrm{Gap}(\theta)$ is sub-Gaussian with mean $\mathbb{E}[\mathrm{Gap}(\theta)]$ and variance $\sigma^{2}_{\theta} \mathbf{L}^{2}_{\mathrm{gap}}$. If
\begin{align}
\label{eqn_theorem_probabilistic_stability_condition}
\mathbb{E}[\mathrm{Gap}(\theta)] < b_{\bar{\Sigma}, \bar{C}},
\end{align}
then, for the tolerance $\varepsilon_{\mathrm{tol}} := b_{\bar{\Sigma}, \bar{C}} - \mathbb{E}[\mathrm{Gap}(\theta)] > 0$, the probability that $\bar{C}$ stabilizes $\tilde{\Sigma}(\theta)$ satisfies
\begin{align}
\label{eqn_theorem_probabilistic_stability}
\mathbb{P}\left( \mathrm{Gap}(\theta) < b_{\bar{\Sigma}, \bar{C}} \right) \geq 1 - \exp\left( -\frac{\varepsilon_{\mathrm{tol}}^2}{2 \sigma^{2}_{\theta} \mathbf{L}^{2}_{\mathrm{gap}}} \right).
\end{align}
\end{theorem}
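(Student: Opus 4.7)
The plan is to combine the deterministic robust stability condition from Proposition \ref{proposition_vinnicombe} with the sub-Gaussian tail bound for $\mathrm{Gap}(\theta)$ established in Corollary \ref{corollary_prob_gap_exceed_threshold}. The idea is that stability of $\tilde{\Sigma}(\theta)$ under $\bar{C}$ is implied by a deterministic inequality on the random variable $\mathrm{Gap}(\theta)$, and we only need to lower-bound the probability of that event.

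First, I would invoke Proposition \ref{proposition_vinnicombe} with $P_1 = \bar{\Sigma}$, $C_1 = \bar{C}$ and $\alpha = b_{\bar{\Sigma}, \bar{C}}$: whenever the realised gap satisfies $\mathrm{Gap}(\theta) < b_{\bar{\Sigma}, \bar{C}}$, the perturbed plant $\tilde{\Sigma}(\theta)$ lies in the ball $\mathcal{B}_{\alpha}(\bar{\Sigma})$ and is therefore stabilised by $\bar{C}$. Hence the event $\{\mathrm{Gap}(\theta) < b_{\bar{\Sigma}, \bar{C}}\}$ is a subset of the stabilisation event, and lower-bounding the probability of the former immediately yields a certificate for the latter.

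Next, I would use the hypothesis $\mathbb{E}[\mathrm{Gap}(\theta)] < b_{\bar{\Sigma}, \bar{C}}$ to write
\begin{align*}
\mathbb{P}\!\left(\mathrm{Gap}(\theta) \geq b_{\bar{\Sigma}, \bar{C}}\right)
= \mathbb{P}\!\left(\mathrm{Gap}(\theta) - \mathbb{E}[\mathrm{Gap}(\theta)] \geq \varepsilon_{\mathrm{tol}}\right),
\end{align*}
with $\varepsilon_{\mathrm{tol}} > 0$ by assumption. Since $\mathrm{Gap}(\theta)$ is sub-Gaussian with parameter $\sigma_{\theta} \mathbf{L}_{\mathrm{gap}}$ (as established in Corollary \ref{corollary_prob_gap_exceed_threshold}, which in turn follows from the Lipschitz property of Theorem \ref{theorem_gap_lipschitz} together with Gaussian concentration), applying the sub-Gaussian tail bound with $\epsilon = \varepsilon_{\mathrm{tol}}$ yields
\begin{align*}
\mathbb{P}\!\left(\mathrm{Gap}(\theta) \geq b_{\bar{\Sigma}, \bar{C}}\right) \leq \exp\!\left(-\frac{\varepsilon_{\mathrm{tol}}^{2}}{2 \sigma_{\theta}^{2} \mathbf{L}_{\mathrm{gap}}^{2}}\right).
\end{align*}
Taking complements of both sides gives the claimed bound \eqref{eqn_theorem_probabilistic_stability}.

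There is no real obstacle in this argument — it is essentially a three-line assembly: (i) deterministic robust stability via Vinnicombe's ball, (ii) rewrite the bad event as a centred deviation of size $\varepsilon_{\mathrm{tol}}$, (iii) apply the one-sided sub-Gaussian tail inequality. The only subtle point worth noting explicitly in the write-up is that the assumption $\mathbb{E}[\mathrm{Gap}(\theta)] < b_{\bar{\Sigma}, \bar{C}}$ is essential, because it makes $\varepsilon_{\mathrm{tol}}$ strictly positive so that Corollary \ref{corollary_prob_gap_exceed_threshold} actually applies; without this centring margin, the sub-Gaussian inequality would be vacuous (the probability bound would exceed one) and no non-trivial certificate could be extracted.
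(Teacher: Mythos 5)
Your proposal is correct and follows essentially the same route as the paper's own proof: both reduce stabilisation to the event $\{\mathrm{Gap}(\theta) < b_{\bar{\Sigma},\bar{C}}\}$ via Proposition \ref{proposition_vinnicombe}, recentre that event as a deviation of size $\varepsilon_{\mathrm{tol}}$ from the mean, apply the one-sided sub-Gaussian tail bound, and take complements. The only difference is cosmetic ordering (the paper states the tail bound first and the stability implication second), so nothing further is needed.
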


\begin{proof}
Since $\mathrm{Gap}(\theta)$ is sub-Gaussian with mean $\mu := \mathbb{E}[\mathrm{Gap}(\theta)]$ and variance $\sigma^{2}_{\theta} \mathbf{L}^{2}_{\mathrm{gap}}$, applying the standard sub-Gaussian tail bounds, we get for any given tolerance $\varepsilon_{\mathrm{tol}} > 0$,
\begin{align}
\label{eqn_theorem_probabilistic_stability_interim_step_0}
\mathbb{P}(\mathrm{Gap}(\theta) \geq 
\mathbb{E}[\mathrm{Gap}(\theta)] + \varepsilon_{\mathrm{tol}}) \leq \exp\left( -\frac{\varepsilon_{\mathrm{tol}}^{2}}{2 \sigma^{2}_{\theta} \mathbf{L}^{2}_{\mathrm{gap}}} \right).
\end{align}
Setting $\varepsilon_{\mathrm{tol}} := b_{\bar{\Sigma}, \bar{C}} - \mathbb{E}[\mathrm{Gap}(\theta)] > 0$ in \eqref{eqn_theorem_probabilistic_stability_interim_step_0}, we see that
\begin{align}
\label{eqn_theorem_probabilistic_stability_interim_step_1}
\mathbb{P}(\mathrm{Gap}(\theta) \geq b_{\bar{\Sigma}, \bar{C}}) \leq \exp\left( -\frac{(b_{\bar{\Sigma}, \bar{C}} - \mathbb{E}[\mathrm{Gap}(\theta)])^2}{2 \sigma^{2}_{\theta} \mathbf{L}^{2}_{\mathrm{gap}}} \right).
\end{align}
Since the controller $\bar{C}$ stabilizes $\bar{\Sigma}$ and results in $b_{\bar{\Sigma}, \bar{C}} > 0$, we know from Proposition \ref{proposition_vinnicombe} that $\bar{C}$ will also stabilise any plant $\tilde{\Sigma}$ if $\delta_{\mathrm{gap}}(\bar{\Sigma}, \tilde{\Sigma}) < b_{\bar{\Sigma}, \bar{C}}$. Hence, $\bar{C}$ will stabilize $\tilde{\Sigma}(\theta)$ if $\mathrm{Gap}(\theta) < b_{\bar{\Sigma}, \bar{C}}$. 
Then, the probability that $\mathrm{Gap}(\theta) < b_{\bar{\Sigma}, \bar{C}}$, that is, $\bar{C}$ stabilizes $\tilde{\Sigma}(\theta)$, is given by the complementary of \eqref{eqn_theorem_probabilistic_stability_interim_step_1}, which is equal to \eqref{eqn_theorem_probabilistic_stability} and the proof is complete.
\end{proof}

\subsubsection{Connecting Robust Stability \& Violation Probability}
If we want the stabilisation of $\tilde{\Sigma}(\theta)$ by nominal controller $\bar{C}$ to happen with a probability of at least $1 - \beta$, where $\beta \in (0,1)$ denotes the violation probability, then using Theorem \ref{theorem_probabilistic_stability}, we can find a corresponding condition on the expected value of the $\mathrm{Gap}(\theta)$ in terms of $\beta$. The following corollary formally establishes that result.

\begin{corollary}
\label{corollary_probabilistic_stability}    
Given a violation probability $\beta \in (0,1)$, suppose that 
\begin{align}
\label{eqn_corollary_probabilistic_stability}   
\mathbb{E}[\mathrm{Gap}(\theta)]
\leq
b_{\bar{\Sigma}, \bar{C}}
-
\mathbf{L}_{\mathrm{gap}} \sqrt{2 \sigma^{2}_{\theta} \log\left( \frac{1}{\beta} \right)}.
\end{align}
Then,
\begin{align}
\mathbb{P}
(\bar{C} \text{ stabilises } \tilde{\Sigma}(\theta)) 
\geq 
1 - \beta.
\end{align}
\end{corollary}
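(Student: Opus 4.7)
The plan is to derive this corollary directly as a consequence of Theorem \ref{theorem_probabilistic_stability} by inverting the sub-Gaussian tail bound to solve for the slack needed between $\mathbb{E}[\mathrm{Gap}(\theta)]$ and $b_{\bar{\Sigma}, \bar{C}}$ in order to achieve a prescribed violation probability $\beta$. The structural idea is: Theorem \ref{theorem_probabilistic_stability} produces a probability-of-stabilisation lower bound of the form $1 - \exp(-\varepsilon_{\mathrm{tol}}^{2}/(2\sigma_{\theta}^{2}\mathbf{L}_{\mathrm{gap}}^{2}))$, and we simply demand that this lower bound be no worse than $1-\beta$.

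First, I would verify that the hypothesis \eqref{eqn_corollary_probabilistic_stability} implies the feasibility condition \eqref{eqn_theorem_probabilistic_stability_condition} of Theorem \ref{theorem_probabilistic_stability}, namely $\mathbb{E}[\mathrm{Gap}(\theta)] < b_{\bar{\Sigma}, \bar{C}}$. This is immediate because $\beta \in (0,1)$ forces $\log(1/\beta) > 0$, so the subtracted quantity $\mathbf{L}_{\mathrm{gap}}\sqrt{2\sigma_{\theta}^{2}\log(1/\beta)}$ is strictly positive, and hence $\mathbb{E}[\mathrm{Gap}(\theta)] \leq b_{\bar{\Sigma}, \bar{C}} - \mathbf{L}_{\mathrm{gap}}\sqrt{2\sigma_{\theta}^{2}\log(1/\beta)} < b_{\bar{\Sigma}, \bar{C}}$. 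This ensures that the tolerance $\varepsilon_{\mathrm{tol}} := b_{\bar{\Sigma}, \bar{C}} - \mathbb{E}[\mathrm{Gap}(\theta)]$ is well-defined and strictly positive, so the theorem applies.

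Next, I would manipulate the hypothesis \eqref{eqn_corollary_probabilistic_stability} into a lower bound on $\varepsilon_{\mathrm{tol}}$. Rearranging gives $\varepsilon_{\mathrm{tol}} \geq \mathbf{L}_{\mathrm{gap}}\sqrt{2\sigma_{\theta}^{2}\log(1/\beta)}$, and squaring (which is valid since both sides are non-negative) yields $\varepsilon_{\mathrm{tol}}^{2} \geq 2\sigma_{\theta}^{2}\mathbf{L}_{\mathrm{gap}}^{2}\log(1/\beta)$. Dividing by $2\sigma_{\theta}^{2}\mathbf{L}_{\mathrm{gap}}^{2}$, negating, and exponentiating (exploiting monotonicity of $\exp$) gives $\exp(-\varepsilon_{\mathrm{tol}}^{2}/(2\sigma_{\theta}^{2}\mathbf{L}_{\mathrm{gap}}^{2})) \leq \beta$. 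Substituting this into the bound \eqref{eqn_theorem_probabilistic_stability} of Theorem \ref{theorem_probabilistic_stability} yields $\mathbb{P}(\mathrm{Gap}(\theta) < b_{\bar{\Sigma}, \bar{C}}) \geq 1 - \beta$, and since the event $\{\mathrm{Gap}(\theta) < b_{\bar{\Sigma}, \bar{C}}\}$ implies, by Proposition \ref{proposition_vinnicombe}, that $\bar{C}$ stabilises $\tilde{\Sigma}(\theta)$, the claimed inequality follows.

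There is no real obstacle here: the corollary is just the contrapositive reading of the tail bound, solved for the expected-gap margin rather than the exponent. The only subtle point worth making explicit in the write-up is that $\beta \in (0,1)$ is essential to guarantee $\log(1/\beta) > 0$ (so the square root is real and the subtraction is meaningful), and that the event inclusion $\{\mathrm{Gap}(\theta) < b_{\bar{\Sigma}, \bar{C}}\} \subseteq \{\bar{C}\ \text{stabilises}\ \tilde{\Sigma}(\theta)\}$ carries over monotonically to probabilities.
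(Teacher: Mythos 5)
Your proposal is correct and follows essentially the same route as the paper: the paper's proof likewise takes the bound from Theorem \ref{theorem_probabilistic_stability}, sets $\beta = \exp\left(-\varepsilon_{\mathrm{tol}}^{2}/(2\sigma_{\theta}^{2}\mathbf{L}_{\mathrm{gap}}^{2})\right)$, and solves for $\mathbb{E}[\mathrm{Gap}(\theta)]$ to obtain \eqref{eqn_corollary_probabilistic_stability}. Your write-up is in fact slightly more careful than the paper's, since you explicitly check the feasibility condition $\mathbb{E}[\mathrm{Gap}(\theta)] < b_{\bar{\Sigma}, \bar{C}}$ and track the inequality direction through the squaring and exponentiation steps rather than treating the boundary case as an equality.
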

\begin{proof}
From \eqref{eqn_theorem_probabilistic_stability}, we know for $\varepsilon_{\mathrm{tol}} := b_{\bar{\Sigma}, \bar{C}} - \mathbb{E}[\mathrm{Gap}(\theta)] > 0$, the probability that $\bar{C}$ stabilizes $\tilde{\Sigma}(\theta)$ satisfies
\begin{align*}
\mathbb{P}\left( \mathrm{Gap}(\theta) < b_{\bar{\Sigma}, \bar{C}} \right) 
\geq 
1 - 
\exp\left( -\frac{\varepsilon_{\mathrm{tol}}^2}{2 \sigma^{2}_{\theta} \mathbf{L}^{2}_{\mathrm{gap}}} \right).
\end{align*}    
Set violation probability $\beta = \exp\left( -\frac{\varepsilon_{\mathrm{tol}}^2}{2 \sigma^{2}_{\theta} \mathbf{L}^{2}_{\mathrm{gap}}} \right)$ and solving for $\mathbb{E}[\mathrm{Gap}(\theta)]$, we get \eqref{eqn_corollary_probabilistic_stability} and this completes the proof.
\end{proof}

\subsubsection{Numerical Demonstration}
\begin{figure}
    \centering    \includegraphics[width=\linewidth]{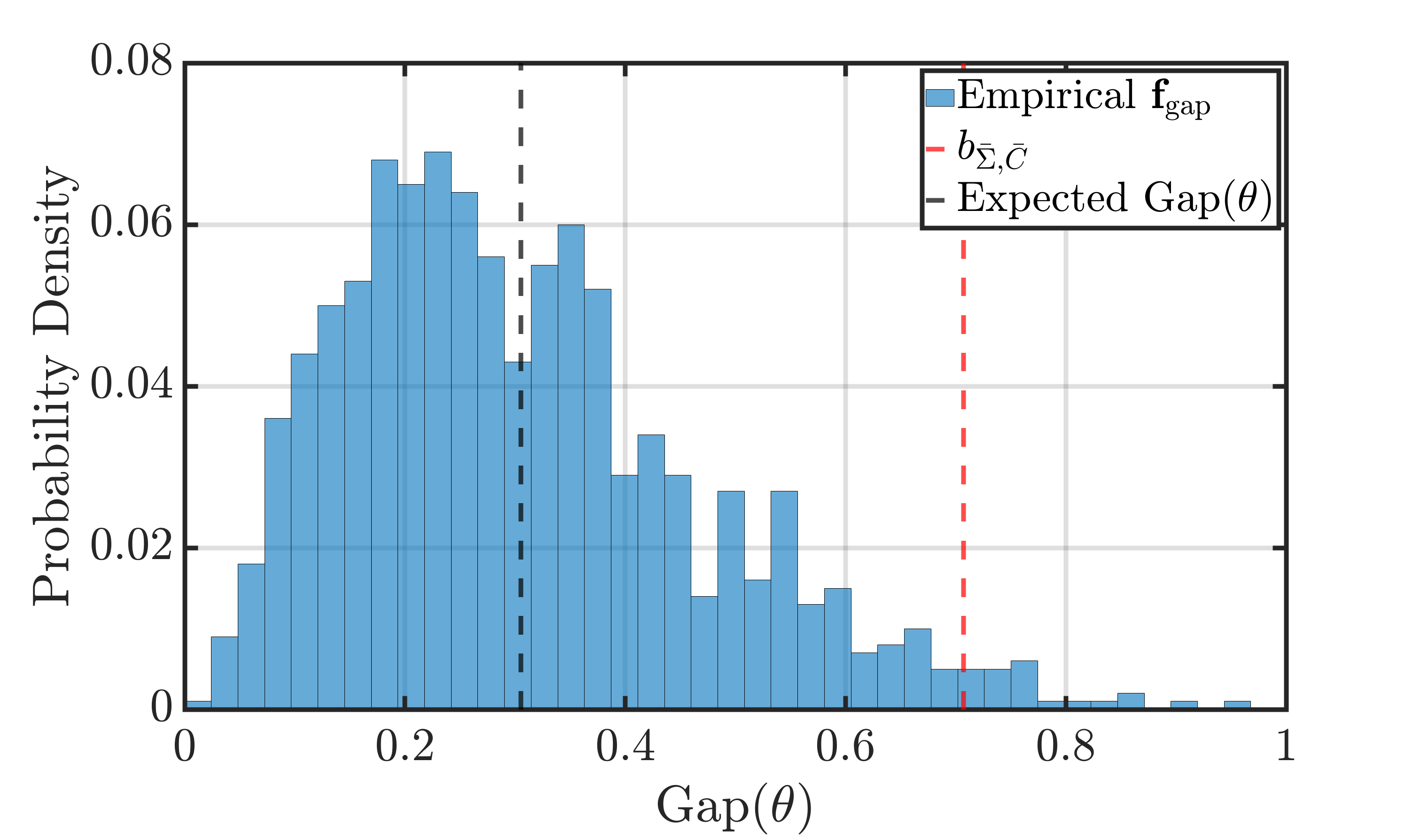}
    \caption{Results of Monte-carlo simulation with $N = 10^{4}$ independent trials validating claims of Theorem \ref{theorem_probabilistic_stability} are shown here. Clearly \eqref{eqn_theorem_probabilistic_stability_condition} is satisfied, and we see empirically that the probability that $\bar{C}$ stabilizes $\tilde{\Sigma}(\theta)$ is $0.9777$ which is clearly more than the lower bound of $0.5561$ obtained using \eqref{eqn_theorem_probabilistic_stability}.}
    \label{fig:probRobustStability}
\end{figure}
To demonstrate the probabilistic robust stability guarantees obtained in this subsection, we considered a nominal system $\bar{\Sigma}: (A, B, C, D) = (-1, 1, 1, 0)$. We sampled $N = 10^{4}$ values of the uncertain parameter $\theta \sim \mathcal{N}(\begin{bmatrix}
0.50 & -0.25
\end{bmatrix}, 0.25^{2} I_{2})$. The perturbed plants were formed as $\tilde{\Sigma}(\theta^{(i)}) = (-1 + \theta^{(i)}_{1}, 1, 1 + \theta^{(i)}_{2}, 0)$ for $i = 1, \dots, N$. After doing a Monte-carlo simulation using $N$ independent trials involving samples of $\theta$, the results are shown in Figure \ref{fig:probRobustStability}. Using \eqref{eqn_lemma_expected_gap_simple_interim_1}, we estimated $\mathbb{E}[\mathrm{Gap}(\theta)] = 0.3032$. The performance measure was $b_{\bar{\Sigma}, \bar{C}} = 0.7071$ for the nominal controller $\bar{C}$ that placed the closed loop poles at $-2$. Empirically, we found $\mathbb{P}(\mathrm{Gap}(\theta) < b_{\bar{\Sigma}, \bar{C}}) = 0.9777$ which was greater than the lower bound of $0.5561$ from \eqref{eqn_theorem_probabilistic_stability} and thereby validating the claims of Theorem \ref{theorem_probabilistic_stability}. 

\subsection{Probabilistic Closed Loop Deviation Guarantees}
Given that we have investigated the probabilistic guarantees of the gap $\mathrm{Gap}(\theta)$ greater than some threshold in Theorem \ref{theorem_gap_lipschitz} and probabilistic robust stability guarantees in Theorem \ref{theorem_probabilistic_stability}, we now turn our attention to give probabilistic closed loop deviation guarantees. Before proceeding further, we require an assumption on the Lipschitz continuity of $\norm{ Q(\tilde{\Sigma}(\theta), \bar{C}) }_{\mathcal{H}_{\infty}}$ with respect to parameter $\theta \sim \mathcal{N}(\mu_{\theta}, \sigma^{2}_{\theta} I_p)$.

\begin{assumption} \label{assume_Q_Lipschitz}
Given $\theta \sim \mathcal{N}(\mu_{\theta}, \sigma^{2}_{\theta} I_p)$, $\exists \mathbf{L_{Q}} > 0$ such that the function $\mathbf{f_{Q}}(\theta) := \norm{ Q(\tilde{\Sigma}(\theta), \bar{C}) }_{\mathcal{H}_{\infty}}$ is $\mathbf{L_{Q}}$-Lipschitz.
\end{assumption}

\textbf{Remarks:} This is a mild and valid assumption to make as usually the system matrices $A(\theta), B(\theta), C(\theta)$ of the perturbed model $\tilde{\Sigma}(\theta)$ depend smoothly on $\theta$ (are Fr\'echet differentiable in $\mathcal{H}_{\infty}$ norm with respect to $\theta$), and the function $\mathbf{f_{Q}}(\theta)$ is formed through algebraic and analytic operations on these matrices. Further, we had also earlier restricted our study to the case where the closed-loop remains internally stable over all realizations of $\theta \sim \mathcal{N}(\mu_\theta, \sigma_{\theta}^2 I_p)$. Hence, the map $\theta \mapsto \mathbf{f_{Q}}(\theta)$ is differentiable and hence Lipschitz on the space of $\theta$. Since Gaussian distributions concentrate their mass near the mean, any potential growth in the Lipschitz constant outside compact sets has negligible impact. Thus, global Lipschitz continuity of $\mathbf{f_{Q}}(\theta)$ is a conservative yet reasonable assumption to make and it enables rigorous probabilistic analysis. Since $\mathbf{f_{Q}}(\theta)$ is Lipschitz on $\theta \in \mathbb{R}^{p}$, and $\theta \sim \mathcal{N}(\mu_{\theta}, \sigma^{2}_{\theta} I_p)$, we infer from \cite{vershynin2018high} that it is also sub-Gaussian with parameter $\sigma_{\theta} \mathbf{L_{Q}} > 0$.

Using the following theorem, we will now give probabilistic closed loop deviation bound under the action of controller $\bar{C}$.
\begin{theorem} 
\label{theorem_Q_difference}
Let the given nominal model and nominal stabilising controller pair $(\bar{\Sigma}, \bar{C})$ achieve $b_{\bar{\Sigma}, \bar{C}} > 0$. Let $\tilde{\Sigma}(\theta)$ be a perturbed plant with $\theta \sim \mathcal{N}(\mu_{\theta}, \sigma^{2}_{\theta} I_p)$ such that $\mathrm{Gap}(\theta)$ satisfies \eqref{eqn_gap_Lipschitz} and also let assumption \ref{assume_Q_Lipschitz} to hold true. Given $\beta \in (0,1)$, define the following thresholds 
\begin{subequations}
\label{eqn_epsilons}    
\begin{align}
\varepsilon_{\mathrm{gap}} 
&:= 
\sqrt{2 \sigma^{2}_{\theta} \mathbf{L}_{\mathrm{gap}}^2 \log\left( \frac{2}{\beta} \right)}, \quad \text{and} \label{eqn_epsilon_gap_definition_theorem_Q}\\
\varepsilon_{\mathbf{Q}} 
&:= 
\sqrt{2 \sigma^{2}_{\theta} \mathbf{L_{Q}}^2 \log\left( \frac{2}{\beta} \right)} \label{eqn_epsilon_Q_definition_theorem_Q}.
\end{align}
\end{subequations}
Then, with probability at least $1 - \beta$, the following inequality on the closed loop deviation will hold:
\begin{equation}
\label{eqn_theorem_Q_difference}
\footnotesize
\norm{Q(\tilde{\Sigma}(\theta), \bar{C}) - Q(\bar{\Sigma}, \bar{C})}_{\mathcal{H}_{\infty}}
\leq \frac{ \left( \mathbb{E}[\mathrm{Gap}(\theta)] + \varepsilon_{\mathrm{gap}} \right)  \left( \mathbb{E}[\mathbf{f_{Q}}(\theta)] + \varepsilon_{\mathbf{Q}} \right) }{ b_{\bar{\Sigma}, \bar{C}} }.
\end{equation}
\end{theorem}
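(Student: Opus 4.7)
The plan is to bootstrap the deterministic closed-loop deviation bound \eqref{eqn_performance_bounds_Q} into a high-probability inequality by concentrating the two random quantities that appear on its right-hand side, and then gluing the two concentration statements together with a union bound.

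First I would instantiate \eqref{eqn_performance_bounds_Q} from \cite{cantoni2002linear} with $P_1 = \bar{\Sigma}$, $P = \tilde{\Sigma}(\theta)$, and $C_1 = \bar{C}$ to obtain the pathwise bound
\begin{align*}
\norm{Q(\tilde{\Sigma}(\theta), \bar{C}) - Q(\bar{\Sigma}, \bar{C})}_{\mathcal{H}_{\infty}}
\leq
\frac{\mathrm{Gap}(\theta)}{b_{\tilde{\Sigma}(\theta), \bar{C}} \cdot b_{\bar{\Sigma}, \bar{C}}}.
\end{align*}
Using the definition \eqref{eqn_b_pc_definition}, the factor $b_{\tilde{\Sigma}(\theta), \bar{C}}^{-1}$ equals $\norm{Q(\tilde{\Sigma}(\theta), \bar{C})}_{\mathcal{H}_{\infty}} = \mathbf{f_{Q}}(\theta)$, so the deterministic bound reduces to
\begin{align*}
\norm{Q(\tilde{\Sigma}(\theta), \bar{C}) - Q(\bar{\Sigma}, \bar{C})}_{\mathcal{H}_{\infty}}
\leq
\frac{\mathrm{Gap}(\theta)\, \mathbf{f_{Q}}(\theta)}{b_{\bar{\Sigma}, \bar{C}}},
\end{align*}
so the task reduces to producing a simultaneous high-probability upper estimate on the product $\mathrm{Gap}(\theta)\,\mathbf{f_{Q}}(\theta)$.

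Next I would invoke Corollary \ref{corollary_prob_gap_exceed_threshold} (which in turn rests on the Gaussian concentration inequality \eqref{eqn_Gaussian_conc_inequality}) for the Lipschitz, and hence sub-Gaussian, map $\mathrm{Gap}(\theta)$ with parameter $\sigma_{\theta}\mathbf{L}_{\mathrm{gap}}$, and apply the same concentration inequality to $\mathbf{f_{Q}}(\theta)$, which is sub-Gaussian with parameter $\sigma_{\theta}\mathbf{L_{Q}}$ by Assumption \ref{assume_Q_Lipschitz}. Solving $\exp(-\varepsilon^{2}/(2\sigma_{\theta}^{2}\mathbf{L}^{2})) = \beta/2$ for each parameter gives precisely the thresholds in \eqref{eqn_epsilon_gap_definition_theorem_Q}--\eqref{eqn_epsilon_Q_definition_theorem_Q}. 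Hence each of the events
\begin{align*}
\mathcal{E}_{1} &:= \{\mathrm{Gap}(\theta) \leq \mathbb{E}[\mathrm{Gap}(\theta)] + \varepsilon_{\mathrm{gap}}\}, \\
\mathcal{E}_{2} &:= \{\mathbf{f_{Q}}(\theta) \leq \mathbb{E}[\mathbf{f_{Q}}(\theta)] + \varepsilon_{\mathbf{Q}}\}
\end{align*}
has probability at least $1 - \beta/2$, and by a union bound $\mathbb{P}(\mathcal{E}_{1} \cap \mathcal{E}_{2}) \geq 1 - \beta$. On $\mathcal{E}_{1} \cap \mathcal{E}_{2}$, monotonicity of the product in each nonnegative factor lets me substitute the two upper estimates into the deterministic inequality above, delivering \eqref{eqn_theorem_Q_difference}.

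The main obstacle is the dependence between $\mathrm{Gap}(\theta)$ and $\mathbf{f_{Q}}(\theta)$: both are nonlinear functions of the same Gaussian vector $\theta$, so I cannot naively multiply their marginal tail bounds or concentrate the product directly without an additional argument (the product of two sub-Gaussian variables is sub-exponential in general). The union-bound route side-steps this by converting concentration of each factor into a simultaneous event on which the product bound becomes a deterministic algebraic consequence; the price paid is the conservative cross-term $\varepsilon_{\mathrm{gap}}\varepsilon_{\mathbf{Q}}$ hidden inside the factored expression on the right-hand side of \eqref{eqn_theorem_Q_difference}. A secondary bookkeeping point is that the choice of splitting the failure budget evenly as $\beta/2 + \beta/2$ is what makes $\log(2/\beta)$ appear inside both $\varepsilon_{\mathrm{gap}}$ and $\varepsilon_{\mathbf{Q}}$; any other split would give a valid but looser (or differently balanced) version of the same result.
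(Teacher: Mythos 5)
Your proposal is correct and follows essentially the same route as the paper's proof: instantiate the deterministic deviation bound \eqref{eqn_performance_bounds_Q}, rewrite $b_{\tilde{\Sigma}(\theta),\bar{C}}^{-1}$ as $\mathbf{f_{Q}}(\theta)$, concentrate each of $\mathrm{Gap}(\theta)$ and $\mathbf{f_{Q}}(\theta)$ at level $\beta/2$, and combine via a union bound before substituting on the joint event. Your explicit identification of the factor $b_{\tilde{\Sigma}(\theta),\bar{C}}^{-1}$ with $\norm{Q(\tilde{\Sigma}(\theta),\bar{C})}_{\mathcal{H}_{\infty}}$ and your remark on the dependence between the two random factors are, if anything, slightly more careful than the paper's own write-up.
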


\begin{proof}
We are essentially looking for the probabilistic satisfaction of \eqref{eqn_performance_bounds_Q} inspired from Theorem III.2 of \cite{cantoni2002linear}. Given that $\mathrm{Gap}(\theta)$ is $\mathbf{L}_{\mathrm{gap}}$-Lipschitz from \eqref{eqn_gap_Lipschitz} and $\theta \sim \mathcal{N}(\mu_{\theta}, \sigma^{2}_{\theta} I_p)$, we see that under \eqref{eqn_epsilon_gap_definition_theorem_Q}, the concentration bound from \cite{vershynin2018high} implies that 
\begin{align*}
&\mathbb{P}\left( \mathrm{Gap}(\theta) \geq \mathbb{E}[\mathrm{Gap}(\theta)] + \varepsilon_{\mathrm{gap}} \right) \leq \frac{\beta}{2}.
\end{align*}
Note that above guarantee can also be arrived using \eqref{eqn_theorem_probabilistic_stability_interim_step_0} with $\varepsilon_{\mathrm{tol}} = \varepsilon_{\mathrm{gap}}$. That is, 
\begin{align*}
\mathbb{P}(\mathrm{Gap}(\theta) \geq 
\mathbb{E}[\mathrm{Gap}(\theta)] + \varepsilon_{\mathrm{gap}}) \leq \exp\left( -\frac{\varepsilon_{\mathrm{gap}}^{2}}{2 \sigma^{2}_{\theta} \mathbf{L}^{2}_{\mathrm{gap}}} \right) = \frac{\beta}{2}.
\end{align*}
Solving for $\varepsilon_{\mathrm{gap}}$ in terms of $\beta$, we get \eqref{eqn_epsilon_gap_definition_theorem_Q}. By similar arguments, we observe from assumption \ref{assume_Q_Lipschitz} that $\mathbf{f_{Q}}(\theta)$ is $\mathbf{L_{Q}}$-Lipschitz and hence is sub-Gaussian with parameter $\sigma_{\theta} \mathbf{L_{Q}}$. Then $\mathbf{f_{Q}}(\theta)$ satisfies the sub-Gaussian concentration bound
\begin{align*}
\mathbb{P}\left( \mathbf{f_{Q}}(\theta) > \mathbb{E}[\mathbf{f_{Q}}(\theta)] + \varepsilon_{\mathbf{Q}} \right) \leq \frac{\beta}{2},
\end{align*}
when the threshold $\varepsilon_{\mathbf{Q}}$ is given by 
\begin{align*}
\varepsilon_{\mathbf{Q}} := \sqrt{2 \sigma^{2}_{\theta} \mathbf{L_{Q}}^2 \log\left( \frac{2}{\beta} \right)}.    
\end{align*}
Now define the events 
\begin{align*}
\mathcal{G} 
&:= 
\left\{ \mathrm{Gap}(\theta) > \mathbb{E}[\mathrm{Gap}(\theta)] + \varepsilon_{\mathrm{gap}} \right\} \\
\mathcal{F} 
&:= 
\left\{ \mathbf{f_{Q}}(\theta) > \mathbb{E}[\mathbf{f_{Q}}(\theta)] + \varepsilon_{\mathbf{Q}} \right\}
\end{align*}
Then, our requirement $\mathbb{P}(\mathcal{G}^{c} \cap \mathcal{F}^{c}) \geq 1 - \beta$ because 
\begin{align*}
\mathbb{P}(\mathcal{G}^{c} \cap \mathcal{F}^{c}) 
&= 1 - \mathbb{P}(\mathcal{G} \cup \mathcal{F}) \\
&\geq
1 - \left(\mathbb{P}(\mathcal{G}) + \mathbb{P}(\mathcal{F}) \right) \\
&=
1 - \frac{\beta}{2} - \frac{\beta}{2} \\
&=
1 - \beta.
\end{align*}
The above inequality arises due to the application of the Boole's inequality in the second step.
From \eqref{eqn_performance_bounds_Q}, we get
\begin{align*}
\norm{Q(\tilde{\Sigma}(\theta), \bar{C}) - Q(\bar{\Sigma}, \bar{C})}_{\mathcal{H}_{\infty}}
\leq 
\frac{ \mathrm{Gap}(\theta) \, Q(\tilde{\Sigma}(\theta), \bar{C}) }
{ b_{\bar{\Sigma}, \bar{C}} }.
\end{align*}
Since $\mathbb{P}(\mathcal{G}^{c} \cap \mathcal{F}^{c}) \geq 1 - \beta$, we simultaneously substitute both the numerator terms by the respective upper bounds from $\mathcal{G}^{c}$ and $\mathcal{F}^{c}$ to get \eqref{eqn_theorem_Q_difference} holding with probability of at least $1 - \beta$ and this completes the proof.
\end{proof}

\subsection{Probabilistic $\mathcal{H}_{\infty}$ Performance Guarantees}
We will now connect all the above results with Problem \ref{problem_Hinf_performance} in the below theorem. 

\begin{theorem}
\label{theorem_probability_hinfty_performance_satisfaction}
Let the nominal controller $\bar{C}$ stabilise the nominal plant model $\bar{\Sigma}$ and achieves a certain $\mathcal{H}_{\infty}$ performance level denoted by $\bar{T} := \norm{\mathbf{T}(\bar{\Sigma}, \bar{C})}_{\mathcal{H}_{\infty}}$. Let $\tilde{\Sigma}(\theta) \in \mathcal{H}_{\infty}$ denote the perturbed plant such that $\mathrm{Gap}(\theta)$ is $\mathbf{L}_{\mathrm{gap}}$-Lipschitz by Theorem \ref{theorem_gap_lipschitz}. Given any desired performance level $\gamma > 0$ and a violation probability $\beta \in (0,1)$,  if
\begin{align}
\label{eqn_theorem_probability_hinfty_performance_satisfaction}
\frac{\gamma - \bar{T}}{1 + \gamma} - \mathbb{E}[\mathrm{Gap}(\theta)]
\geq 
\sqrt{2 \sigma^{2}_{\theta}
\mathbf{L}^{2}_{\mathrm{gap}}
\log\left(\frac{1}{\beta}\right)},
\end{align}
then, $\mathbb{P} \left( \norm{\mathbf{T}(\tilde{\Sigma}(\theta), \bar{C})}_{\mathcal{H}_{\infty}} \leq \gamma \right) \geq 1 - \beta$. 
\end{theorem}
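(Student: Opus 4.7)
The plan is to combine the deterministic $\mathcal{H}_{\infty}$ performance degradation bound \eqref{eqn_performance_bounds_T} with the sub-Gaussian tail bound for $\mathrm{Gap}(\theta)$ established in Corollary \ref{corollary_prob_gap_exceed_threshold}. Specifically, I would first convert the desired event $\norm{\mathbf{T}(\tilde{\Sigma}(\theta), \bar{C})}_{\mathcal{H}_{\infty}} \leq \gamma$ into an equivalent event stated purely in terms of $\mathrm{Gap}(\theta)$, and then bound that event's probability using Gaussian concentration.

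First, applying \eqref{eqn_performance_bounds_T} with $P_{1} = \bar{\Sigma}$, $C_{1} = \bar{C}$, and $P = \tilde{\Sigma}(\theta)$ (valid since we restrict to realisations with $\mathrm{Gap}(\theta) < 1$), I obtain the pointwise-in-$\theta$ bound
\begin{align*}
\norm{\mathbf{T}(\tilde{\Sigma}(\theta), \bar{C})}_{\mathcal{H}_{\infty}}
\leq
\frac{\bar{T} + \mathrm{Gap}(\theta)}{1 - \mathrm{Gap}(\theta)}.
\end{align*}
Hence a sufficient condition for $\norm{\mathbf{T}(\tilde{\Sigma}(\theta), \bar{C})}_{\mathcal{H}_{\infty}} \leq \gamma$ is that the right-hand side be $\leq \gamma$. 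Solving this inequality algebraically (using $\mathrm{Gap}(\theta) < 1$ so the denominator is positive) yields the equivalent condition
\begin{align*}
\mathrm{Gap}(\theta) \leq \frac{\gamma - \bar{T}}{1 + \gamma} \teL \tau_{\gamma}.
\end{align*}
Consequently, $\{\mathrm{Gap}(\theta) \leq \tau_{\gamma}\} \subseteq \{\norm{\mathbf{T}(\tilde{\Sigma}(\theta), \bar{C})}_{\mathcal{H}_{\infty}} \leq \gamma\}$, so it suffices to lower bound the probability of the former event by $1 - \beta$.

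Next, since $\mathrm{Gap}(\theta)$ is sub-Gaussian with parameter $\sigma_{\theta} \mathbf{L}_{\mathrm{gap}}$ by Corollary \ref{corollary_prob_gap_exceed_threshold}, I apply the tail bound \eqref{eqn_corollary_prob_gap_exceed_threshold} with deviation $\epsilon := \tau_{\gamma} - \mathbb{E}[\mathrm{Gap}(\theta)]$, which is non-negative by the theorem's hypothesis \eqref{eqn_theorem_probability_hinfty_performance_satisfaction}. This gives
\begin{align*}
\mathbb{P}\bigl( \mathrm{Gap}(\theta) > \tau_{\gamma} \bigr)
&= \mathbb{P}\bigl( \mathrm{Gap}(\theta) - \mathbb{E}[\mathrm{Gap}(\theta)] > \epsilon \bigr) \\
&\leq \exp\left( -\frac{\epsilon^{2}}{2 \sigma^{2}_{\theta} \mathbf{L}^{2}_{\mathrm{gap}}} \right).
\end{align*}
Taking complements and invoking the containment established above, I obtain
\begin{align*}
\mathbb{P}\bigl( \norm{\mathbf{T}(\tilde{\Sigma}(\theta), \bar{C})}_{\mathcal{H}_{\infty}} \leq \gamma \bigr)
\geq 1 - \exp\left( -\frac{\epsilon^{2}}{2 \sigma^{2}_{\theta} \mathbf{L}^{2}_{\mathrm{gap}}} \right).
\end{align*}

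Finally, I close the argument by noting that hypothesis \eqref{eqn_theorem_probability_hinfty_performance_satisfaction} is exactly the statement $\epsilon \geq \sqrt{2 \sigma^{2}_{\theta} \mathbf{L}^{2}_{\mathrm{gap}} \log(1/\beta)}$, which rearranges to $\exp(-\epsilon^{2}/(2 \sigma^{2}_{\theta} \mathbf{L}^{2}_{\mathrm{gap}})) \leq \beta$, yielding the claim. The only genuine subtlety in the argument is ensuring that the deterministic bound \eqref{eqn_performance_bounds_T} is applicable (i.e.\ $\mathrm{Gap}(\theta) < 1$) on the event of interest; this is not an additional obstacle here because the event $\{\mathrm{Gap}(\theta) \leq \tau_{\gamma}\}$ automatically implies $\mathrm{Gap}(\theta) < 1$ whenever $\tau_{\gamma} < 1$, and the standing restriction introduced in Section \ref{sec_main} already excludes realisations with $\mathrm{Gap}(\theta) = 1$.
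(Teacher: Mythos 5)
Your proposal is correct and follows essentially the same route as the paper: invert the deterministic degradation bound \eqref{eqn_performance_bounds_T} to reduce the event to $\mathrm{Gap}(\theta) \leq \frac{\gamma - \bar{T}}{1+\gamma}$, then apply the sub-Gaussian concentration bound of Corollary \ref{corollary_prob_gap_exceed_threshold} with deviation $\frac{\gamma - \bar{T}}{1+\gamma} - \mathbb{E}[\mathrm{Gap}(\theta)]$ and check that hypothesis \eqref{eqn_theorem_probability_hinfty_performance_satisfaction} makes the tail at most $\beta$. Your explicit treatment of the step as an event containment (sufficiency rather than equivalence) and of the $\mathrm{Gap}(\theta)<1$ applicability condition is slightly more careful than the paper's wording, but it is the same argument.
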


\begin{proof}
From the robustness bounds in \eqref{eqn_performance_bounds_T}, we know that
\begin{align*}
\norm{\mathbf{T}(\tilde{\Sigma}(\theta), \bar{C})}_{\mathcal{H}_{\infty}} \leq 
\frac{\bar{T} + \mathrm{Gap}(\theta)}{1 - \mathrm{Gap}(\theta)}.
\end{align*}
For $\norm{\mathbf{T}(\tilde{\Sigma}(\theta), \bar{C})}_{\mathcal{H}_{\infty}} \leq \gamma$ to hold true, we need
\begin{align}
\label{eqn_thm5_interim_step_0}
\frac{\bar{T} + \mathrm{Gap}(\theta)}{1 - \mathrm{Gap}(\theta)}
\leq 
\gamma 
\iff
\mathrm{Gap}(\theta) 
\leq 
\frac{\gamma - \bar{T}}{1 + \gamma}.
\end{align}
Note that \eqref{eqn_thm5_interim_step_0} can be equivalently written as
\begin{align}
\label{eqn_bar_gamma_definition}
\mathrm{Gap}(\theta) - \mathbb{E}[\mathrm{Gap}(\theta)]
\leq 
\underbrace{\frac{\gamma - \bar{T}}{1 + \gamma} - \mathbb{E}[\mathrm{Gap}(\theta)]}_{
:= \bar{\gamma}}.   
\end{align}
Applying \eqref{eqn_corollary_prob_gap_exceed_threshold} from Corollary \ref{corollary_prob_gap_exceed_threshold} with $\epsilon = \bar{\gamma}$, we see that
\begin{align}
\label{eqn_thm5_interim_step_1}
\mathbb{P} \left( \mathrm{Gap}(\theta) - \mathbb{E}[\mathrm{Gap}(\theta)] > \bar{\gamma} \right) 
\leq 
\exp\left( -\frac{\bar{\gamma}^2}{2 \sigma^{2}_{\theta}
\mathbf{L}_{\mathrm{gap}}^2} \right).
\end{align}
For the above violation probability in \eqref{eqn_thm5_interim_step_1} to be upper bounded by $\beta$, we require
\begin{align}
\label{eqn_thm5_interim_step_2}
\exp\left( -\frac{\bar{\gamma}^2}{2 \sigma^{2}_{\theta} \mathbf{L}_{\mathrm{gap}}^2} \right)
\leq 
\beta
\iff
\eqref{eqn_theorem_probability_hinfty_performance_satisfaction}.
\end{align}
Hence, under this condition \eqref{eqn_thm5_interim_step_2}, $\norm{\mathbf{T}(\tilde{\Sigma}(\theta), \bar{C})}_{\mathcal{H}_{\infty}} \leq \gamma$ holds with probability at least $1 - \beta$.
\end{proof}

In the following corollary, we establish probabilistic guarantee on the desired $\mathcal{H}_{\infty}$ performance level $\gamma > 0$ being satisfied under the gap uncertainty.

\begin{corollary} \label{corollary_Hinf_performance_probability_lower_bound}
Under the conditions of Theorem \ref{theorem_probability_hinfty_performance_satisfaction}, given any desired performance level $\gamma > 0$,     
\begin{equation}
\label{eqn_Hinf_performance_probability_lower_bound}
\mathbb{P} \left( \norm{\mathbf{T}(\tilde{\Sigma}(\theta), \bar{C})}_{\mathcal{H}_{\infty}} \leq \gamma \right) 
\geq 
1 - \exp\left( -\frac{\bar{\gamma}^{2}}{2 \sigma^{2}_{\theta} \mathbf{L}_{\mathrm{gap}}^2} \right).
\end{equation}
\end{corollary}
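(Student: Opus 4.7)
The plan is to piggyback on the argument already assembled in the proof of Theorem \ref{theorem_probability_hinfty_performance_satisfaction}, which essentially derived the desired tail bound en route to fixing the violation probability at $\beta$. The corollary just reads off the concentration estimate directly, without solving for $\beta$ in terms of the other parameters. So the work has already been done in the theorem; what remains is bookkeeping.

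Concretely, I would first recall from \eqref{eqn_thm5_interim_step_0} that the event $\norm{\mathbf{T}(\tilde{\Sigma}(\theta), \bar{C})}_{\mathcal{H}_{\infty}} \leq \gamma$ is equivalent to the event $\mathrm{Gap}(\theta) \leq \frac{\gamma - \bar{T}}{1+\gamma}$, which via \eqref{eqn_bar_gamma_definition} can be rewritten as $\mathrm{Gap}(\theta) - \mathbb{E}[\mathrm{Gap}(\theta)] \leq \bar{\gamma}$. Next, because $\mathrm{Gap}(\theta)$ is $\mathbf{L}_{\mathrm{gap}}$-Lipschitz in $\theta$ (Theorem \ref{theorem_gap_lipschitz}) and $\theta \sim \mathcal{N}(\mu_\theta, \sigma_\theta^2 I_p)$, Corollary \ref{corollary_prob_gap_exceed_threshold} applied with $\epsilon = \bar{\gamma}$ gives
\[
\mathbb{P}\!\left(\mathrm{Gap}(\theta) - \mathbb{E}[\mathrm{Gap}(\theta)] > \bar{\gamma}\right)
\leq \exp\!\left(-\frac{\bar{\gamma}^{2}}{2 \sigma_\theta^{2} \mathbf{L}_{\mathrm{gap}}^{2}}\right).
\]
Taking complements then yields \eqref{eqn_Hinf_performance_probability_lower_bound} directly.

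There is essentially no obstacle here; the only subtlety worth flagging is that the statement is meaningful only when $\bar{\gamma} \geq 0$, i.e.\ when $\mathbb{E}[\mathrm{Gap}(\theta)] \leq (\gamma - \bar{T})/(1+\gamma)$. When $\bar{\gamma} < 0$ the exponential exceeds $1$ and the lower bound becomes trivial (or negative), which correctly reflects the fact that the nominal performance plus expected gap already exceeds the target $\gamma$, so no useful probabilistic guarantee can be given. I would note this in passing but not belabor it, since the corollary is stated as a general tail bound that specializes to a nontrivial guarantee precisely when the sufficient condition \eqref{eqn_theorem_probability_hinfty_performance_satisfaction} of Theorem \ref{theorem_probability_hinfty_performance_satisfaction} is within reach.
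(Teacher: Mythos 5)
Your proposal is correct and follows essentially the same route as the paper: both apply the sub-Gaussian concentration bound of Corollary \ref{corollary_prob_gap_exceed_threshold} with $\epsilon=\bar{\gamma}$ (equivalently, read off \eqref{eqn_thm5_interim_step_1} from the proof of Theorem \ref{theorem_probability_hinfty_performance_satisfaction}) and take the complementary event. Your added remark that the bound is only nontrivial when $\bar{\gamma}\geq 0$ is a fair observation, though under the stated hypotheses of Theorem \ref{theorem_probability_hinfty_performance_satisfaction} the condition \eqref{eqn_theorem_probability_hinfty_performance_satisfaction} already forces $\bar{\gamma}>0$.
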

\begin{proof}
Proof follows immediately by applying $\bar{\gamma}$ defined in  \eqref{eqn_bar_gamma_definition} into
\eqref{eqn_thm5_interim_step_1} and writing the complementary event. 
\end{proof}
\textbf{Remarks:} It is evident that the Lipschitz constant of the $\mathrm{Gap}(\theta)$ given by $\mathbf{L}_{\mathrm{gap}}$ from Theorem \ref{theorem_gap_lipschitz}, its expected value $\mathbb{E}[\mathrm{Gap}(\theta)]$ and the $\mathcal{H}_{\infty}$ performance level of the nominal controller with respect to the nominal model given by $\bar{T}$ dictates the probability of the perturbed model $\tilde{\Sigma}(\theta)$ achieving a certain $\mathcal{H}_{\infty}$ performance level given by $\gamma > 0$. \\

Having learned the probability of satisfying a given $\mathcal{H}_{\infty}$ performance level criteria in Theorem \ref{theorem_probability_hinfty_performance_satisfaction} and its corollary \ref{corollary_Hinf_performance_probability_lower_bound} for a lower bound on that probability, we now resort to find the expected $\mathcal{H}_{\infty}$ norm of the random transfer function $\mathbf{T}(\tilde{\Sigma}(\theta), \bar{C})$ due to the random gap perturbations. Before diving into the theorem that formally establishes the findings in connection to that, we first state and prove two lemmata in connection to get the expected value of certain quantities of interests which shall become handy while proving the required theorem.

\begin{lemma} \label{lemma_expectation_reciprocal_subgaussian}
Let $ x \in (0,1) $ be a real-valued random variable with mean $ \mu = \mathbb{E}[x] \in (0,1) $, and suppose that $x$ is sub-Gaussian with parameter $\sigma > 0$. Then, 
\begin{align}
\label{eqn_lemma_expectation_reciprocal_subgaussian}
\mathbb{E}\left[ \frac{1}{1 - x} \right]
\leq
\frac{1+\mu}{1-\mu}
+
\frac{8\sigma^2}{1 - \mu} e^{\left( - \frac{(1 - \mu)^2}{8\sigma^2} \right)}.
\end{align}
\end{lemma}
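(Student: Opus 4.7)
The plan is to split $\mathbb{E}[1/(1-x)]$ at the midpoint $t := (1+\mu)/2$ between the mean $\mu$ and the upper endpoint $1$. This choice is tailored so that the one-sided sub-Gaussian bound delivers $\mathbb{P}(x > t) \leq \exp(-(1-\mu)^2/(8\sigma^2))$, which is precisely the exponent appearing in the lemma's tail term. On the ``good'' event $\{x \leq t\}$ the integrand $1/(1-x)$ is deterministically bounded by $2/(1-\mu)$, while on the ``bad'' event $\{x > t\}$ the sub-Gaussian concentration must control how much probability mass can drive $1/(1-x)$ toward infinity.

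On the good region, rather than using the crude pointwise bound $1/(1-x) \leq 2/(1-\mu)$ (which is too loose by a factor of roughly $2/(1+\mu)$), I would exploit the convexity of $u \mapsto 1/(1-u)$ on $[0, t]$: its graph lies below the secant chord through $(0, 1)$ and $(t, 2/(1-\mu))$, yielding the affine upper envelope $1/(1-x) \leq 1 + 2x/(1-\mu)$ valid on $[0,t]$. Taking expectations against $\mathbf{1}_{\{x \leq t\}}$ and then dropping the indicator (non-negativity of the linear envelope makes this legitimate) gives $\mathbb{E}[(1/(1-x))\mathbf{1}_{\{x \leq t\}}] \leq 1 + 2\mathbb{E}[x]/(1-\mu) = (1+\mu)/(1-\mu)$, reproducing the first term of the claim exactly.

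For the bad region I would use the tail-integration identity
\begin{equation*}
\mathbb{E}[(1/(1-x))\mathbf{1}_{\{x > t\}}] = \tfrac{1}{1-t}\,\mathbb{P}(x > t) + \int_{t}^{1} \frac{\mathbb{P}(x > u)}{(1-u)^2}\, du,
\end{equation*}
insert the sub-Gaussian tail $\mathbb{P}(x > u) \leq \exp(-(u-\mu)^2/(2\sigma^2))$, and after the shift $v = u - \mu$ factor out the common piece $\exp(-(1-\mu)^2/(8\sigma^2))$ coming from the floor $v = (1-\mu)/2$. The remaining integral is then to be controlled by the Mills-type Gaussian tail inequality $\int_{s}^{\infty} e^{-v^2/(2\sigma^2)}\, dv \leq (\sigma^2/s)\,e^{-s^2/(2\sigma^2)}$ evaluated at $s = (1-\mu)/2$. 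Consolidating constants should produce the bad-region estimate $(8\sigma^2/(1-\mu))\exp(-(1-\mu)^2/(8\sigma^2))$, and adding the two regional bounds closes the proof.

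The main obstacle is the tail integral, since $(1-u)^{-2}$ alone is not integrable on $[t, 1)$ and one cannot simply pull the peak value of the sub-Gaussian factor outside the integral. The two factors must be handled simultaneously under the change of variables $v = u - \mu$ so that the combined integrand is dominated by a genuine one-dimensional Gaussian tail in $v$, where Mills' bound gives a finite estimate. Lining up the constants exactly — in particular the factor of $8$ appearing both in $8\sigma^2$ and in the exponent $(1-\mu)^2/(8\sigma^2)$ — requires tracking how the midpoint choice $t = (1+\mu)/2$ interacts with both the chord envelope on $[0,t]$ and the Gaussian floor at $v = (1-\mu)/2$ in the Mills step.
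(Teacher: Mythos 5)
Your treatment of the good region is sound and, in fact, cleaner than the paper's: the chord bound $1/(1-x)\le 1+2x/(1-\mu)$ on $[0,(1+\mu)/2]$ uses the hypothesis $\mathbb{E}[x]=\mu$ directly and reproduces the first term $\tfrac{1+\mu}{1-\mu}$ exactly, whereas the paper obtains the same number by bounding a layer-cake tail probability by $1$ on $[1,\,2/(1-\mu)]$. The proof collapses in the bad region, however, and not merely for bookkeeping reasons. After inserting the sub-Gaussian tail into your integration-by-parts identity you face
\begin{align*}
\int_{t}^{1}\frac{e^{-(u-\mu)^2/(2\sigma^2)}}{(1-u)^2}\,du ,
\end{align*}
whose numerator is bounded \emph{below} by $e^{-(1-\mu)^2/(2\sigma^2)}>0$ on the entire range of integration; the substitution $v=u-\mu$ only rewrites the integrand as $e^{-v^2/(2\sigma^2)}(1-\mu-v)^{-2}$ on $v\in[(1-\mu)/2,\,1-\mu)$, which still has a non-integrable second-order pole at the right endpoint. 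Mills' inequality controls $\int_s^{\infty}e^{-v^2/(2\sigma^2)}\,dv$, not an integral carrying such a singularity, so the ``bad-region estimate'' you hope to consolidate is actually $+\infty$. No choice of split point or constant-tracking can repair this.

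The obstruction is intrinsic: a one-sided sub-Gaussian tail bound caps $\mathbb{P}(x>u)$ near $u=1$ only at the level $e^{-(1-\mu)^2/(2\sigma^2)}$ and says nothing about how fast that probability vanishes as $u\uparrow 1$, which is precisely what $\mathbb{E}[1/(1-x)]$ is sensitive to. Concretely, a two-point variable with an atom of fixed mass $p=e^{-(1-\mu)^2/(2\sigma^2)}$ at $1-\delta$ (the other atom placed to keep the mean equal to $\mu$) satisfies the stated sub-Gaussian condition with parameter $\sigma$, yet $\mathbb{E}[1/(1-x)]\ge p/\delta\to\infty$ as $\delta\to 0$, so the claimed inequality does not follow from the stated hypotheses at all. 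You are in the same boat as the paper: its proof bounds the layer-cake integrand by the constant $e^{-(1-\mu)^2/(8\sigma^2)}$ on $[T,\infty)$ and then truncates the resulting divergent integral at length $8\sigma^2/(1-\mu)$ by fiat (``so that the tail contribution is negligent''), which is not a valid step. A correct version of the lemma needs an additional hypothesis --- for instance $x\le 1-\delta_0$ almost surely, or a quantitative decay of $\mathbb{P}(x>u)$ as $u\uparrow 1$ --- at which point either your decomposition or the paper's would go through with the singular region excised.
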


\begin{proof}
Note that the expectation of $x$ can be expressed using the tail integral identity. Further, we use the fact that $x$ is sub-Gaussian with mean $\mu$ to get x
\begin{align*}
\mathbb{E}\left[\frac{1}{1 - x}\right] 
&= 
\int_1^\infty \mathbb{P}\left( \frac{1}{1 - x} \geq t \right) dt \\
&= 
\int_1^\infty \mathbb{P}\left( x \geq 1 - \frac{1}{t} \right) dt. \\
&= 
\int_1^\infty
\mathbb{P}\left( x - \mu \geq 1 - \mu - \frac{1}{t} \right) dt \\
&\leq 
\int_1^\infty
e^{\left( -\frac{ \left(1 - \mu - \frac{1}{t}\right)^2}{2\sigma^2} \right)} dt.
\end{align*}
This is close to 1 when $\frac{1}{t} \rightarrow 1 - \mu$, and decays exponentially when $\frac{1}{t} \ll 1 - \mu$. So, we split the integral into, (i) a region close to the singularity, where the integrand could be large, (ii) a region far from the singularity, where the exponential decay dominates. Let $T := \frac{2}{1 - \mu}$ and we split the integral as 
\begin{align*}
\mathbb{E}\left[\frac{1}{1 - x}\right]
\leq \int_1^T 1 \, dt + \int_{T}^{\infty} e^{\left( -\frac{ \left(1 - \mu - \frac{1}{t}\right)^2}{2\sigma^2} \right)} dt.
\end{align*}
In the above inequality, we see that for $t \leq T$, $1 - \mu - \frac{1}{t}$ is small. Hence, we get a trivial yet conservative upper bound
\begin{align*}
\int^{T}_{1}
e^{\left( -\frac{ \left(1 - \mu - \frac{1}{t}\right)^2}{2\sigma^2} \right)} dt 
\leq 
\int^{T}_{1}
1
dt
=
T - 1
=
\frac{1+\mu}{1-\mu}.
\end{align*}
This is both a trivial and a conservative upper bound. On the other hand, $\forall t \geq T$, 
\begin{align*}
 &\frac{1}{t} \leq \frac{1 - \mu}{2}
 \\
 \iff 
 &1 - \mu - \frac{1}{t} \geq \frac{1 - \mu}{2} \\
 \iff
 &e^{\left( -\frac{ \left(1 - \mu - \frac{1}{t}\right)^2}{2\sigma^2} \right)}
\leq
e^{\left( -\frac{(1 - \mu)^2}{8\sigma^2} \right)}.
\end{align*}
\begin{align*}
\implies \mathbb{E}\left[\frac{1}{1 - x}\right]
\leq 
\frac{1+\mu}{1-\mu}
+ 
\int_{T}^{\infty} e^{\left( -\frac{(1 - \mu)^2}{8\sigma^2} \right)} dt.
\end{align*}
In order to have a finite upper bound for the second integral, we truncate the integral at length $ \frac{8\sigma^2}{1 - \mu} $ so that the tail contribution is negligent to get
\begin{align*}
\int_T^{T + \frac{8\sigma^2}{1 - \mu}} e^{\left( -\frac{(1 - \mu)^2}{8\sigma^2} \right)} dt = \frac{8\sigma^2}{1 - \mu} e^{\left( -\frac{(1 - \mu)^2}{8\sigma^2} \right)}.
\end{align*}
\begin{align*}
\implies \mathbb{E}\left[\frac{1}{1 - x}\right]
\leq
\frac{1+\mu}{1-\mu} +
\frac{8\sigma^2}{1 - \mu} \cdot e^{\left( - \frac{(1 - \mu)^2}{8\sigma^2} \right)},
\end{align*}
which completes the proof.
\end{proof}

Using the above lemma \ref{lemma_expectation_reciprocal_subgaussian}, we can find the expectation of $1/(1 - \mathrm{Gap}(\theta))$ which will be useful for us later which obtaining bounds for the expected value of $\norm{\mathbf{T}(\tilde{\Sigma}(\theta), \bar{C})}_{\mathcal{H}_{\infty}}$.

\begin{lemma}\label{lemma_expected_inverse_gap}
Given that $\mathrm{Gap}(\theta)$ is a sub-Gaussian random variable in $(0,1)$ with parameter $\sigma_{\theta} \mathbf{L}_{\mathrm{gap}}$, we see that    
\begin{align}
\label{eqn_lemma_expected_inverse_gap}
\mathbb{E}\left[ \frac{1}{1 - \mathrm{Gap}(\theta)} \right]
\leq
\underbrace{
\frac{1+C_{\mathrm{gap}}}{1-C_{\mathrm{gap}}} +
\frac{8 \sigma^{2}_{\theta} \mathbf{L}_{\mathrm{gap}}^{2}}{1 - C_{\mathrm{gap}}} e^{\left( - \frac{(1 - C_{\mathrm{gap}})^2}{8 \sigma^{2}_{\theta} \mathbf{L}_{\mathrm{gap}}^{2}} \right)}}_{:= C^{\mathrm{inv}}_{\mathrm{gap}}}.
\end{align}
\end{lemma}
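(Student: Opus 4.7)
The plan is to invoke Lemma~\ref{lemma_expectation_reciprocal_subgaussian} directly with $x := \mathrm{Gap}(\theta) \in (0,1)$, sub-Gaussian parameter $\sigma := \sigma_{\theta}\mathbf{L}_{\mathrm{gap}}$, and mean $\mu := \mathbb{E}[\mathrm{Gap}(\theta)] \in (0,1)$. All hypotheses of that lemma are exactly the hypotheses listed in the present statement, so the immediate conclusion reads
\[
\mathbb{E}\left[\frac{1}{1 - \mathrm{Gap}(\theta)}\right]
\leq
\frac{1+\mu}{1-\mu} + \frac{8\sigma^{2}_{\theta}\mathbf{L}_{\mathrm{gap}}^{2}}{1-\mu}\exp\left(-\frac{(1-\mu)^{2}}{8\sigma^{2}_{\theta}\mathbf{L}_{\mathrm{gap}}^{2}}\right).
\]
This already has the correct shape; the only discrepancy with the target bound \eqref{eqn_lemma_expected_inverse_gap} is that the stated bound uses the deterministic constant $C_{\mathrm{gap}}$ in place of the (potentially inaccessible) true mean $\mu$.

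Closing this discrepancy is the second, and main, step. I would invoke Corollary~\ref{corollary_expected_gap_subGaussian}, which supplies $\mu = \mathbb{E}[\mathrm{Gap}(\theta)] \leq C_{\mathrm{gap}}$, and then argue that the right-hand side above, viewed as a function
\[
g(\mu) \; := \; \frac{1+\mu}{1-\mu} \, + \, \frac{8\sigma^{2}_{\theta}\mathbf{L}_{\mathrm{gap}}^{2}}{1-\mu}\exp\!\left(-\frac{(1-\mu)^{2}}{8\sigma^{2}_{\theta}\mathbf{L}_{\mathrm{gap}}^{2}}\right),
\]
is monotonically non-decreasing in $\mu$ on $(0,1)$. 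Once monotonicity is in hand, $\mu \leq C_{\mathrm{gap}}$ yields $g(\mu)\leq g(C_{\mathrm{gap}})$, which is precisely the right-hand side of \eqref{eqn_lemma_expected_inverse_gap}. For this step to produce a meaningful bound, we implicitly require $C_{\mathrm{gap}} < 1$, which is a natural standing assumption since otherwise the surrounding development (in particular, robust stability via $\mathrm{Gap}(\theta)<1$) would be void.

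The monotonicity check is where I expect the main (albeit routine) obstacle. The first summand, $\tfrac{1+\mu}{1-\mu} = -1 + \tfrac{2}{1-\mu}$, is manifestly strictly increasing on $(0,1)$. For the second summand, I would substitute $u := 1-\mu \in (0,1)$ to rewrite it as $\tfrac{8\sigma^{2}_{\theta}\mathbf{L}_{\mathrm{gap}}^{2}}{u}\exp(-u^{2}/(8\sigma^{2}_{\theta}\mathbf{L}_{\mathrm{gap}}^{2}))$; then as $\mu$ increases, $u$ decreases, so both the factor $1/u$ and the factor $\exp(-u^{2}/(8\sigma^{2}_{\theta}\mathbf{L}_{\mathrm{gap}}^{2}))$ increase, making their product non-decreasing in $\mu$. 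A one-line verification via the derivative with respect to $u$ (or equivalently $\mu$) will make this rigorous.

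Overall, the substantive analytic work --- the tail-integral decomposition and truncation --- has already been carried out in Lemma~\ref{lemma_expectation_reciprocal_subgaussian}. The present lemma is therefore a short corollary-style result whose only non-trivial ingredient is the monotonicity justification that legitimises replacing the exact (and potentially unknown) mean $\mathbb{E}[\mathrm{Gap}(\theta)]$ by its explicit upper bound $C_{\mathrm{gap}}$ from Corollary~\ref{corollary_expected_gap_subGaussian}.
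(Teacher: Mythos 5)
Your proposal is correct and follows essentially the same route as the paper: apply Lemma~\ref{lemma_expectation_reciprocal_subgaussian} with $x=\mathrm{Gap}(\theta)$ and $\mu=\mathbb{E}[\mathrm{Gap}(\theta)]$, then upgrade $\mu$ to $C_{\mathrm{gap}}$ via Corollary~\ref{corollary_expected_gap_subGaussian}; the paper carries out your monotonicity step by bounding each factor separately ($1+\mu\leq 1+C_{\mathrm{gap}}$, $\tfrac{1}{1-\mu}\leq\tfrac{1}{1-C_{\mathrm{gap}}}$, and the exponential factor likewise), which is exactly the termwise content of your observation that $g$ is non-decreasing. Your explicit flagging of the implicit requirement $C_{\mathrm{gap}}<1$ is a point the paper leaves unstated but is indeed needed for the substitution to be valid.
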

\begin{proof}
Since $\mathrm{Gap}(\theta)$ is a sub-Gaussian random variable in $(0,1)$ with parameter $\sigma_{\theta} \mathbf{L}_{\mathrm{gap}}$, we employ Lemma \ref{lemma_expectation_reciprocal_subgaussian} with $x = \mathrm{Gap}(\theta)$ and $\mu = \mathbb{E}[\mathrm{Gap}(\theta)]$ in \eqref{eqn_lemma_expectation_reciprocal_subgaussian} to see that 
\begin{align}
\label{eqn_lemma_expected_inverse_gap_interim_1}
&\mathbb{E}\left[ \frac{1}{1 - \mathrm{Gap}(\theta)} \right] \nonumber \\
&\leq
\frac{1+\mathbb{E}[\mathrm{Gap}(\theta)]}{1-\mathbb{E}[\mathrm{Gap}(\theta)]} +
\frac{8 \sigma^{2}_{\theta} \mathbf{L}_{\mathrm{gap}}^{2}}{1 - \mathbb{E}[\mathrm{Gap}(\theta)]} e^{\left( - \frac{(1 - \mathbb{E}[\mathrm{Gap}(\theta)])^2}{8 \sigma^{2}_{\theta} \mathbf{L}_{\mathrm{gap}}^{2}} \right)}.
\end{align}
Notice from \eqref{eqn_corollary_lemma_expected_gap} that 
\begin{align*}
&\mathbb{E}[\mathrm{Gap}(\theta)]
\leq 
C_{\mathrm{gap}} \\
\iff
&1 + \mathbb{E}[\mathrm{Gap}(\theta)]
\leq 
1 + C_{\mathrm{gap}} \\
\iff
&1-\mathbb{E}[\mathrm{Gap}(\theta)]
\geq 
1-C_{\mathrm{gap}} \\
\iff
&\frac{1}{1-\mathbb{E}[\mathrm{Gap}(\theta)]}
\leq
\frac{1}{1-C_{\mathrm{gap}}}
\end{align*}
Similarly, 
\begin{align*}
&1-\mathbb{E}[\mathrm{Gap}(\theta)]
\geq 
1-C_{\mathrm{gap}} \\
\iff
&(1-\mathbb{E}[\mathrm{Gap}(\theta)])^{2}
\geq 
(1-C_{\mathrm{gap}})^{2} \\
\iff
&-(1-\mathbb{E}[\mathrm{Gap}(\theta)])^{2}
\leq 
-(1-C_{\mathrm{gap}})^{2} \\
\iff
&e^{\left( - \frac{(1 - \mathbb{E}[\mathrm{Gap}(\theta)])^2}{8 \sigma^{2}_{\theta} \mathbf{L}_{\mathrm{gap}}^{2}} \right)}
\leq
e^{\left( - \frac{(1 - C_{\mathrm{gap}})^2}{8 \sigma^{2}_{\theta} \mathbf{L}_{\mathrm{gap}}^{2}} \right)}.
\end{align*}
Using all these facts from \eqref{eqn_corollary_lemma_expected_gap} in \eqref{eqn_lemma_expected_inverse_gap_interim_1}, we get \eqref{eqn_lemma_expected_inverse_gap}. 
\end{proof}

Having stated and proved the necessary lemmata, we now turn our attention to state and prove the following theorem on finding a bound for the expected $\mathcal{H}_{\infty}$ norm of the transfer function $\mathbf{T}(\tilde{\Sigma}(\theta), \bar{C})$.

\begin{theorem}\label{theorem_expected_Hinf_norm_Tzw}
Let $\bar{\Sigma}$ be a nominal LTI system model with a stabilising nominal controller $\bar{C}$, and let $\tilde{\Sigma}(\theta)$ be a random plant stabilised by $\bar{C}$ with $\theta \sim \mathcal{N}(\mu_{\theta}, \sigma^{2}_{\theta} I_{p})$ such that \eqref{eqn_gap_Lipschitz} holds. Let $\bar{b} := b_{\bar{\Sigma}, \bar{C}}, \tilde{T}(\theta) = \norm{\mathbf{T}(\tilde{\Sigma}(\theta), \bar{C})}_{\mathcal{H}_{\infty}}$. If 
\begin{align}
\label{eqn_condition_theorem_expected_Hinf_norm_Tzw}
\norm{\mathbf{T}(\bar{\Sigma}, \bar{C})}_{\mathcal{H}_{\infty}} \leq b_{\bar{\Sigma}, \bar{C}},
\end{align}
and $\mathrm{Gap}(\theta) < 1$,  then
\begin{align}
\label{eqn_theorem_expected_Hinf_norm_Tzw}
\mathbb{E} 
\left[ \tilde{T}(\theta) \right] 
< 
(\bar{b} + 1) \, 
C^{\mathrm{inv}}_{\mathrm{gap}}.
\end{align}
\end{theorem}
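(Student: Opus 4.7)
The plan is to combine the deterministic $\mathcal{H}_{\infty}$ degradation bound from \eqref{eqn_performance_bounds_T} with the expectation bound on $1/(1-\mathrm{Gap}(\theta))$ established in Lemma \ref{lemma_expected_inverse_gap}. First, I would instantiate \eqref{eqn_performance_bounds_T} with $P_{1} = \bar{\Sigma}$, $C_{1} = \bar{C}$, and $P = \tilde{\Sigma}(\theta)$ to obtain pointwise (in $\theta$) the bound
\begin{align*}
\tilde{T}(\theta)
\leq
\frac{\norm{\mathbf{T}(\bar{\Sigma}, \bar{C})}_{\mathcal{H}_{\infty}} + \mathrm{Gap}(\theta)}{1 - \mathrm{Gap}(\theta)},
\end{align*}
which is valid because the hypothesis $\mathrm{Gap}(\theta) < 1$ ensures that the denominator is strictly positive and \eqref{eqn_performance_bounds_T} is applicable along every realisation.

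Next, I would invoke the theorem's standing condition \eqref{eqn_condition_theorem_expected_Hinf_norm_Tzw}, $\norm{\mathbf{T}(\bar{\Sigma}, \bar{C})}_{\mathcal{H}_{\infty}} \leq \bar{b}$, to replace the numerator's first term by $\bar{b}$, yielding $\tilde{T}(\theta) \leq (\bar{b} + \mathrm{Gap}(\theta))/(1 - \mathrm{Gap}(\theta))$. The key algebraic step is the rewrite
\begin{align*}
\frac{\bar{b} + \mathrm{Gap}(\theta)}{1 - \mathrm{Gap}(\theta)}
=
\frac{\bar{b} + 1}{1 - \mathrm{Gap}(\theta)} - 1
<
\frac{\bar{b} + 1}{1 - \mathrm{Gap}(\theta)},
\end{align*}
which isolates the reciprocal $1/(1 - \mathrm{Gap}(\theta))$ as the only random factor and makes the strict inequality transparent (the $-1$ term is the slack we drop).

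Taking expectations of both sides and pulling out the constant $(\bar{b} + 1)$ then gives $\mathbb{E}[\tilde{T}(\theta)] < (\bar{b} + 1)\, \mathbb{E}[1/(1 - \mathrm{Gap}(\theta))]$. At this point I would apply Lemma \ref{lemma_expected_inverse_gap} directly, which requires that $\mathrm{Gap}(\theta)$ be sub-Gaussian in $(0,1)$ with parameter $\sigma_{\theta} \mathbf{L}_{\mathrm{gap}}$; this was established in Corollary \ref{corollary_prob_gap_exceed_threshold} via the Lipschitz property \eqref{eqn_gap_Lipschitz} together with the Gaussian concentration inequality \eqref{eqn_Gaussian_conc_inequality}. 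The assumption $\mathrm{Gap}(\theta) < 1$ together with $\mathrm{Gap}(\theta) \geq 0$ (it is a metric) places the random variable inside $(0,1)$ as required, and Lemma \ref{lemma_expected_inverse_gap} supplies the bound $\mathbb{E}[1/(1 - \mathrm{Gap}(\theta))] \leq C^{\mathrm{inv}}_{\mathrm{gap}}$, yielding the claim \eqref{eqn_theorem_expected_Hinf_norm_Tzw}.

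The main subtlety to be careful about is not the algebra, which is short, but making sure the ingredients assemble cleanly: specifically that the uniform hypothesis $\mathrm{Gap}(\theta) < 1$ is genuinely in force so that the deterministic bound \eqref{eqn_performance_bounds_T} applies almost surely and the reciprocal is integrable, and that the constant $C^{\mathrm{inv}}_{\mathrm{gap}}$ produced by Lemma \ref{lemma_expected_inverse_gap} is the one defined via the upper bound $C_{\mathrm{gap}}$ on $\mathbb{E}[\mathrm{Gap}(\theta)]$ from Corollary \ref{corollary_expected_gap_subGaussian}, which is what enables one to write the bound using only deterministic quantities (the mean $\mu_{\theta}$, the variance $\sigma_{\theta}^{2}$, the Lipschitz constant $\mathbf{L}_{\mathrm{gap}}$, and $\mathrm{Gap}(\mu_{\theta})$). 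Apart from this bookkeeping, no further obstacle is anticipated; the argument is essentially a one-line deterministic majorisation followed by a direct application of the already-proved expectation lemma.
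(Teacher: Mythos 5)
Your proposal is correct and follows essentially the same route as the paper: instantiate the deterministic degradation bound \eqref{eqn_performance_bounds_T}, majorise the numerator by $\bar{b}+1$ (your identity $\tfrac{\bar{b}+\mathrm{Gap}(\theta)}{1-\mathrm{Gap}(\theta)}=\tfrac{\bar{b}+1}{1-\mathrm{Gap}(\theta)}-1$ is just a cleaner way of seeing the paper's strict numerator bound $\bar{b}+\mathrm{Gap}(\theta)<\bar{b}+1$), take expectations, and close with Lemma \ref{lemma_expected_inverse_gap}. No gaps; your added bookkeeping about where $C^{\mathrm{inv}}_{\mathrm{gap}}$ and the sub-Gaussian hypothesis come from is consistent with the paper's use of Corollaries \ref{corollary_prob_gap_exceed_threshold} and \ref{corollary_expected_gap_subGaussian}.
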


\begin{proof}
Since $\mathrm{Gap}(\theta) < 1$ in this setting, we employ \eqref{eqn_performance_degradation} and \eqref{eqn_condition_theorem_expected_Hinf_norm_Tzw} on \eqref{eqn_performance_bounds_T} to get
\begin{align}
\label{eqn_theorem_expected_Hinf_norm_Tzw_interim_step_1}
\tilde{T}(\theta)
\leq
\frac{\norm{\mathbf{T}(\bar{\Sigma}, \bar{C})}_{\mathcal{H}_{\infty}} + \mathrm{Gap}(\theta)}{1 - \mathrm{Gap}(\theta)}
< \frac{\bar{b} + 1}{1 - \mathrm{Gap}(\theta)}.
\end{align}
Taking expectations on both sides and using linearity of expectation, we get
\begin{align}
\label{eqn_theorem_expected_Hinf_norm_Tzw_interim_step_2}
\mathbb{E}[\tilde{T}(\theta)] 
< 
(\bar{b} + 1) \, 
\mathbb{E}
\left[
\frac{1}{1 - \mathrm{Gap}(\theta)}
\right].
\end{align}
Using \eqref{eqn_lemma_expected_inverse_gap} in \eqref{eqn_theorem_expected_Hinf_norm_Tzw_interim_step_2}, we get
\eqref{eqn_theorem_expected_Hinf_norm_Tzw}.
\end{proof}

\subsection{Numerical Simulation} 
Consider the following nominal SISO system model $\bar{\Sigma}: (A, B, C, D) = (-1, 1, 1, 0)$ which corresponds to the following state space form:
\begin{align}
\label{eqn_simulation_nominal_model}
\bar{\Sigma}
:
\Bigl\{
\dot{x} = -x + u, \quad y = x.
\end{align}
We obtained the nominal controller $\bar{C}$ by placing the poles of $\bar{\Sigma}$ at $-2$. The value of performance measure was found to be $b_{\bar{\Sigma}, \bar{C}} = 0.7071$ and $\gamma$ values were varied between $[1.01, 3]$. We obtained $N = 10^{4}$ samples of $\theta \sim \mathcal{N} \left( \begin{bmatrix}
0.1 \\ -0.05 \end{bmatrix}, \sigma^{2}_{\theta} I_{2} \right)$, with $\sigma_{\theta} = 0.5$. Using the samples $\left\{ \theta^{(i)} \right\}^{N}_{i=1}$, we constructed $N$ different perturbed plant models $\left\{\tilde{\Sigma}(\theta^{(i)}) := (A+\theta^{(i)}_{1}, B, C+\theta^{(i)}_{2}, D) \right\}^{N}_{i=1}$. The gap between each of the perturbed $\tilde{\Sigma}(\theta^{(i)})$ and nominal model $\bar{\Sigma}$ was computed using the \emph{\textrm{gapmetric}} command of the Matlab. The result is shown in Figure \ref{fig:probHinfGuarantee}. \\

Doing a Monte-carlo style simulation by generating $N = 10^{4}$ different independent instances of the perturbed plant models $\tilde{\Sigma}(\theta)$, we estimated the Lipschitz constant of the perturbed model $\tilde{\Sigma}(\theta)$ with respect to $\theta$ as $\mathbf{L}_{\mathrm{gap}} = 0.5308$. Further, we estimated $\mathbb{E}\left[ \mathrm{Gap}(\theta) \right] = 0.3204$ and its upper bound $C_{\mathrm{gap}} = 0.4023$ using \eqref{eqn_corollary_lemma_expected_gap}. Further, the $\mathbb{E} \left[ \norm{T_{zw}(\tilde{\Sigma}(\theta), \bar{C})}_{\mathcal{H}_{\infty}} \right]$ was estimated to be $0.5557$ and its conservative upper bound from \eqref{eqn_theorem_expected_Hinf_norm_Tzw} in Theorem \ref{theorem_expected_Hinf_norm_Tzw} was $4.8592$. As noted previously, this is a conservative estimate given the simple upper bound used in Theorem \ref{theorem_expected_Hinf_norm_Tzw}.

\begin{figure}
    \centering
    \includegraphics[width=\linewidth]{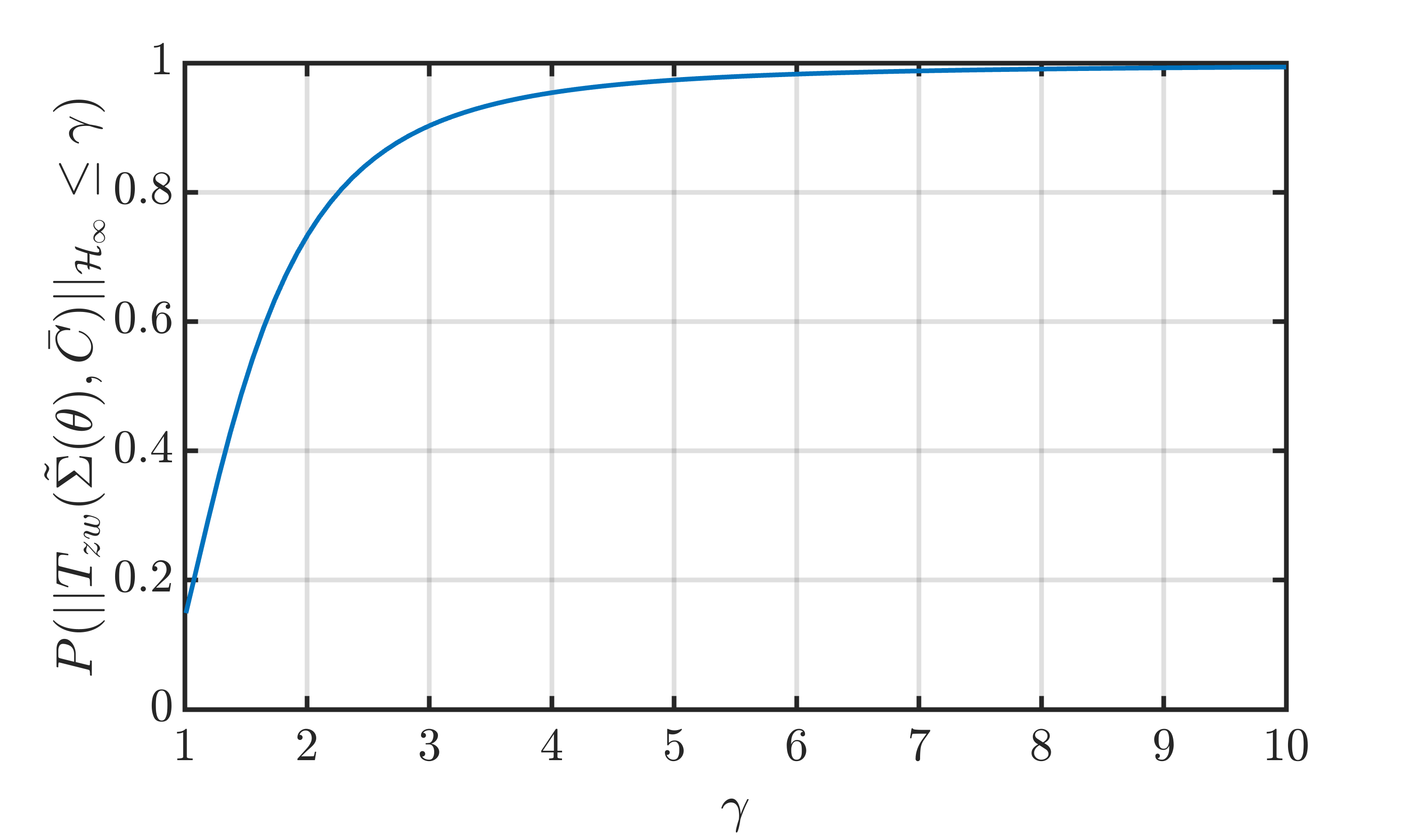}
    \caption{The lower bound on the $\mathbb{P} \left( \norm{\mathbf{T}(\tilde{\Sigma}(\theta), \bar{C})}_{\mathcal{H}_{\infty}} \leq \gamma \right)$ for different values of $\gamma$ using Theorem \ref{theorem_expected_Hinf_norm_Tzw} are plotted here. }
    \label{fig:probHinfGuarantee}
\end{figure}


\section{Connections to Existing Probabilistic Robust Control (PRC) Theory}
The analysis of stochastic robustness of LTI systems started in \cite{stengel1991technical} where authors studied a very similar problem as in \eqref{eqn_perturbed_dynamics_system} and gave estimates of stability probability density functions for systems affected by uncertain parameters. On the other hand, PRC approaches gained traction further later on using probabilistic methods to give guarantees on system stability and performance (see \cite{khargonekar1996randomized, tempo2013randomized, calafiore2007probabilistic} and the references therein for further details on this topic.). Having said that one might be interested on seeing what is new with the approach proposed in this paper regarding PRC and what new perspectives does this bring to the already existing table of approaches for PRC theory. Existing PRC theories utilise the stability margins to gauge the stability of the controller-plant pair, while the approach considered in this paper ties everything like stability and performance nicely with the $\mathrm{Gap}(\theta)$ and gives probabilistic guarantees. 

\subsection{Connecting Random Gap \& Scenario-Based Robustness}
In scenario-based approaches, one works with just the finite number of samples of uncertainties possibly from an unknown generating distribution to do both the reliability estimation and performance estimation in the context of PRC setting considered in \cite{calafiore2007probabilistic}. To connect the scenario-based approach with our random gap based problem formulation, we will deviate from the assumption that the distribution of the uncertain parameter $\theta$ is known in this section meaning that $\mathbf{f}_{\theta}$ is not necessarily equal to $\mathcal{N}(\mu_{\theta}, \sigma^{2}_{\theta} I_{p})$. We have the following assumption in place in regards to that. 
\begin{assumption}
\label{assume_samples}
The distribution $\mathbf{f}_{\theta}$ is unknown but we have access to a finite set of $N \in \mathbb{N}$ independent and identically distributed samples $\{\theta^{(i)}\}_{i=1}^{N}$ drawn from $\mathbf{f}_{\theta}$.    
\end{assumption}

Assumption \ref{assume_samples} just says that we have $N$ samples of $\theta$ available for decision making. We know that if $N \rightarrow \infty$, it means that we essentially know the distribution $\mathbf{f}_{\theta}$ exactly and thereby getting rid of the uncertainty associated with distribution of $\theta$ in assumption \ref{assume_samples}. On the other hand, if we were to give probabilistic guarantees on the nominal controller stabilising a random plant based on just the available $N$ samples of $\theta$, then the resulting probability will be determined by $N$. Usually, a failure or a violation probability is given apriori and we need to find a connection between that violation probability and the number of samples $N$ to give probabilistic guarantees. The following theorem nicely establishes a connection by leveraging the power of scenario-based approaches to give finite sample-based guarantees on the gap-metric based robust stability.

\begin{theorem}
\label{theorem_scenario_gap_robustness}
Let $\bar{C}$ be a nominal controller stabilising a nominal plant $\bar{\Sigma}$ and results in $b_{\bar{\Sigma}, \bar{C}} > 0$.
Suppose that the perturbed plant $\tilde{\Sigma}(\theta) \in \mathcal{H}_{\infty}$ be affected by a random parameter $\theta \in \mathbb{R}^{p}$ characterised by assumption \ref{assume_samples}. Given a confidence level $\beta\in(0,1)$ and a violation probability $\epsilon\in(0,1)$, if the number of samples of $\theta$ satisfies
\begin{align}
\label{eqn_samplesize_condition}
N \geq
\frac{\log\left(\frac{1}{\beta}\right)}{\log\left(\frac{1}{1-\epsilon}\right)}, \quad \text{then},
\end{align}
\begin{align}
\label{eqn_probust_scenario_based_gap}
\mathbb{P}\left(
\mathbb{P}\left(
\mathrm{Gap}(\theta)
\leq 
\max_{i = 1,\dots,N}\mathrm{Gap}(\theta^{(i)}) 
\right)
\geq 1-\epsilon 
\right)
\geq 1-\beta.
\end{align}
Additionally, if $\max_{i = 1,\dots,N}\mathrm{Gap}(\theta^{(i)}) < b_{\bar{\Sigma}, \bar{C}}$, then
\begin{align}
\label{eqn_prob_nominal_c_stab_perturbed_plant}
\mathbb{P}( \bar{C} \text{ stabilizes } \tilde{\Sigma}(\theta))
\geq 
1 - \epsilon.
\end{align}
\end{theorem}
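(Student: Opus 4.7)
The plan is to treat this as a standard scenario-approach argument in which the maximum observed gap serves as an empirical $(1-\epsilon)$-quantile of the random variable $\mathrm{Gap}(\theta)$. Let $F$ denote the (unknown) cumulative distribution function of $\mathrm{Gap}(\theta)$ under $\theta \sim \mathbf{f}_{\theta}$, and set $M_N := \max_{i=1,\dots,N}\mathrm{Gap}(\theta^{(i)})$. Conditional on the realized scenario samples, the inner probability in \eqref{eqn_probust_scenario_based_gap} equals $F(M_N)$, so the first claim reduces entirely to proving
\begin{align*}
\mathbb{P}\bigl(F(M_N) < 1-\epsilon\bigr) \leq \beta,
\end{align*}
where the outer probability is taken over the i.i.d.\ draw of $\{\theta^{(i)}\}_{i=1}^{N}$.

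The key step is a probability integral transform. By monotonicity of $F$, I would first write $F(M_N) = \max_{i} F(\mathrm{Gap}(\theta^{(i)}))$. Each $U_i := F(\mathrm{Gap}(\theta^{(i)}))$ satisfies $\mathbb{P}(U_i < u) \leq u$ for every $u \in [0,1]$ (with equality when $F$ is continuous, reducing to $\mathrm{Uniform}[0,1]$). Using independence of the scenario draws,
\begin{align*}
\mathbb{P}\bigl(F(M_N) < 1-\epsilon\bigr) = \prod_{i=1}^{N}\mathbb{P}(U_i < 1-\epsilon) \leq (1-\epsilon)^{N}.
\end{align*}
Imposing $(1-\epsilon)^{N} \leq \beta$ and taking logarithms of both sides yields exactly the sample-size lower bound \eqref{eqn_samplesize_condition}, which establishes \eqref{eqn_probust_scenario_based_gap}.

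For the stabilization half of the theorem I would invoke Proposition \ref{proposition_vinnicombe}. On the empirical event $\{M_N < b_{\bar{\Sigma},\bar{C}}\}$, any fresh $\theta$ with $\mathrm{Gap}(\theta) \leq M_N$ automatically satisfies $\mathrm{Gap}(\theta) < b_{\bar{\Sigma},\bar{C}}$, so Proposition \ref{proposition_vinnicombe} guarantees that $\bar{C}$ stabilizes $\tilde{\Sigma}(\theta)$. Chaining this with the quantile statement just proved gives
\begin{align*}
\mathbb{P}\bigl(\bar{C}\text{ stabilizes }\tilde{\Sigma}(\theta)\bigr) \geq \mathbb{P}\bigl(\mathrm{Gap}(\theta) \leq M_N\bigr) \geq 1-\epsilon,
\end{align*}
which is \eqref{eqn_prob_nominal_c_stab_perturbed_plant}.

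The main obstacle will be handling the case in which $F$ is not known to be continuous: if $\mathrm{Gap}(\theta)$ has atoms, the identity $F(X)\sim\mathrm{Uniform}[0,1]$ fails and one must rely on the one-sided inequality $\mathbb{P}(F(X) < u) \leq u$ obtained via the generalized inverse; fortunately, this is all the argument needs, and the $(1-\epsilon)^{N}$ bound survives. A secondary subtlety worth flagging in the write-up is that the two claims of the theorem live on conceptually different probability spaces: \eqref{eqn_probust_scenario_based_gap} is a guarantee over the random scenario draw, whereas \eqref{eqn_prob_nominal_c_stab_perturbed_plant} is conditional on the observer having certified the empirical event $M_N < b_{\bar{\Sigma},\bar{C}}$, so the $1-\epsilon$ stabilization guarantee should not be read as an unconditional robustness margin.
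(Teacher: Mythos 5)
Your proof is correct and follows the same overall structure as the paper's: interpret $\max_i \mathrm{Gap}(\theta^{(i)})$ as an empirical $(1-\epsilon)$-quantile of $\mathrm{Gap}(\theta)$, and then chain the resulting quantile statement with Proposition \ref{proposition_vinnicombe} to get the stabilization claim. The one place you diverge is the key probabilistic step: the paper simply invokes Theorem 1 of the scenario-approach reference \cite{calafiore2007probabilistic} to assert $\mathbb{P}\left(V(\hat{\alpha}_{N})\leq\epsilon\right)\geq 1-\beta$ under the sample-size condition \eqref{eqn_samplesize_condition}, whereas you derive the underlying bound $\mathbb{P}\bigl(F(M_N)<1-\epsilon\bigr)\leq(1-\epsilon)^{N}\leq\beta$ from first principles via the probability integral transform and independence of the draws. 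Your derivation is more self-contained, correctly uses only the one-sided inequality $\mathbb{P}(F(X)<u)\leq u$ so that possible atoms in the distribution of $\mathrm{Gap}(\theta)$ cause no trouble, and makes transparent exactly where the logarithmic sample-size formula comes from; the paper's citation route is shorter but hides this. Your closing remark that the two claims live on different probability spaces (the outer guarantee is over the scenario draw, while the stabilization bound is conditional on having observed $M_N<b_{\bar{\Sigma},\bar{C}}$) is a subtlety the paper's proof glosses over, and is worth keeping.
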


\begin{proof}
For a violation threshold $\alpha > 0$, let us denote the probability of violation as
\begin{align}
\label{eqn_gap_violation_event}
V(\alpha)
=
\mathbb{P}(\mathrm{Gap}(\theta) > \alpha).
\end{align}
Further, let
\begin{align}
\label{eqn_max_finite_gaps}
\hat{\alpha}_{N}
:= 
\max_{i = 1,\dots,N}\mathrm{Gap}(\theta^{(i)}).
\end{align}
Then, from Theorem 1 in \cite{calafiore2007probabilistic}, we know that for a given violation probability $\epsilon \in (0,1)$, if $N$ satisfies \eqref{eqn_samplesize_condition}, then
\begin{align}
\label{eqn_violation_prob_confidence}
\mathbb{P}\left(
V(\hat{\alpha}_{N})
\leq
\epsilon
\right) 
\geq 
1 - \beta.
\end{align}
Additionally, if $\hat{\alpha}_{N}$ satisfies $\hat{\alpha}_{N} < b_{\bar{\Sigma}, \bar{C}}$, then from \eqref{eqn_b_pc_definition}, we see that every plant within a gap radius $\hat{\alpha}_{N}$ from $\bar{\Sigma}$ is stabilized by $\bar{C}$. When $\hat{\alpha}_{N} < b_{\bar{\Sigma}, \bar{C}}$, from \eqref{eqn_gap_violation_event} and \eqref{eqn_violation_prob_confidence}, we know that $\mathrm{Gap}(\theta) \leq \hat{\alpha}_{N} < b_{\bar{\Sigma}, \bar{C}}$, would imply that $\bar{C}$ will stabilize $\tilde{\Sigma}(\theta))$. We know that if event $B$ implies event $A$, then $\mathbb{P}(B) \leq \mathbb{P}(A)$. Hence,
\begin{align*}
\mathbb{P}( \bar{C} \text{ stabilizes } \tilde{\Sigma}(\theta))
&\geq
\mathbb{P}(
\mathrm{Gap}(\theta)
\leq 
\hat{\alpha}_{N}
) \\ 
&=
1-V(\hat{\alpha}_{N}) \\
&\geq 
1-\epsilon.
\end{align*}
This completes the proof.
\end{proof}

\subsection{Interpretation of Confidence and Probabilistic Robustness}
The probabilistic robustness guarantee obtained through scenario-based method in \eqref{eqn_probust_scenario_based_gap} of Theorem \ref{theorem_scenario_gap_robustness} involves two distinct sources of randomness namely,
\begin{enumerate}
    \item \emph{Randomness from $\mathbf{f}_{\theta}$ (Probabilistic Robustness):} The term $\mathbb{P}\left(\mathrm{Gap}(\theta)>\hat{\alpha}_{N}\right)\leq\epsilon$ describes the probability (under the unknown $\mathbf{f}_{\theta}$ of $\theta$) that a randomly chosen uncertain plant model $\tilde{\Sigma}(\theta)$ results in the gap exceeding threshold $\hat{\alpha}_{N}$.

    \item \emph{Randomness from Scenario Sampling (Confidence):} 
    Since the scenarios $\{\theta^{(i)}\}_{i=1}^{N}$ are drawn randomly from the unknown $\mathbf{f}_{\theta}$, the scenario-based gap threshold $\hat{\alpha}_{N}$ given by \eqref{eqn_max_finite_gaps} itself is random. Hence, the event $\mathbb{P}\left(\mathrm{Gap}(\theta) > \hat{\alpha}_{N} \right) \leq \epsilon$ is also random. The confidence level $1-\beta$ quantifies the probability (over repeated scenario samplings) that the scenario-based threshold correctly achieves the probabilistic robustness guarantee. 
\end{enumerate}
Thus, the probabilistic robustness guarantee is itself a random quantity due to scenario sampling, and the confidence level quantifies our trust to obtain a good scenario-based threshold.

\subsection{Numerical Simulation} 
In order to demonstrate the connection of the random gap with the scenario-based robustness established in Theorem \ref{theorem_scenario_gap_robustness}, we considered a nominal system $\bar{\Sigma}: \frac{1}{s+1}$ and a perturbed system $\tilde{\Sigma}(\theta) = \frac{1}{s+(1+\theta)}$, where $10^{4}$ samples of $\theta$ were sampled from $\mathcal{N}(0, {0.25}^{2})$. We chose a violation probability of $\epsilon = 0.05$ and a confidence level parameter of $\beta = 0.01$. For the nominal controller $\bar{C} = \frac{(s+2)}{2(s+3)}$, the performance measure was $b_{\bar{\Sigma}, \bar{C}} = 0.8944$. Given a violation probability of $\epsilon = 0.05$ and a confidence level of $\beta = 0.01$, the sample size condition from \eqref{eqn_samplesize_condition} resulted in $N \geq 90$. By running a Monte-carlo simulation using $10^{4}$ instances of $\theta$ generated as mentioned above, the results are plotted in Figure \ref{fig:scenario_gap}. 

\begin{figure}
    \centering
    \includegraphics[width=\linewidth]{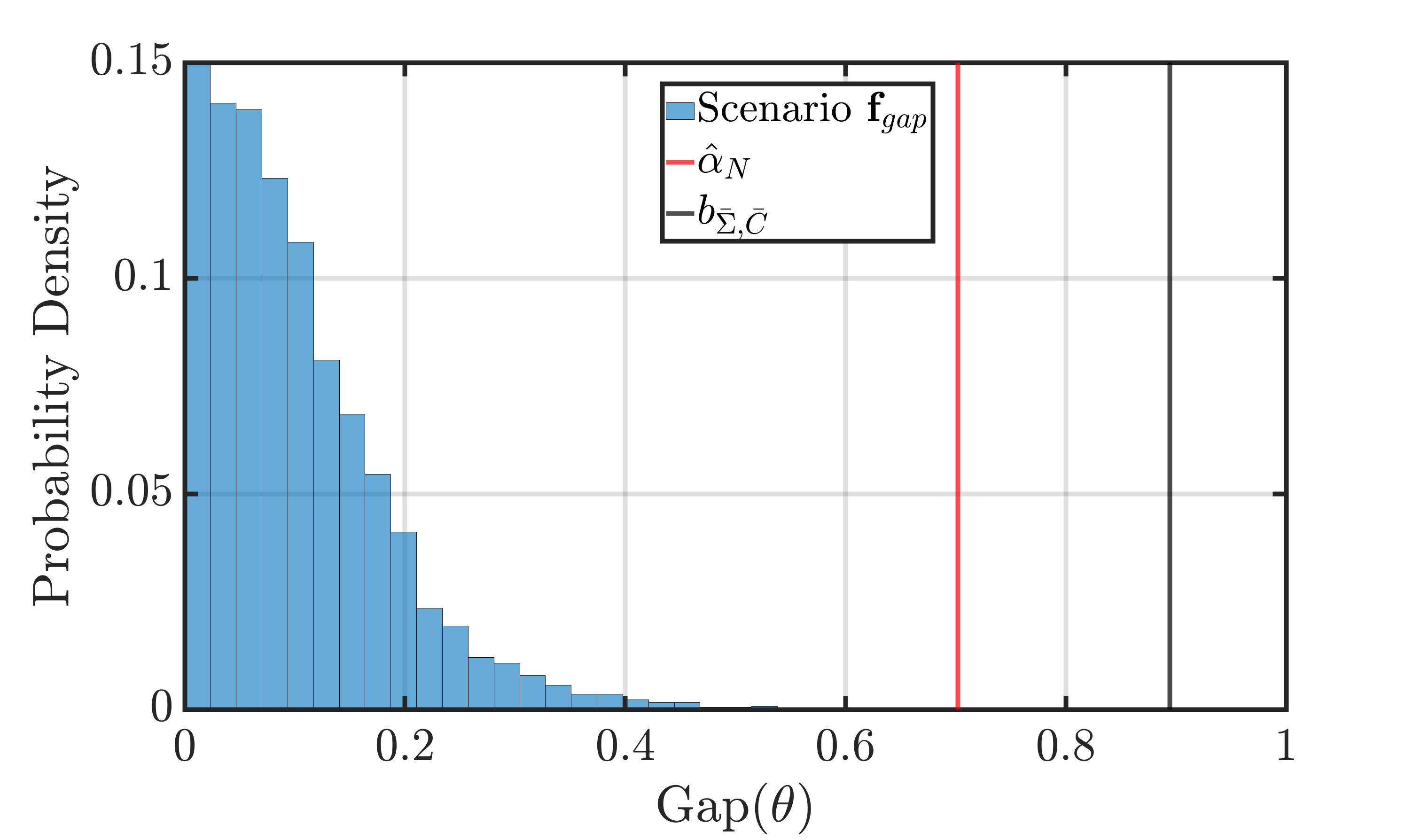}
    \caption{The probability density of $\mathrm{Gap}(\theta)$ is plotted for $10^{4}$ independent samples of $\theta$. Since \eqref{eqn_samplesize_condition} holds and $\hat{\alpha}_{N} < b_{\bar{\Sigma}, \bar{C}}$, the nominal controller $\bar{C}$ would end up stabilising the random plant with probability of at least $1-\epsilon$ as per Theorem \ref{theorem_scenario_gap_robustness}.}
    \label{fig:scenario_gap}
\end{figure}

We observed that $\hat{\alpha}_{N} = 0.7019$ which was less than $b_{\bar{\Sigma}, \bar{C}} = 0.8944$. This ensured that from Theorem \ref{theorem_scenario_gap_robustness} that the condition $\mathrm{Gap}(\theta) \leq \hat{\alpha}_{N}$ holding with probability of at least $1 - \epsilon = 0.95$ implied that the nominal controller $\bar{C} $ would stabilise the random plant $\tilde{\Sigma}(\theta)$ with probability of at least the same value of $1 - \epsilon = 0.95$. As mentioned earlier, the confidence level $1-\beta$ quantifies the probability (over repeated scenario samplings) that the scenario-based threshold correctly achieves the probabilistic robustness guarantee. 
That is, if we repeatedly draw new sets of $N \geq 90$ samples, then in at least $100(1-\beta)\%$ of these repetitions, the computed scenario threshold $\hat{\alpha}_{N}$ will ensure a true violation probability no greater than $\epsilon$. On the other hand, when the covariance strength of $\theta$ was increased to ${0.5}^{2}$ from ${0.25}^{2}$, we were able to see some unstable perturbed plants being generated meaning that $\mathrm{Gap}(\theta)$ and hence $\hat{\alpha}_{N}$ became equal to $1$ for those samples of $\theta$. Note that even under such unstable case, \eqref{eqn_probust_scenario_based_gap} would hold true but unfortunately \eqref{eqn_prob_nominal_c_stab_perturbed_plant} will not hold as $\underbrace{\max_{i = 1,\dots,N}\mathrm{Gap}(\theta^{(i)})}_{ = 1} >  \underbrace{b_{\bar{\Sigma}, \bar{C}}}_{ = 0.8944}$.


\section{Conclusion \& Future Outlook} \label{sec_conclusion}

When a random parameter affects a linear system, we studied how it manifested itself as the associated random gap between the nominal model of the system (without any uncertainty) and the perturbed model of the system (with uncertainty). The randomness in the associated gap resulted in probabilistic versions of the corresponding performance guarantees and stability margins guarantees measured in terms of the gap. This new perspective on PRC using the random gap provides us information about upper bounds on the expected gap quantity and the expected $\mathcal{H}_{\infty}$ achievable performance level apriori for any stabilising controller. A connection to the existing tools on PRC using scenario-based approach was also presented in this paper. The results obtained in this paper nicely blends the high dimensional statistics tool with the random gap problem formulation and gives probabilistic guarantees on both gap metric based robust performance and robust stability. \\

The aim of this paper is to revive the research on PRC theory. Future research prospects look very promising and there are many interesting open research questions along the lines of the research presented in this paper. We list here few of them which we believe can be immediately pursued given the existing developments done in this manuscript. 
\begin{enumerate}
    \item Obtain probabilistic guarantees on gap metric by investigating the randomness in the projection operator. 
    \item Investigate and give bounds on the expected distance between $\delta_{g}(\bar{C}, C(\theta))$, where $C(\theta)$ would be the controller which will result in same performance measure for the perturbed system $\tilde{\Sigma}(\theta)$ as $\bar{C}$ did for the nominal system $\bar{\Sigma}$ meaning that $b_{\tilde{\Sigma}(\theta), C(\theta)} = b_{\bar{\Sigma}, \bar{C}}$. This would inform us how much the respective controllers that guarantee same performance level for the nominal and perturbed plants are further apart in the expected sense. 
    \item An important future research direction would be to formulate and compute the distance between two stochastic dynamical systems $\delta_{\mathrm{gap}}(\tilde{\Sigma}_{1}(\theta), \tilde{\Sigma}_{2}(\theta))$ where the uncertainty in each system is described using the random gap metric between its respective nominal model and the perturbed model.
    \item Another important research direction will be to extend the problem setting to both linear time varying systems and to nonlinear systems by formulating the quantity of interest namely the gap between the nominal and the corresponding perturbed system models as random. 
    \item Another interesting direction is to first develop gap metric based robust tube model predictive control (MPC). The uncertainty around the system trajectories from the true but unknown perturbed model different from the nominal model is characterised along the prediction horizon using the assumed gap between the nominal $(\bar{P})$ and the perturbed system $(P)$ (by formulating linear matrix inequality (LMI) \cite{scherer2000linear} constraints for the condition $\delta_{g}(P, \bar{P}) \leq \alpha$ for a given $\alpha \in (0, 1)$). Using the random gap based problem formulation considered in this manuscript, the deterministic gap metric based robust tube MPC can be even extended further to gap metric based stochastic tube MPC setting.
\end{enumerate}


\section*{Acknowledgment}
The author would like to thank 
Dr. Anders Rantzer from Lund University, Sweden for pointing out to the valuable resource materials while the author started his research on probabilistic robust control. The author is grateful to Dr. Michael Cantoni from the University of Melbourne, Australia for pointing out the mistakes in the problem formulation and for thoughtful guidance. The author is also grateful to Dr. Tryphon Georgiou from the University of California, Irvine, USA for explaining him with the nuances of the gap metric while attending the ECC 2024 held at Stockholm, Sweden.


\bibliographystyle{IEEEtran}
\bibliography{references}

\begin{thebibliography}{10}
\providecommand{\url}[1]{#1}
\csname url@samestyle\endcsname
\providecommand{\newblock}{\relax}
\providecommand{\bibinfo}[2]{#2}
\providecommand{\BIBentrySTDinterwordspacing}{\spaceskip=0pt\relax}
\providecommand{\BIBentryALTinterwordstretchfactor}{4}
\providecommand{\BIBentryALTinterwordspacing}{\spaceskip=\fontdimen2\font plus
\BIBentryALTinterwordstretchfactor\fontdimen3\font minus \fontdimen4\font\relax}
\providecommand{\BIBforeignlanguage}[2]{{%
\expandafter\ifx\csname l@#1\endcsname\relax
\typeout{** WARNING: IEEEtran.bst: No hyphenation pattern has been}%
\typeout{** loaded for the language `#1'. Using the pattern for}%
\typeout{** the default language instead.}%
\else
\language=\csname l@#1\endcsname
\fi
#2}}
\providecommand{\BIBdecl}{\relax}
\BIBdecl

\bibitem{ZhouDoyleBook}
K.~Zhou and J.~C. Doyle, \emph{Essentials of robust control}.\hskip 1em plus 0.5em minus 0.4em\relax Prentice hall Upper Saddle River, NJ, 1998, vol. 104.

\bibitem{GreenLimebeerBook}
M.~Green and D.~J. Limebeer, \emph{Linear robust control}.\hskip 1em plus 0.5em minus 0.4em\relax Courier Corporation, 2012.

\bibitem{vinnicombe2001uncertainty}
G.~Vinnicombe, \emph{Uncertainty and Feedback: H [infinity] Loop-shaping and the [nu]-gap Metric}.\hskip 1em plus 0.5em minus 0.4em\relax World Scientific, 2001.

\bibitem{vidyasagar1984graph}
M.~Vidyasagar, ``The graph metric for unstable plants and robustness estimates for feedback stability,'' \emph{IEEE Transactions on Automatic Control}, vol.~29, no.~5, pp. 403--418, 1984.

\bibitem{el1985gap}
A.~El-Sakkary, ``The gap metric: Robustness of stabilization of feedback systems,'' \emph{IEEE Transactions on Automatic Control}, vol.~30, no.~3, pp. 240--247, 1985.

\bibitem{georgiou1988computation}
T.~T. Georgiou, ``On the computation of the gap metric,'' \emph{Systems \& Control Letters}, vol.~11, no.~4, pp. 253--257, 1988.

\bibitem{tryhon_smith_tac_1990}
T.~Georgiou and M.~Smith, ``Optimal robustness in the gap metric,'' \emph{IEEE Transactions on Automatic Control}, vol.~35, no.~6, pp. 673--686, 1990.

\bibitem{glover1984all}
K.~Glover, ``All optimal hankel-norm approximations of linear multivariable systems and their ${L}_{\infty}$-error bounds,'' \emph{International journal of control}, vol.~39, no.~6, pp. 1115--1193, 1984.

\bibitem{zamesSmallGain}
G.~Zames, ``On the input-output stability of time-varying nonlinear feedback systems part one: Conditions derived using concepts of loop gain, conicity, and positivity,'' \emph{IEEE transactions on automatic control}, vol.~11, no.~2, pp. 228--238, 1966.

\bibitem{vinnicombe1993frequency}
G.~Vinnicombe, ``Frequency domain uncertainty and the graph topology,'' \emph{IEEE Transactions on Automatic Control}, vol.~38, no.~9, pp. 1371--1383, 1993.

\bibitem{lanzon2009distance}
A.~Lanzon and G.~Papageorgiou, ``Distance measures for uncertain linear systems: A general theory,'' \emph{IEEE Transactions on Automatic Control}, vol.~54, no.~7, pp. 1532--1547, 2009.

\bibitem{cantoni2000gap}
M.~Cantoni and K.~Glover, ``Gap-metric robustness analysis of linear periodically time-varying feedback systems,'' \emph{SIAM Journal on Control and Optimization}, vol.~38, no.~3, pp. 803--822, 2000.

\bibitem{djouadi2015robustness}
S.~M. Djouadi, ``On robustness in the gap metric and coprime factor uncertainty for ltv systems,'' \emph{Systems \& Control Letters}, vol.~80, pp. 16--22, 2015.

\bibitem{stengel1991technical}
R.~F. Stengel and L.~R. Ray, ``Technical notes and correspondence: Stochastic robustness of linear time-invariant control systems,'' \emph{NASA. Langley Research Center, Joint University Program for Air Transportation Research, 1990-1991}, 1991.

\bibitem{khargonekar1996randomized}
P.~Khargonekar and A.~Tikku, ``Randomized algorithms for robust control analysis and synthesis have polynomial complexity,'' in \emph{Proceedings of 35th IEEE conference on decision and control}, vol.~3.\hskip 1em plus 0.5em minus 0.4em\relax IEEE, 1996, pp. 3470--3475.

\bibitem{calafiore2007probabilistic}
G.~C. Calafiore and F.~Dabbene, ``Probabilistic robust control,'' in \emph{2007 American Control Conference}.\hskip 1em plus 0.5em minus 0.4em\relax IEEE, 2007, pp. 147--158.

\bibitem{tempo2013randomized}
R.~Tempo, G.~Calafiore, F.~Dabbene \emph{et~al.}, \emph{Randomized algorithms for analysis and control of uncertain systems: with applications}.\hskip 1em plus 0.5em minus 0.4em\relax Springer, 2013, vol.~7.

\bibitem{calafiore2006scenario}
G.~C. Calafiore and M.~C. Campi, ``The scenario approach to robust control design,'' \emph{IEEE Transactions on automatic control}, vol.~51, no.~5, pp. 742--753, 2006.

\bibitem{safanov_gap}
G.~Hsieh and M.~Safonov, ``Conservatism of the gap metric,'' \emph{IEEE Transactions on Automatic Control}, vol.~38, no.~4, pp. 594--598, 1993.

\bibitem{venkat_cdc_2024}
A.~Nyström, V.~Renganathan, and M.~Cantoni, ``Stereographic projection of probabilistic frequency-domain uncertainty,'' in \emph{IEEE Conference on Decision and Control (CDC)}, 2024, pp. 4411--4417.

\bibitem{kato2013perturbation}
T.~Kato, \emph{Perturbation theory for linear operators}.\hskip 1em plus 0.5em minus 0.4em\relax Springer Science \& Business Media, 2013, vol. 132.

\bibitem{green2012linear}
M.~Green and D.~J. Limebeer, \emph{Linear robust control}.\hskip 1em plus 0.5em minus 0.4em\relax Courier Corporation, 2012.

\bibitem{cantoni2002linear}
M.~Cantoni and G.~Vinnicombe, ``Linear feedback systems and the graph topology,'' \emph{IEEE Transactions on Automatic Control}, vol.~47, no.~5, pp. 710--719, 2002.

\bibitem{vershynin2018high}
R.~Vershynin, \emph{High-dimensional probability: An introduction with applications in data science}.\hskip 1em plus 0.5em minus 0.4em\relax Cambridge university press, 2018, vol.~47.

\bibitem{scherer2000linear}
C.~Scherer and S.~Weiland, ``Linear matrix inequalities in control,'' \emph{Lecture Notes, Dutch Institute for Systems and Control, Delft, The Netherlands}, vol.~3, no.~2, 2000.

\end{thebibliography}

\vskip 0pt plus -1fil


\begin{IEEEbiography}[{\includegraphics[width=1in,height=1.25in,clip,keepaspectratio]{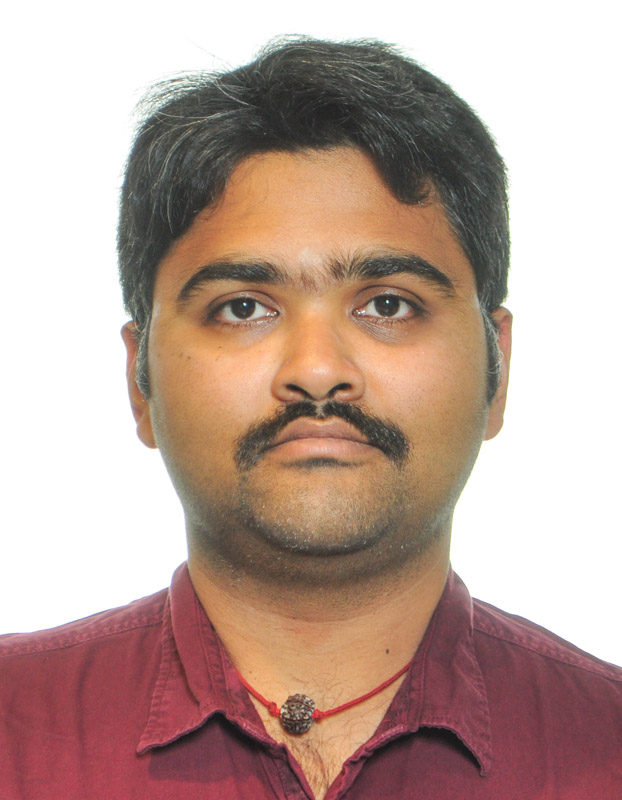}}]{Venkatraman Renganathan} (Member, IEEE) received his Bachelors degree in electrical and electronics engineering from the Government College of Technology, Anna University, Coimbatore, India in 2011. Further, he obtained his Masters degree in electrical engineering with focus on control systems from Arizona State University, USA, in 2016 and the doctoral degree in mechanical engineering with emphasis on dynamics and control from The University of Texas at Dallas, USA in 2021. He was a postdoctoral fellow at the department of automatic control at Lund University in Sweden from August 2021 till August 2024. Currently, he is a lecturer at the faculty of engineering \& applied sciences in Cranfield University, United Kingdom. His research interests include probabilistic robust control, distributed adaptive control for uncertain networks, and risk bounded motion planning.
\end{IEEEbiography}

\end{document}